\tikzstyle{tikzfig}=[baseline=-0.25em,scale=0.5]
\tikzstyle{none}=[inner sep=0mm]
\newcommand{\tikzfig}[1]{{\tikzstyle{every picture}=[tikzfig]
\IfFileExists{#1.tikz}
  {\input{#1.tikz}}
  {\IfFileExists{./figures/#1.tikz}
      {\input{./figures/#1.tikz}}
      {\tikz[baseline=-0.5em]{\node[draw=red,font=\color{red},fill=red!10!white] {\textit{#1}};}}}}}
\tikzstyle{every loop}=[]
\tikzstyle{every picture}=[tikzfig]
\tikzset{x=1pt,y=1pt}
\definecolor{myviolet}{RGB}{212,17,196}
\definecolor{mynavy}{RGB}{117,15,212}
\definecolor{myorange}{RGB}{196,95,44}
\definecolor{mygreen}{RGB}{10,164,17}
\newcommand{\tikzInterval}{{\begin{tikzpicture}[xscale={50}, yscale=15]
	\begin{pgfonlayer}{nodelayer}
		\node [style=none] (0) at (0, 0) {};
		\node [style=none] (1) at (1, 0) {};
	\end{pgfonlayer}
	\begin{pgfonlayer}{edgelayer}
		\draw [style=interval] (0.center) to (1.center);
	\end{pgfonlayer}
\end{tikzpicture}
 }}
\newcommand{\tikzIntDashedBlue}{{\begin{tikzpicture}[xscale={50}, yscale=15]
	\begin{pgfonlayer}{nodelayer}
		\node [style=none] (0) at (0, 0) {};
		\node [style=none] (1) at (1, 0) {};
	\end{pgfonlayer}
	\begin{pgfonlayer}{edgelayer}
		\draw [style=intDashedBlue] (0.center) to (1.center);
	\end{pgfonlayer}
\end{tikzpicture}
 }}
\newcommand{\tikzIntThick}{{\begin{tikzpicture}[xscale={50}, yscale=15]
	\begin{pgfonlayer}{nodelayer}
		\node [style=none] (0) at (0, 0) {};
		\node [style=none] (1) at (1, 0) {};
	\end{pgfonlayer}
	\begin{pgfonlayer}{edgelayer}
		\draw [style=interval] (0.center) to (1.center);
		\draw [style=intThick] (0.center) to (1.center);
	\end{pgfonlayer}
\end{tikzpicture}
 }}
\newcommand{\tikzIntThickDashedBlue}{{\begin{tikzpicture}[xscale={50}, yscale=15]
	\begin{pgfonlayer}{nodelayer}
		\node [style=none] (0) at (0, 0) {};
		\node [style=none] (1) at (1, 0) {};
	\end{pgfonlayer}
	\begin{pgfonlayer}{edgelayer}
		\draw [style=intThickDashedBlue] (0.center) to (1.center);
		\draw [style=intDashedBlue] (0.center) to (1.center);
	\end{pgfonlayer}
\end{tikzpicture}
 }}
\newcommand{\tikzAreaGreen}{{\begin{tikzpicture}[xscale={50}, yscale=15]
	\begin{pgfonlayer}{nodelayer}
		\node [style=none] (0) at (0.25, -0.5) {};
		\node [style=none] (1) at (0, 0) {};
		\node [style=none] (2) at (0.25, 0.5) {};
		\node [style=none] (3) at (0.75, 0.5) {};
		\node [style=none] (4) at (1, 0) {};
		\node [style=none] (5) at (0.75, -0.5) {};
	\end{pgfonlayer}
	\begin{pgfonlayer}{edgelayer}
		\draw [style=areaGreen] (0.center) to (5.center);
		\draw [style=areaGreen, in=-90, out=0, looseness=1.25] (5.center) to (4.center);
		\draw [style=areaGreen, in=360, out=90, looseness=1.25] (4.center) to (3.center);
		\draw [style=areaGreen] (3.center) to (2.center);
		\draw [style=areaGreen, in=90, out=-180, looseness=1.25] (2.center) to (1.center);
		\draw [style=areaGreen, in=180, out=-90, looseness=1.25] (1.center) to (0.center);
	\end{pgfonlayer}
\end{tikzpicture}
 }}
\tikzstyle{intBrace}=[decorate, decoration={brace,amplitude=10pt,raise=1pt,mirror}, thick]
\tikzstyle{intBraceGreen}=[decorate, decoration={brace,amplitude=10pt,raise=1pt,mirror}, draw={rgb,255: red,10; green,164; blue,17}, thick]
\tikzstyle{textRotated}=[rotate=45]
\tikzstyle{interval}=[{|-|}, fill=none, draw=black, thick]
\tikzstyle{intRed}=[{|-|}, fill=none, draw={rgb,255: red,216; green,29; blue,19}, thick]
\tikzstyle{intGreen}=[{|-|}, fill=none, draw={rgb,255: red,10; green,164; blue,17}, thick]
\tikzstyle{intYellow}=[{|-|}, fill=none, draw={rgb,255: red,245; green,168; blue,12}, thick]
\tikzstyle{intBlue}=[{|-|}, fill=none, draw={rgb,255: red,22; green,39; blue,233}, thick]
\tikzstyle{intViolet}=[{|-|}, fill=none, draw={rgb,255: red,212; green,17; blue,196}, thick]
\tikzstyle{intThick}=[{-}, fill=none, draw=black, very thick]
\tikzstyle{intThickDashedBlue}=[{-}, dashed, fill=none, draw={rgb,255: red,22; green,39; blue,233}, very thick]
\tikzstyle{intDashed}=[{|-|}, dashed, fill=none, draw=black, thick]
\tikzstyle{intDashedViolet}=[{|-|}, dashed, fill=none, draw={rgb,255: red,212; green,17; blue,196}, thick]
\tikzstyle{intDashedGreen}=[{|-|}, dashed, fill=none, draw={rgb,255: red,10; green,164; blue,17}, thick]
\tikzstyle{intDashedBlue}=[{|-|}, dashed, fill=none, draw={rgb,255: red,22; green,39; blue,233}, thick]
\tikzstyle{areaViolet}=[-, fill=none, draw={rgb,255: red,212; green,17; blue,196}, thick]
\tikzstyle{areaNavy}=[-, fill=none, draw={rgb,255: red,117; green,15; blue,212}, thick]
\tikzstyle{areaOrange}=[-, fill=none, draw={rgb,255: red,255; green,124; blue,58}, thick]
\tikzstyle{areaGreen}=[-, fill=none, draw={rgb,255: red,10; green,164; blue,17}, thick]
\tikzstyle{arrDashedRed}=[->, dashed, fill=none, draw={rgb,255: red,216; green,29; blue,19}, thick]
\tikzstyle{arrDashedGreen}=[->, dashed, fill=none, draw={rgb,255: red,10; green,164; blue,17}, thick]
\tikzstyle{arrDashedYellow}=[->, dashed, fill=none, draw={rgb,255: red,245; green,168; blue,12}, thick]
\tikzstyle{arrDashedBlue}=[->, dashed, fill=none, draw={rgb,255: red,22; green,39; blue,233}, thick]
\tikzstyle{arrDashedViolet}=[->, dashed, fill=none, draw={rgb,255: red,212; green,17; blue,196}, thick]
\tikzstyle{fillGray}=[-, fill={rgb,255: red,224; green,224; blue,224}, draw=none]
\tikzstyle{fillBlue}=[-, fill={rgb,255: red,215; green,222; blue,255}, draw=none]
\newcommandx{\change}[2][1=]{\todo[linecolor=blue,backgroundcolor=blue!25,bordercolor=blue,#1]{#2}}
\newcommandx{\info}[2][1=]{\todo[linecolor=green,backgroundcolor=green!25,bordercolor=green,#1]{#2}}
\newcommand*{\MyDef}{\mathrm{def}}
\newcommand*{\eqdef}{\ensuremath{\mathop{\overset{\MyDef}{=}}}}
\renewcommand{\leq}{\leqslant}
\renewcommand{\geq}{\geqslant}
\newcommand{\bra}[1]{\ensuremath{({#1})}}
\newcommand{\Bra}[1]{\ensuremath{\left({#1}\right)}}
\newcommand{\size}[1]{\ensuremath{{|}{#1}{|}}}
\newcommand{\Abs}[1]{\ensuremath{\left|{#1}\right|}}
\newcommand{\set}[1]{\ensuremath{\{{#1}\}}}
\newcommand{\tuple}[1]{\ensuremath{\langle\hspace{0.5mm}{#1}\hspace{0.5mm}\rangle}}
\newcommand{\num}[1]{\ensuremath{\left[{#1}\right]}}
\newcommand{\ceil}[1]{\ensuremath{\left\lceil{#1}\right\rceil}}
\newcommand{\floor}[1]{\ensuremath{\left\lfloor{#1}\right\rfloor}}
\newcommand{\NaturalNumbers}{\ensuremath{\mathbb{N}}}
\newcommand{\IntegerNumbers}{\ensuremath{\mathbb{Z}}}
\newcommand{\RealNumbers}{\ensuremath{\mathbb{R}}}
\newcommand{\fO}[1]{\ensuremath{\mathcal{O}\bra{#1}}}
\newcommand{\fOmega}[1]{\ensuremath{\Omega\hspace{0.1mm}\bra{#1}}}
\newcommand{\Graph}{\ensuremath{G}}
\newcommand{\Vertices}{\ensuremath{V}}
\newcommand{\VerticesOf}[1]{\ensuremath{\Vertices\hspace{-0.4mm}\bra{#1}}}
\newcommand{\VerticesTwo}{\ensuremath{U}}
\newcommand{\Edges}{\ensuremath{E}}
\newcommand{\vertex}{\ensuremath{v}}
\newcommand{\vertexOne}{\ensuremath{v}}
\newcommand{\vertexTwo}{\ensuremath{u}}
\newcommand{\coloring}{\ensuremath{c}}
\newcommand{\coloringPrime}{\ensuremath{c\hspace{0.4mm}'}}
\newcommand{\coloringDouble}{\ensuremath{c\hspace{0.4mm}''}}
\newcommand{\coloringArc}{\ensuremath{\breve{c}}}
\newcommand{\fColoring}[1]{\ensuremath{c}\hspace{0.4mm}\bra{#1}}
\newcommand{\fColoringPrime}[1]{\ensuremath{c\hspace{0.4mm}'}\bra{#1}}
\newcommand{\fColoringDouble}[1]{\ensuremath{c\hspace{0.4mm}''}\bra{#1}}
\newcommand{\fColoringArc}[1]{\ensuremath{\breve{c}}\bra{#1}}
\newcommand{\fChromatic}[1]{\ensuremath{\chi\bra{#1}}}
\newcommand{\fClique}[1]{\ensuremath{\omega\bra{#1}}}
\newcommand{\PartB}{\ensuremath{\mathfrak{B}}}
\newcommand{\Block}{\ensuremath{\mathcal{B}}}
\newcommand{\sizesp}{\ensuremath{m}}
\newcommand{\Intervals}{\ensuremath{\mathcal{I}}}
\newcommand{\IntervalsTwo}{\ensuremath{\mathcal{J}}}
\newcommand{\IntervalsThree}{\ensuremath{\mathcal{K}}}
\newcommand{\IntervalsFour}{\ensuremath{\mathcal{L}}}
\newcommand{\IntervalsFive}{\ensuremath{\mathcal{M}}}
\newcommand{\spann}[1]{\ensuremath{\mathsf{span}(#1)}}
\newcommand{\intervals}[2]{\ensuremath{\mathsf{intervals}_{#1}(#2)}}
\newcommand{\distInt}[2]{\ensuremath{\mathsf{dist}(#1,#2)}}
\newcommand{\Pruir}{\textsc{Incremental Unit Interval Recoloring}\;}
\newcommand{\PruirFD}{\textsc{Fully Dynamic Unit Interval Recoloring}\;}
\newcommand{\Recolor}{\textsc{Recolor}\;}
\newcommand{\FRight}{\textsc{FindRight}\;}
\newcommand{\FLeft}{\textsc{FindLeft}\;}
\newcommand{\ColCompPro}{\textsc{Unit Color Completion}\;}
\newcommand{\ColComp}{\textsc{Greedy Color Completion}\;}
\newcommand{\ModColComp}{\textsc{Modulo Color Completion}\;}
\newcommand{\Frog}{\textsc{Frogs}\;}
\newcommand{\SetUnion}{\textsc{Set Union}\;}
\newcommand{\divi}{\textsf{div}}
\newcommand{\opx}[1]{\textsf{x}(#1)}
\newcommand{\opy}[1]{\textsf{y}(#1)}
\newcommand{\prefix}[2]{\textsf{prefix}^{\linearOrder}_{#1}(#2)}
\newcommand{\suffix}[2]{\textsf{suffix}^{\linearOrder}_{#1}(#2)}
\newcommand{\infix}[3]{\textsf{infix}^{\linearOrder}_{#1}(#2,#3)}
\newcommand{\point}{\ensuremath{x}}
\newcommand{\pointOne}{\ensuremath{x}}
\newcommand{\pointTwo}{\ensuremath{y}}
\newcommand{\pointThree}{\ensuremath{z}}
\newcommand{\intDef}[2]{\ensuremath{[{#1},{#2})}}
\newcommand{\interval}{\ensuremath{I}}
\newcommand{\intervalPrim}{\ensuremath{I\hspace{0.1mm}'}}
\newcommand{\intervalOne}{\ensuremath{I}}
\newcommand{\intervalTwo}{\ensuremath{J}}
\newcommand{\intervalThree}{\ensuremath{K}}
\newcommand{\intervalFour}{\ensuremath{L}}
\newcommand{\intervalFive}{\ensuremath{M}}
\newcommand{\Spanset}{\ensuremath{\mathcal{S}}}
\newcommand{\Spanel}{\ensuremath{S}}
\newcommand{\linearOrder}{\ensuremath{\sqsubset}}
\DeclareMathOperator{\load}{load}
\newcommand{\fLoad}[2]{\ensuremath{\load\bra{#1,#2}}}
\DeclareMathOperator{\maxLoad}{max-load}
\newcommand{\fMaxLoad}[1]{\ensuremath{\maxLoad\bra{#1}}}
\newcommand{\circumference}{\ensuremath{\lambda}}
\newcommand{\eqArc}{\ensuremath{\equiv_{\circumference}}}
\newcommand{\eqClass}[1]{\ensuremath{[{#1}]_{\circumference}}}
\newcommand{\CirclePoints}{\ensuremath{\RealNumbers_{/\circumference\IntegerNumbers}}}
\newcommand{\circlePointTwo}{\ensuremath{\beta}}
\newcommand{\circlePointThree}{\ensuremath{\gamma}}
\newcommand{\circlePointFour}{\ensuremath{\delta}}
\newcommand{\Arcs}{\ensuremath{\mathcal{A}}}
\newcommand{\ArcsTwo}{\ensuremath{\mathcal{B}}}
\newcommand{\arcDef}[2]{\ensuremath{[{#1},{#2})}}
\newcommand{\arc}{\ensuremath{A}}
\newcommand{\arcOne}{\ensuremath{A}}
\newcommand{\arcTwo}{\ensuremath{B}}
\newcommand{\arcRepr}{\ensuremath{A}}
\newcommand{\fArcRepr}[1]{\ensuremath{\arcRepr\bra{#1}}}
\newcommand{\HRred}[2]{\ensuremath{{\color{red!60!black}\HR{#1}{#2}}}}
\newcommand{\HRblue}[2]{\ensuremath{{\color{blue!60!black}\HR{#1}{#2}}}}
\newcommand{\HRgreen}[2]{\ensuremath{{\color{green!50!black}\HR{#1}{#2}}}}
\newcommand{\jn}{\ensuremath{\kappa}}
\newcommand{\ti}{\ensuremath{\tau}}
\newcommand{\cost}{\ensuremath{\varsigma}}
\newcommand{\rank}[1]{\ensuremath{r_{#1}}}
\newcommand{\rankTwo}[1]{\ensuremath{p_{#1}}}
\newcommand{\rankThree}[1]{\ensuremath{q_{#1}}}
\newcommand{\rrank}[2]{\ensuremath{r_{#1,#2}}}
\newcommand{\rrankTwo}[2]{\ensuremath{p_{#1,#2}}}
\newcommand{\rrankThree}[2]{\ensuremath{q_{#1,#2}}}
\newcommand{\Ranks}{\ensuremath{R}}
\newcommand{\RanksTwo}{\ensuremath{P}}
\newcommand{\RanksThree}{\ensuremath{Q}}
\newcommand{\RRanks}[1]{\ensuremath{R_{#1}}}
\newcommand{\Rini}{\ensuremath{\delta}}
\newcommand{\RRanksTwo}[1]{\ensuremath{P_{#1}}}
\newcommand{\RRanksThree}[1]{\ensuremath{Q_{#1}}}
\newcommand{\HR}[2]{\ensuremath{H_{\,#1}^{\,#2}}}
\newcommand{\HRP}[2]{\ensuremath{\overline{H}_{\,#1}^{\,\,#2}}}
\newtheorem{theorem}{Theorem}
\numberwithin{theorem}{section}
\newtheorem{observation}[theorem]{Observation}
\newtheorem{lemma}[theorem]{Lemma}
\newtheorem{conjecture}[theorem]{Conjecture}
\newtheorem{corollary}[theorem]{Corollary}
\newtheorem{claim}{Claim}
\theoremstyle{definition}
\numberwithin{definition}{section}
\title[Recoloring Unit Interval Graphs with Logarithmic Recourse Budget]{Recoloring Unit Interval Graphs with Logarithmic Recourse Budget$^{\ddagger}$}
\thanks{$^{\ddagger}$Supported by Narodowe Centrum Nauki -- Grant 2017/26/D/ST6/00264.}
\author{Bart{\l}omiej Bosek}
\address{Theoretical Computer Science Department,\\Faculty of Mathematics and Computer Science,\\Jagiellonian University in Krak\'{o}w}
\email{bosek@tcs.uj.edu.pl}
\author{Anna Zych-Pawlewicz}
\address{University of Warsaw}
\email{anka@mimuw.edu.pl}
\date{\today}
\begin{document}

\thispagestyle{empty}

\maketitle

\begin{abstract}
In this paper we study the problem of coloring a unit interval graph which changes dynamically. In our model the unit intervals are added or removed one at the time, and have to be colored immediately, so that no two overlapping intervals share the same color. After each update only a limited number of intervals is allowed to be recolored. The limit on the number of recolorings per update is called the recourse budget. In this paper we show, that if the graph remains $k$-colorable at all times, and the updates consist of insertions only, then we can achieve the amortized recourse budget of $\fO{k^7 \log n}$ while maintaining a proper coloring with $k$ colors. This is an exponential improvement over the result in~\cite{bosek_et_al:SWAT} in terms of both $k$ and $n$. We complement this result by showing the lower bound of $\Omega(n)$ on the amortized recourse budget in the fully dynamic setting. Our incremental algorithm can be efficiently implemented.  

As a byproduct of independent interest we include a new result on coloring proper circular arc graphs. Let $L$ be the maximum number of arcs intersecting in one point for some set of unit circular arcs $\Arcs$. We show that if there is a set $\Arcs'$ of non-intersecting unit arcs of size $L^2-1$ such that $\Arcs \cup \Arcs'$ does not contain $L+1$ arcs intersecting in one point, then it is possible to color $\Arcs$ with $L$ colors. This complements the work on unit circular arc coloring \cite{BCh09,Tuc75,Val03}, which specifies sufficient conditions needed to color $\Arcs$ with $L+1$ colors or more.
\end{abstract}
 
\section{Introduction}\label{sec:intro}

One of the most prominent problems within graph theory is coloring. 
The graph coloring problem has
plenty of variants, applications, and deep connections to theoretical computer 
science. For a positive integer $k$, a \emph{proper $k$-coloring} of a graph is 
an assignment of colors in $\{ 1, \ldots, k \}$ to the vertices of the graph in 
such a way that no two adjacent vertices share a color. The chromatic number of 
the graph is the smallest integer $k$ for which a proper $k$-coloring exists. In 
general, it is 
NP-hard~\cite{KhotP06,Zuckerman06} 
to approximate the chromatic number of an $n$-vertex graph within a factor of 
$n^{1-\epsilon}$ for any constant $\epsilon > 0$. The literature offers many 
results for restricted graph classes. 

In this paper, we study the class of unit interval graphs, for which a linear time 
greedy algorithm achieves the optimum coloring~\cite{Olariu91}. Our main interest 
lies within a dynamic setting, where vertices get inserted or removed one at a time, together with adjacent edges to present vertices, and one needs to 
maintain a close to optimal coloring after each vertex update. In this setting we study how many 
vertex recolorings are needed to maintain a good quality coloring.  The number of 
changes one needs to introduce to the maintained solution (in our case vertex 
recolorings) upon an update is referred to as 
\emph{recourse budget} in the literature. A recourse budget of zero and updates restricted to vertex insertions coincide 
with the online setting, where the algorithm's decisions are irrevocable and the graph (typically) only grows. 

Since 1967, when the influential book of Rodgers~\cite{Rodgers} gave rise to online algorithms,
the online model has been applied to a vast variety of optimization problems~\cite{BosekFKKMM12,KiersteadT81,Epstein,DBLP:conf/approx/Chybowska-Sokol20,10.5555/645569.659725,LOVASZ1989319,Halldorsson94lowerbounds,10.1007/978-3-319-13075-0_40,KVV,10.1145/100216.100268,DBLP:conf/soda/Rohatgi20,DBLP:conf/icalp/JiangP020}. Many pessimistic lower bounds for this model have been revealed. In particular, for general $k$-colorable graphs, $\frac{n}{\log^2 n} k$ colors are necessary to maintain the coloring online~\cite{Halldorsson94lowerbounds}. For the case of $k$-colorable interval graphs, $3k-2$ colors are necessary and sufficient in the online model~\cite{KiersteadT81}. The case of unit interval graphs has also been extensively studied, revealing that $2k-1$ colors suffice and $3k/2$ colors are necessary to color a $k$-colorable unit interval graph online~\cite{BosekFKKMM12,Epstein}.
It is natural to ask if the situation improves if one allows a limited recourse budget.  

This question has been recenly receiving increasing attention, and the positive answer has been obtained for a large number of problems~\cite{BernsteinHR18,BosekLSZ14,BosekLSZ18a,
BosekLSZ18b,ImaseW91,MegowSVW12,LackiOPSZ15,BarbaCKLRRV17,SolomonW18,bosek_et_al:SWAT}. In particular Barba et al. applies the recourse budget model to coloring general graphs \cite{BarbaCKLRRV17}.
The authors devise two complementary algorithms (in the regime of adding and removing vertices). For 
any $d>0$, the first (resp.~second) algorithm maintains a 
$k(d+1)$-coloring (resp.~$k(d+1)n^{1/d}$-coloring) of a $k$-colorable graph and recolors at most 
$(d+1)n^{1/d}$ 
(resp.~$d$) vertices per update. The authors also show that the first trade-off is 
tight. The situation improves even further if we restrict the class of graphs to interval graphs and updates to vertex insertions. 
In~\cite{bosek_et_al:SWAT} the authors show that with the recourse budget of $\fO{\log n}$ one can color a $k$-colorable graph with $2k$ colors. The recurse budget of $\fO{k! \sqrt{n}}$ allows maintaining an optimal $k$-coloring. For unit interval graphs recoloring budget of $\fO{k^2}$ is sufficient for maintaining a $(k+1)$-coloring of a $k$-colorable graph, even if both insertions and removals are allowed. It is left open what budget is needed for maintaining an optimal $k$-coloring for unit interval graphs, even if we only allow vertex insertions. The lower bound given in~\cite{bosek_et_al:SWAT} is $\Omega(\log n)$ (for insertions only), while the upper bound is $\fO{k! \sqrt{n}}$ (the same as for general interval graphs). Such a tremendous gap for such elementary graph class calls for further investigation. The main result of this paper is that we close this gap up to the factors polynomial in $k$. To be more precise, we show that an amortized recourse budget of $\fO{k^7 \log n}$ is sufficient to maintain $k$-coloring under vertex insertions. This is an exponential improvement over~\cite{bosek_et_al:SWAT} in terms of both $n$ and $k$. It is fairly easy to see that our algorithm can be efficiently implemented. Our upper bound holds under the assumption that $k$ is known apriori to the algorithm. Without this assumption our bound grows by a factor of $k$. Our result translates directly to online chain partitioning of semi-orders of width $k$. We complement this result by showing that in the fully dynamic setting one must spent an amortized recorse budget of $\Omega(n)$ per update.   

As a byproduct of our approach we offer a new combinatorial result for coloring proper circular arc graphs. The important parameters here are the load $L$ and the cover number $l$ of the input instance. The load $L$ stands for the maximum number of arcs intersecting in one point, whereas the cover number $l$ stands for the minimum number of arcs covering the whole circle. There has been an ongoing line of research to expose the conditions under which a circular arc graph admits a proper coloring using close to $L$ colors. Tucker~\cite{Tuc75} shows, that if $l \geq 4$, then $\floor{3L/2}$ colors suffice. Valencia-Pabon~\cite{Val03} shows that if $l \geq 5$, then $\ceil{\frac{l-1}{l-2}L}$ colors is enough. For $l \geq L+2$ the bound becomes $L+1$. Belkale and Chandran~\cite{BCh09} prove the Hadwiger's conjecture for proper circular arc graphs. Neither of these results exposes a condition sufficient to color the instance with precisely $L$ colors. We show, that if one can add $L^2-1$ nonintersecting arcs to the instance in a way that the load does not increase, then $L$ colors is sufficient to properly color the instance.

Our final results are based on three problems that we solve in the paper. In Section~\ref{sec:colcomp} we address the \ColCompPro problem, which implies the results for coloring proper circular arcs. In Section~\ref{sec:frog} we study the \Frog problem, and finally in Section~\ref{sec:mainres} we combine the results of Section~\ref{sec:colcomp} and Section~\ref{sec:frog} to solve the \Pruir problem, which is the main result of the paper. We defer the fomal statements of these problems to the end of Section~\ref{sec:prelims}, in which we introduce the necessary definitions to state these problems. Also in Section~\ref{sec:prelims} we give a lower bound for the fully dynamic setting.

\section{Preliminaries}\label{sec:prelims}
In this section we formally state the problems that we address in the further sections. To be able to do so, we first introduce the related notation and some folklore facts concerning interval and unit interval graphs.

We consider in this paper half-open intervals $\interval=\intDef{x}{y}$ for some $x,y \in \RealNumbers, x<y$. For an interval $\interval$ we define operators $\opx{\interval}\eqdef x$ and $\opy{\interval}\eqdef y$ for accesing the begin and the end coordinate. A multiset of intervals can be interpreted as a graph: the intervals are interpreted as vertices, which are adjacent if and only if the corresponding intervals intersect. Graphs obtained in this way are called interval graphs. The coloring related definitions stated in the first paragraph of Section~\ref{sec:intro} directly translate to multisets of intervals interpreted as graphs. For a multiset of intervals $\Intervals$, the function $c : \Intervals \mapsto \set{ 1, \ldots, k}$ is a proper $k$-coloring if for any $\interval,\intervalTwo \in \Intervals$ it holds that $c(\interval) \neq c(\intervalTwo)$ iff $\interval$ and $\intervalTwo$ intersect. The chromatic number $\fChromatic{\Intervals}$ is the minimum number $k$ for which $\Intervals$ admits a proper $k$-coloring.  Similarly we can adapt the definition of a clique: 
a multiset of intervals $\IntervalsTwo=\set{\intervalTwo_1,\intervalTwo_2,\ldots,\intervalTwo_m}$ is a clique if and only if $\bigcap\IntervalsTwo \eqdef \intervalTwo_1 \cap \ldots \cap \intervalTwo_m \neq \emptyset,$ or equivalently $\max_{i=1}^{m}\opx{\intervalTwo_i}<\min_{i=1}^{m}\opy{\intervalTwo_i}$.  
In such case it holds that $\bigcap\IntervalsTwo= \intDef{\max_{i=1}^{m}\opx{\intervalTwo_i}}{\min_{i=1}^{m}\opy{\intervalTwo_i}}$. We refer to this intersection as span of $\IntervalsTwo$, and denote it as $\spann{\IntervalsTwo} \eqdef \bigcap\IntervalsTwo$. To emphasize the size of $\IntervalsTwo$ we often refer to it as an $m$-clique. The clique number $\fClique{\Intervals}$ of a multiset of intervals $\Intervals$ is the maximum number $m$ such that $\Intervals$ contains an $m$-clique. It is well known that interval graphs are perfect, that is:
\begin{claim}[~\cite{GOLUMBIC2004171}]
Let $\Intervals$ be a multiset of intervals. Then
$
\fClique{\Intervals} = \fChromatic{\Intervals}.
$
\end{claim}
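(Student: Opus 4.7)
The plan is to prove the two inequalities $\fClique{\Intervals} \leq \fChromatic{\Intervals}$ and $\fChromatic{\Intervals} \leq \fClique{\Intervals}$ separately. The first one is immediate: in any proper $k$-coloring the members of any clique must receive pairwise distinct colors, so $k \geq \fClique{\Intervals}$ whenever $\Intervals$ admits a proper $k$-coloring; taking $k = \fChromatic{\Intervals}$ gives $\fClique{\Intervals} \leq \fChromatic{\Intervals}$.

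For the nontrivial direction I would exhibit a proper coloring using exactly $\fClique{\Intervals}$ colors by the standard left-endpoint greedy procedure. Order the intervals of $\Intervals$ as $\interval_1,\interval_2,\ldots,\interval_{|\Intervals|}$ so that $\opx{\interval_1} \leq \opx{\interval_2} \leq \ldots \leq \opx{\interval_{|\Intervals|}}$, breaking ties arbitrarily. Process them in this order, and assign to $\interval_i$ the smallest color from $\{1,\ldots,\fClique{\Intervals}\}$ that does not appear on any already-colored $\interval_j$ (with $j<i$) intersecting $\interval_i$.

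The key step is to verify that such a color always exists. Fix $i$ and let $\IntervalsTwo \eqdef \{\interval_j : j<i,\ \interval_j \cap \interval_i \neq \emptyset\} \cup \{\interval_i\}$. For every $\interval_j \in \IntervalsTwo$ with $j<i$ we have $\opx{\interval_j} \leq \opx{\interval_i}$ by the ordering and $\opy{\interval_j} > \opx{\interval_i}$ because $\interval_j$ intersects the half-open interval $\interval_i=\intDef{\opx{\interval_i}}{\opy{\interval_i}}$. Hence $\opx{\interval_i} \in \interval_j$ for every $\interval_j \in \IntervalsTwo$, so $\opx{\interval_i} \in \bigcap \IntervalsTwo$; in particular $\IntervalsTwo$ is a clique and $|\IntervalsTwo| \leq \fClique{\Intervals}$. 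Therefore at most $\fClique{\Intervals}-1$ colors are forbidden for $\interval_i$, and the greedy choice succeeds.

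The resulting function is a proper coloring of $\Intervals$ using at most $\fClique{\Intervals}$ colors, yielding $\fChromatic{\Intervals} \leq \fClique{\Intervals}$ and combining with the easy direction the desired equality. The only real subtlety is the endpoint-containment argument in the inductive step; everything else is routine. (The argument handles multisets without modification, since repeated copies of a common interval simply form a clique among themselves and are treated as distinct vertices by the greedy rule.)
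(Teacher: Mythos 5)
Your proof is correct and is the standard left-endpoint greedy argument for perfection of interval graphs; the paper itself does not prove this claim but simply cites it as a known fact from Golumbic, so there is nothing to compare against beyond noting that your argument is the textbook one. The key step — that all earlier-processed neighbors of $\interval_i$ contain the point $\opx{\interval_i}$ and hence form a clique together with $\interval_i$ — is verified correctly for the half-open intervals used in the paper, and the multiset caveat is handled properly.
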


We introduce two orders $\linearOrder$ and $<$ 
on a multiset of intervals $\Intervals$. 
For any $\intervalOne,\intervalTwo \in \Intervals$ we let
$
\intervalOne \linearOrder \intervalTwo \quad \text{if} \quad \opx{ \intervalOne} < \opx{ \intervalTwo}
$. If $\interval=\intervalTwo$, we solve the tie in an arbitrary consistent manner. Hence, $\linearOrder$ is a linear order on $\Intervals$. We say that $\intervalOne < \intervalTwo$ if and only if $\opy{ \intervalOne} \leq \opx{ \intervalTwo}$. Hence, $<$ is a linear order only on independent sets of intervals (i.e., multisets of intervals that are pairwise non-intersecting). Orders $\linearOrder$ and $<$ extend to multisets of intervals in a natural manner. For two multisets of intervals $\IntervalsTwo$ and $\IntervalsThree$, we say that $\IntervalsTwo \linearOrder \IntervalsThree$ (respectively $\IntervalsTwo < \IntervalsThree$) if for all $\intervalTwo \in \IntervalsTwo, \intervalThree \in \IntervalsThree$ it holds that $\intervalTwo \linearOrder \intervalThree$ (respectively $\intervalTwo < \intervalThree$). For a multiset of intervals $\Intervals$ by $\Intervals=\set{\interval_1 \linearOrder \ldots\linearOrder \interval_m}$ we denote that $\Intervals=\set{\interval_1 , \ldots , \interval_m}$ and $\interval_1 \linearOrder \ldots \linearOrder \interval_m$. For an ordered multiset of intervals $\Intervals=\set{\interval_1 \linearOrder \ldots\linearOrder \interval_m}$ we now define a prefix, a suffix and an infix of $\Intervals$. For $\interval_i,\interval_l \in \Intervals$, $l>i$ we set $\prefix{\Intervals}{\interval_i}=\set{\interval_1,\ldots,\interval_i}$,
$\suffix{\Intervals}{\interval_i}=\set{\interval_i,\ldots,\interval_m}$
and $\infix{\Intervals}{\interval_i}{\interval_l}=\set{\interval_i, \ldots, \interval_l}$.

We now move on to some basic observations that hold for multisets of unit intervals.
An interval $\interval$ is unit if and only if $\opy{\interval}=\opx{\interval}+1$. 
\begin{observation}
\label{obs:extremalIntervals}
Let $\Intervals=\set{\interval_1 \linearOrder \ldots \linearOrder \interval_m}$ be a multiset of unit intervals.
If  $\fClique{\Intervals}<m$, then the extremal intervals are disjoint, i.e., $\interval_1 < \interval_m$.
\end{observation}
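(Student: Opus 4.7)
The plan is to proceed by contradiction: suppose the extremal intervals $\interval_1$ and $\interval_m$ are not disjoint, i.e., $\opy{\interval_1} > \opx{\interval_m}$, and then exhibit a point shared by all $m$ intervals, producing an $m$-clique and contradicting $\fClique{\Intervals} < m$.

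Concretely, I would first exploit the ordering $\interval_1 \linearOrder \ldots \linearOrder \interval_m$ together with unitness. For every $i$, we have $\opx{\interval_1} \leq \opx{\interval_i} \leq \opx{\interval_m}$, and since each interval has length $1$, this yields $\opy{\interval_i} = \opx{\interval_i} + 1 \geq \opx{\interval_1} + 1 = \opy{\interval_1}$. Combined with the assumption $\opx{\interval_m} < \opy{\interval_1}$, we get $\opx{\interval_i} \leq \opx{\interval_m} < \opy{\interval_1} \leq \opy{\interval_i}$ for every $i$, so the half-open interval $\intDef{\opx{\interval_m}}{\opy{\interval_1}}$ is non-empty and lies inside every $\interval_i$.

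It follows that $\Intervals$ is itself an $m$-clique with $\spann{\Intervals} \supseteq \intDef{\opx{\interval_m}}{\opy{\interval_1}} \neq \emptyset$, hence $\fClique{\Intervals} \geq m$, contradicting the hypothesis $\fClique{\Intervals} < m$.

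The argument is essentially one calculation and there is no real obstacle; the only subtle point is being careful with half-open intervals and the strict/non-strict inequalities in the definition of intersection, so that the witnessing span $\intDef{\opx{\interval_m}}{\opy{\interval_1}}$ is genuinely non-empty under the negated conclusion $\opy{\interval_1} > \opx{\interval_m}$.
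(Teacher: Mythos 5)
Your proof is correct and follows essentially the same route as the paper's: negate the conclusion to get $\opx{\interval_m} < \opy{\interval_1}$, observe that by the $\linearOrder$ ordering and unitness $\opx{\interval_m} = \max_i \opx{\interval_i}$ and $\opy{\interval_1} = \min_i \opy{\interval_i}$, and conclude that $\bigcap\Intervals \neq \emptyset$, contradicting $\fClique{\Intervals} < m$. Your extra care about the half-open-interval conventions is sound but does not change the argument.
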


\begin{proof}
Suppose contrary that $\interval_1 \cap \interval_n \neq \emptyset$, i.e., $\opx{\interval_n} < \opy{\interval_1}=\opx{\interval_1}+1$.
Then $\max_{i=1}^{n}\opx{\interval_i}<\min_{i=1}^{n}\opy{\interval_i}$
and as a consequence
$
\bigcap \Intervals \neq \emptyset
$
contradicting $\fClique{\Intervals} < n$.
\end{proof}
Our next observation exhibits the fact that coloring modulo $k$ works well on unit intervals ordered by $\linearOrder$. \begin{observation}\label{obs:gcc}
Let $\Intervals=\set{\interval_1 \linearOrder \ldots \linearOrder \interval_m}$ be a multiset of unit intervals such that $\fClique{\Intervals} \leq k$. Let $k \leq l < m$ and let $\coloring:\prefix{\Intervals}{\interval_l} \mapsto [k]$ be a proper $k$-coloring for $\prefix{\Intervals}{\interval_l}$ such that $\coloring$ is a bijection on $\infix{\Intervals}{\interval_{l-k+1}}{\interval_{l}}$. Let $\coloring':\Intervals \mapsto [k]$ be defined as follows: $\coloring'(\interval_i)\eqdef \coloring(\interval_i)$ for $i \in \num{l}$ and $\coloring'(\interval_{l+i}) \eqdef \coloring(\interval_{l-k+(i \mod k)})$ for $i \in \num{n-l}$. Then $\coloring'$ is a proper $k$-coloring on $\Intervals$.
\end{observation}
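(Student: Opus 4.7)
The plan is to verify $c'$ is proper by contradiction. The central tool is a corollary of Observation~\ref{obs:extremalIntervals}: for any $i < j$ with $j - i \geq k$, the intervals $I_i$ and $I_j$ are disjoint. This follows by applying Observation~\ref{obs:extremalIntervals} to the subset $\{I_i, I_{i+1}, \dots, I_j\}$, whose size $j - i + 1 \geq k + 1$ strictly exceeds its clique number (bounded by $\omega(\Intervals) \leq k$). Contrapositively, if $I_a \cap I_b \neq \emptyset$ for $a < b$, then $b - a \leq k - 1$.

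Under this constraint I split into three cases depending on how $a$ and $b$ lie with respect to $l$. The case $b \leq l$ is immediate because $c'$ agrees with the proper coloring $c$ on the prefix. For $l < a < b$, write $a = l + t_1$ and $b = l + t_2$ with $0 < t_2 - t_1 < k$; then $c'(I_a) = c(I_{s_1})$ and $c'(I_b) = c(I_{s_2})$ for indices $s_i = l - k + (t_i \bmod k)$ both lying in the window $W := \{l-k+1, \dots, l\}$ on which $c$ is a bijection (interpreting $t \bmod k \in \{1,\dots,k\}$, which is the only way the formula hits the bijection window). Equal colors would force $s_1 = s_2$, hence $k \mid t_2 - t_1$, contradicting $0 < t_2 - t_1 < k$.

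The subtle case is $a \leq l < b$. Writing $b = l + t$ with $t \geq 1$, the construction gives $c'(I_b) = c(I_s)$ with $s = l - k + (t \bmod k) \in W$. Combining $b - a \leq k - 1$ with $t \geq 1$ yields $a \geq l + t - k + 1 \geq l - k + 2$, so $a$ also belongs to $W$, and the bijection on $W$ rules out $c(I_a) = c(I_s)$ unless $a = s$. But substituting $a = s$ gives $b - a = t + k - (t \bmod k) \geq k$, using the elementary fact $t \geq t \bmod k$, again contradicting $b - a < k$. Hence $c'(I_a) = c(I_a) \neq c(I_s) = c'(I_b)$.

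The main obstacle throughout is precisely the off-by-one bookkeeping around the cyclic continuation: one must verify that whenever a potential color clash could arise between the freshly-colored suffix and the already-colored prefix, both offending indices necessarily land inside the window $W$ of exactly $k$ consecutive intervals. That is exactly where the bijection hypothesis (rather than mere properness of $c$ on the whole prefix) becomes indispensable, and it is why the tight gap $b - a < k$ is needed to force the reduction into $W$.
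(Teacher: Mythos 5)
Your proof is correct and follows essentially the same strategy as the paper's: pairs of intervals whose indices differ by at least $k$ are disjoint by Observation~\ref{obs:extremalIntervals}, and for pairs with index gap below $k$ the colors differ by construction. The paper compresses your three-case analysis into the single phrase ``by definition,'' so your write-up merely makes explicit the bookkeeping (the $a\le l<b$ boundary case and the reading of $i \bmod k$ as taking values in $\{1,\dots,k\}$) that the paper leaves implicit.
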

\begin{proof}
It is clear that $\coloring$ assigns only colors in $[k]$, it remains to prove that it is also a proper coloring.
Let $\interval_i,\interval_j \in \Intervals$, where $i<j$ and $j>l$. If $j<i+k$, then by definition $\fColoring{\interval_i} \neq \fColoring{\interval_j}$.
Otherwise, $\size{ \set{ \interval_i, \interval_{i+1}, \ldots, \interval_j}} \geq k+1$, so by Observation~\ref{obs:extremalIntervals}, $\interval_i \cap \interval_j=\emptyset$. 
\end{proof}
Note, that Observation~\ref{obs:gcc} implies a greedy coloring algorithm for any multiset $\Intervals$ of unit intervals: color the first $k$ intervals into colors from $1$ to $k$ and then use Observation~\ref{obs:gcc} to complete the coloring. Unfortunatelly Observartion~\ref{obs:gcc} is not useful if the coloring given on the prefix of $\Intervals$ is not a bijection on the last $k$ intervals of the prefix. This can be easly fixed using the observation and corollary that follow next. 
\begin{observation}\label{obs:makeBijection}
Let $\Intervals=\set{ \interval_1 \linearOrder \ldots \linearOrder \interval_{2k} }$ be a multiset of unit intervals such that  $\fClique{\Intervals} \leq k$ and let $\coloringPrime : \prefix{\Intervals}{\interval_k} \mapsto \num{k}$ be a proper coloring on $\prefix{\Intervals}{\interval_k}$. Then there is a proper coloring $\coloring : \Intervals \mapsto \num{k}$ such that $\coloring(\interval_i)=\coloringPrime(\interval_i)$ for $i \in \num{k}$ and $\coloring$ is a bijection on $\suffix{\Intervals}{\interval_{k+1}}$.
\end{observation}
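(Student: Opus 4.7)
The plan is to recast the problem as a system of distinct representatives (SDR) and to apply Hall's marriage theorem. For each $j \in \num{k}$ I would define
\[
F_j \eqdef \set{ \fColoringPrime{\interval_i} : i \in \num{k},\ \interval_i \cap \interval_{k+j} \neq \emptyset }
\]
to be the set of colors forbidden on $\interval_{k+j}$ by its neighbors in $\prefix{\Intervals}{\interval_k}$, and set $A_j \eqdef \num{k} \setminus F_j$. The key observation is that any bijection $\coloring : \suffix{\Intervals}{\interval_{k+1}} \to \num{k}$ with $\fColoring{\interval_{k+j}} \in A_j$ for every $j$, together with $\coloringPrime$ on the prefix, automatically yields a proper $k$-coloring of $\Intervals$: adjacencies inside the prefix are handled by $\coloringPrime$, adjacencies inside the suffix by the injectivity of $\coloring$, and cross adjacencies by the very condition $\fColoring{\interval_{k+j}} \notin F_j$.

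The first step will be to establish $\size{F_j} \leq k - j$ for every $j \in \num{k}$. If $\interval_i$ with $i \leq k$ intersects $\interval_{k+j}$, then the extremal intervals of $\infix{\Intervals}{\interval_i}{\interval_{k+j}}$ meet, so by the contrapositive of Observation~\ref{obs:extremalIntervals} this subset is a clique. Its size $k+j-i+1$ is at most $\fClique{\Intervals} \leq k$, which forces $i \geq j+1$. Hence every prefix-neighbor of $\interval_{k+j}$ lies in $\infix{\Intervals}{\interval_{j+1}}{\interval_k}$, giving the bound $\size{F_j} \leq k-j$.

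The second step is a short verification of Hall's condition for the family $(A_j)_{j \in \num{k}}$. For an arbitrary $J \subseteq \num{k}$ with $\size{J} = s$, let $j^{\star} \eqdef \max J$. Then $j^{\star} \geq s$ and, by the previous bound, $\size{A_{j^{\star}}} \geq k - (k - j^{\star}) = j^{\star} \geq s$, so $\size{\bigcup_{j \in J} A_j} \geq \size{A_{j^{\star}}} \geq s$. Hall's theorem then produces an SDR for $(A_j)_{j}$, i.e.\ the required bijection $\coloring$ on $\suffix{\Intervals}{\interval_{k+1}}$, and the proof is complete.

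The main obstacle is really the clique-size argument in the first step; once $\size{F_j} \leq k-j$ is in hand, Hall's condition follows in one line from the trivial fact that the maximum index of an $s$-element subset of $\num{k}$ is at least $s$.
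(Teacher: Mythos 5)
Your proof is correct, but it takes a genuinely different route from the paper's. The paper argues by direct construction: it takes the clique $\IntervalsTwo=\infix{\Intervals}{\interval_l}{\interval_k}$ of prefix intervals meeting $\interval_k$, assigns the $l-1$ colors missing from $\coloringPrime(\IntervalsTwo)$ to $\interval_{k+1},\ldots,\interval_{k+l-1}$ in arbitrary order, and then sets $\coloring(\interval_i)=\coloring(\interval_{i-k})$ for $i\in\set{k+l,\ldots,2k}$, with Observation~\ref{obs:extremalIntervals} guaranteeing that an interval and its shift by $k$ positions are disjoint. You instead recast the problem as finding a system of distinct representatives for the sets $A_j$ of colors admissible on $\interval_{k+j}$ and invoke Hall's theorem; the only geometric input is the staircase bound $\size{F_j}\leq k-j$, which you derive correctly from the contrapositive of Observation~\ref{obs:extremalIntervals} (an intersecting pair forces the entire infix between them to be a clique of size at most $\fClique{\Intervals}\leq k$), and your one-line verification of Hall's condition via $j^{\star}=\max J\geq\size{J}$ is valid. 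Both arguments are sound. The paper's version is fully explicit and exposes the structure used later (the suffix coloring is a shifted copy of the prefix coloring), whereas yours cleanly separates the combinatorial principle from the geometry and in fact characterizes all admissible bijections; moreover, the nested bounds $\size{A_j}\geq j$ mean a greedy assignment in increasing $j$ already succeeds, so nothing constructive is lost by appealing to Hall. The only point worth spelling out is that the SDR is an injection from a $k$-element index set into $\num{k}$ and is therefore automatically the bijection on $\suffix{\Intervals}{\interval_{k+1}}$ that the statement requires.
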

\begin{proof}
To construct $\coloring$, we first obviously set $\coloring(\interval_i)=\coloringPrime(\interval_i)$ for all $i \in \num{k}$.
Now we need to color the intervals of $\suffix{\Intervals}{\interval_{k+1}}$ in a way that the statement of Observation~\ref{obs:makeBijection} holds. Let $\IntervalsTwo=\set{ \interval_l, \interval_{l+1}, \ldots \interval_k}$ be the intervals of $\prefix{\Intervals}{\interval_k}$ that intersect $\interval_k$. Clearly, $\IntervalsTwo$ is a $(k-l+1)$-clique.
We first assign the $l-1$ colors not used by $\IntervalsTwo$ to the first $l-1$ intervals of $\suffix{\Intervals}{\interval_{k+1}}$. That is we assign colors $\num{k} \setminus \coloring(\IntervalsTwo)$ to intervals $\infix{\Intervals}{\interval_{k+1}}{\interval_{k+l-1}}$ in an arbitrary order. We get a proper coloring since the intervals of $\infix{\Intervals}{\interval_{k+1}}{\interval_{k+l-1}}$ do no intersect intervals of $\prefix{\Intervals}{\interval_k}$ other than $\IntervalsTwo$. Finally, we set $\coloring(\interval_{i})=\coloring(\interval_{i-k})$ for $i \in \set{k+l, \ldots, 2k}$. By Observarion~\ref{obs:extremalIntervals} intervals $\interval_i$ and $\interval_{i-k}$ do not intersect. This completes the proof.
\end{proof}

\begin{observation}\label{obs:gccNC}
Let $\Intervals=\set{\interval_1 \linearOrder \ldots \linearOrder \interval_m}$ be a multiset of unit intervals such that $\fClique{\Intervals} \leq k$. Let $l < m$ and let $\coloring:\prefix{\Intervals}{ \interval_l } \mapsto [k]$ be a proper $k$-coloring for the prefix $\prefix{\Intervals}{ \interval_l }$. Then there is a proper $k$-coloring $\coloring'$ for $\Intervals$ such that $\coloring'(\interval_i)=\coloring(\interval_i)$ for $i \in \num{l}$. 
\end{observation}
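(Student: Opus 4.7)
The plan is to combine Observation~\ref{obs:makeBijection} with Observation~\ref{obs:gcc}: first promote the given prefix coloring to one that is bijective on the last $k$ entries of a slightly longer prefix, then apply the modular extension rule of Observation~\ref{obs:gcc} to reach all of $\Intervals$. Two boundary situations require a tiny greedy patch first.

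Throughout I rely on the following trivial fact: given a proper $k$-coloring of $\prefix{\Intervals}{\interval_{i-1}}$ with $i \leq m$, there is always a free color in $\num{k}$ to assign to $\interval_i$. Indeed, every already-colored interval that intersects $\interval_i$ contains the point $\opx{\interval_i}$ (it precedes $\interval_i$ in $\linearOrder$ and meets it), so together with $\interval_i$ these intervals form a clique of size at most $k = \fClique{\Intervals}$, leaving a color free. Using this observation: if $l < k$ I would extend $\coloring$ greedily to $\prefix{\Intervals}{\interval_{\min(k,m)}}$ (if $m \leq k$ this already finishes the proof), so we may assume $l \geq k$; and if $l + k \geq m$ I would simply perform the greedy step $m - l \leq k$ further times and finish.

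The interesting case is $l \geq k$ and $l + k < m$. Here I would consider the $2k$-interval window $\IntervalsTwo = \infix{\Intervals}{\interval_{l-k+1}}{\interval_{l+k}}$, which inherits $\fClique{\IntervalsTwo} \leq k$. The restriction of $\coloring$ to the first half of $\IntervalsTwo$, namely $\infix{\Intervals}{\interval_{l-k+1}}{\interval_l}$, is a proper $k$-coloring, so Observation~\ref{obs:makeBijection} extends it to a proper coloring $\coloringDouble$ of $\IntervalsTwo$ that is a bijection on $\infix{\Intervals}{\interval_{l+1}}{\interval_{l+k}}$. Splicing $\coloringDouble$ with the original $\coloring$ on $\prefix{\Intervals}{\interval_{l-k}}$ yields a proper $k$-coloring of $\prefix{\Intervals}{\interval_{l+k}}$ that is bijective on its last $k$ intervals, whereupon Observation~\ref{obs:gcc} extends it to a proper $k$-coloring of all of $\Intervals$ that still agrees with $\coloring$ on $\prefix{\Intervals}{\interval_l}$, as required.

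The one place where a bit of care is needed, and the only real obstacle, is verifying that the splicing step produces a proper coloring: I must rule out conflicts between $\prefix{\Intervals}{\interval_{l-k}}$ and the newly colored intervals in $\infix{\Intervals}{\interval_{l+1}}{\interval_{l+k}}$. For any $i \leq l-k$, Observation~\ref{obs:extremalIntervals} applied to the $(k+1)$-set $\set{\interval_i, \interval_{i+1}, \ldots, \interval_{i+k}}$ forces its extremes to be disjoint, so $\opy{\interval_i} \leq \opx{\interval_{i+k}} \leq \opx{\interval_j}$ for every $j \geq l+1$ (since $i + k \leq l < j$), giving $\interval_i \cap \interval_j = \empt$. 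Everything else is routine bookkeeping.
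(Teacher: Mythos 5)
Your proof is correct and fleshes out what the paper leaves as a one-line ``direct consequence'' of Observations~\ref{obs:gcc} and~\ref{obs:extremalIntervals}: use Observation~\ref{obs:makeBijection} on a $2k$-window to make the coloring bijective on the $k$ intervals $\infix{\Intervals}{\interval_{l+1}}{\interval_{l+k}}$, splice, and then extend modulo $k$; your disjointness check for the splice is exactly the right use of Observation~\ref{obs:extremalIntervals}. As a side remark, your own ``trivial fact'' already proves the whole statement on its own --- greedily coloring $\interval_{l+1},\ldots,\interval_m$ in $\linearOrder$ order always finds a free color, since the already-colored neighbours of $\interval_i$ all contain $\opx{\interval_i}$ and hence form a clique with it --- so the window-and-splice machinery, while sound, is not actually needed.
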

\begin{proof}Direct consequence of Observation~\ref{obs:gcc} and Observation~\ref{obs:extremalIntervals}.\end{proof}
The algorithm of Observation~\ref{obs:gccNC} is further referred to as the \ColComp algorithm while the algorithm of Observation~\ref{obs:gcc} is referred to as the \ModColComp algorithm. All the above basic observations will be used as bulding blocks in further sections. 

We conclude this section with stating all the problems considered in the paper.
\paragraph*{\ColCompPro problem} We are given a $k$-colorable multiset of unit intervals $\Intervals=\set{\interval_1 \linearOrder \ldots \linearOrder \interval_m}$, $m > 2k$. We are also given a proper $k$-coloring $\coloring'$ on $ \prefix{\Intervals}{\interval_k}$ and a proper coloring $\coloring''$ on $\suffix{\Intervals}{\interval_{m-k+1}}$. When is it possible to extend $c'$ and $c''$ to a proper $k$-coloring of $\Intervals$ ? We study this problem in Section~\ref{sec:colcomp}.
\paragraph*{\Frog problem}
The Frogs game is defined by three non-negative integer parameters: $n$, where $n \geq 1$, $\jn$, where $1 \leq \jn \leq n$, and $\Rini$. In addition to that the game is defined by two integer sequences:
\begin{itemize}
\item the intial ranks $\Ranks = \bra{\rank{1},\ldots,\rank{n}}$, where each $\rank{i}=\Rini$ and
\item the jumping sequence $J = \bra{j_1,j_2,\ldots,j_{n-1}}$, where $1 \leqslant j_{\ti} \leqslant n-\ti$.
\end{itemize}
We now define the sequence $\RRanks{1},\ldots,\RRanks{n}$ inductively: $\RRanks{1}\eqdef \Ranks$ and for $\ti\in[n-1]$, based on $\RRanks{\ti}=\bra{\rrank{\ti}{1},\ldots,\rrank{\ti}{n-\ti+1}}$, we define
$$
\RRanks{\ti+1} \eqdef  \bra{\rrank{\ti}{1},\ \ldots,\ \rrank{\ti}{j_{\ti}-1},\ \rrank{\ti}{j_{\ti}}+\rrank{\ti}{j_{\ti}+1},\ \rrank{\ti}{j_{\ti}+2},\ \ldots,\ \rrank{\ti}{n-\ti+1}}.
$$
In other words, new ranks $\RRanks{\ti+1}$ are created by adding two neighboring numbers in $\RRanks{\ti}$ placed at positions $j_{\ti}$ and $j_{\ti}+1$.
The cost of such operation is defined as
$$
\cost_{\ti}\eqdef \min\bra{\rrank{\ti}{j_{\ti-\jn+1}}+\ldots+\rrank{\ti}{j_{\ti}},\ \rrank{\ti}{j_{\ti+1}}+\ldots+\rrank{\ti}{j_{\ti+\jn}}}
$$
where all $\rrank{\ti}{i}$ not occuring in $\RRanks{\ti}$ are defined as $\rrank{\ti}{i}\eqdef \Rini$.
The total cost of the Frog is $\cost \eqdef \cost_1+\ldots+\cost_{n-1}$. How large can the total cost be for the most pessimistic jumping sequence $J$ ? We address this problem in Section~\ref{sec:frog}.
\paragraph*{\Pruir problem}
We are given a parameter $k$ and a sequence of $n$ unit intervals: $\interval_1,\interval_2, \ldots, \interval_n$ such that $\set{\interval_1 , \ldots, \interval_n}$ is $k$-colorable. The sequence is not necessarily ordered by $\linearOrder$ or $<$, as it represents the arrival order of the intervals. This sequence defines $n+1$ multisets of unit intervals: $\Intervals_0=\emptyset$ and $\Intervals_j \eqdef \set{\interval_1,\ldots,\interval_j}$ for $j>0$. Instance $\Intervals_j$ differs from $\Intervals_{j-1}$ by one interval $\interval_j$. The goal is to maintain a proper $k$-coloring on the dynamic instance. To be more precise, after the interval $\interval_j$ is presented, the algorithm needs to compute a 
proper $k$-coloring $\coloring_j$ for $\Intervals_j$. Our objective is to 
minimize the recourse budget, which is the number of intervals with 
different colors in $\coloring_j$ and $\coloring_{j-1}$. We address this problem in Section~\ref{sec:mainres}.
\paragraph*{\PruirFD problem}
Initially, we are given an empty multiset of unit intervals $\Intervals_0=\emptyset$. The instance $\Intervals_{j+1}$ is obtained from $\Intervals_{j}$ by adding a new unit interval to $\Intervals_j$ or removing the existing chosen interval from $\Intervals_j$. Every instance $\Intervals_j$ presented to the algorithm is $k$-colorable. The goal is again to maintain a proper $k$-coloring on the dynamic instance. After each interval insertion and removal, the algorithm needs to compute a 
proper $k$-coloring $\coloring_j$ for $\Intervals_j$. Similarily as before, our objective is to 
minimize the recourse budget, which is the number of intervals with 
different colors in $\coloring_j$ and $\coloring_{j-1}$. This problem is addressed in this section, in Observation~\ref{obs:FD}. The proof of Observation~\ref{obs:FD} is fairly simple and due to space limits it is moved to Appendix~\ref{app:lb}.

\begin{observation}\label{obs:FD}
There is a sequence of $m$ updates for the \PruirFD problem that forces the total number of recolorings of $\Omega(m^2)$. 
\end{observation}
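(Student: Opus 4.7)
The plan is to exhibit, for each $m$, an adversarial sequence of $m$ insertion/deletion updates on unit intervals that keeps the graph $2$-colorable throughout and forces any algorithm to make $\Omega(m^2)$ total recolorings, already for $k=2$. With $n=\Theta(m)$ I would maintain two disjoint paths $P_A$ and $P_B$ of $n$ unit intervals each, plus one bridge interval. The paths are placed on the line so that the right endpoint of $P_A$ is at distance roughly $\tfrac{1}{2}$ from the left endpoint of $P_B$, close enough that a single unit interval $J$ overlaps precisely the two facing inner endpoints and no other interval. Setting up $P_A$, $P_B$, and $J$ takes $2n+1$ updates, with no forced recoloring beyond each insertion itself.

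I would then repeat $n$ identical cycles, each consisting of $4$ updates: (i) delete the current bridge; (ii) \emph{rotate} one of the two paths by deleting its leftmost interval and inserting a new interval just to the right of its current rightmost; (iii) insert a new bridge between the updated inner endpoints. The adversary alternates which path is rotated in successive cycles, so the bridging region drifts rightward at a constant rate, the gap stays $\tfrac{1}{2}$ throughout, and the two paths never merge. Concrete positions (e.g.\ $a_j=[(j-1)/2,(j+1)/2)$, $b_j=[(n+j+1)/2,(n+j+3)/2)$ with the bridge placed in the unique admissible slot between them) can be checked by direct inspection to avoid any spurious $3$-clique, so $2$-colorability is preserved at every intermediate step.

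The heart of the analysis is a parity argument at bridge-insertion times. Let $t_i$ denote the moment immediately after cycle $i$'s new bridge is inserted and let $c_{t_i}$ be the coloring at that moment. At each $t_i$ the graph is connected through the bridge, so any proper $2$-coloring is determined on each of $P_A$, $P_B$ once a single vertex is fixed, and the bridge constraint forces the two facing inner endpoints to share a color. If cycle $i$ rotated $P_A$, adjacency in the new path forces the new inner endpoint of $P_A$ to have the opposite color of the previous inner endpoint; combining the bridge constraints at $t_{i-1}$ and at $t_i$ with this adjacency relation yields a direct contradiction unless at least one of the two paths has been entirely parity-swapped between $t_{i-1}$ and $t_i$. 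Either possibility changes the colors of at least $n-1$ shared intervals, giving $|c_{t_i}\,\triangle\, c_{t_{i-1}}| \ge n-1$. By the triangle inequality for symmetric difference,
\[
\sum_{j=1}^{m} |c_j\,\triangle\, c_{j-1}| \;\ge\; \sum_{i=1}^{n} |c_{t_i}\,\triangle\, c_{t_{i-1}}| \;\ge\; n(n-1) \;=\; \Omega(m^2).
\]

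The main obstacle is twofold. First, the geometric bookkeeping: one must verify that across all $n$ rotations the chosen positions never create a $3$-clique (which would destroy $2$-colorability) and that the rightward drift of the two paths neither makes them merge nor erases the admissible bridge slot; these are finitely many explicit position inequalities that follow from the $1/2$-spacing. Second, the parity argument has to hold against algorithms that spread their forced recolorings across the four updates of a cycle, or that flip preemptively; the triangle-inequality step above accommodates both, since it only uses the colorings at the bridge-insertion moments. The crux is therefore the parity contradiction forcing $|c_{t_i}\,\triangle\, c_{t_{i-1}}| \ge n-1$ at every cycle boundary.
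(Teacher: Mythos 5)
Your proposal is correct and follows essentially the same approach as the paper: two long unit-interval paths whose $2$-colorings are rigid, repeatedly joined by a bridge whose parity constraint contradicts the current coloring, so that each $O(1)$-update round forces a full flip of one path ($\Omega(n)$ recolorings) over $\Omega(n)$ rounds. The only difference is in how the parity contradiction is triggered -- the paper's adversary adaptively inspects the coloring and inserts either one or two bridge intervals, whereas you obliviously rotate one path to shift its parity -- but the decomposition, the rigidity/parity lemma, and the resulting $\Omega(m^2)$ bound are the same.
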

 
\section{Unit Color Completion problem}\label{sec:colcomp}
In this section we consider the \ColCompPro problem defined at the end of Section~\ref{sec:prelims}. We are given a $k$-colorable multiset of unit intervals $\Intervals=\set{\interval_1 \linearOrder \ldots \linearOrder \interval_n}$. We are given the coloring on the first $k$ intervals and the last $k$ intervals. We show here the sufficient condition to extend this coloring to the proper $k$-coloring of the entire instance. To be more precise, we show that it is enough to have enough slack between the first $k$ and the last $k$ intervals, meaning that we can fit there a large enough independent set of unit intervals without increasing the clique number. This result is stated more formally in the following lemma.

\begin{lemma}
\label{lem:partColor}
Let $k>0$ be a positive integer number and let $\Intervals=\set{\interval_1 \linearOrder \ldots \linearOrder \interval_n}$ be a multiset of $n\geqslant 2k$ unit intervals such that $\fClique{\Intervals} \leqslant k$. Let $\coloringPrime: \prefix{\Intervals}{\interval_k} \mapsto \num{k}$ and $\coloringDouble: \suffix{\Intervals}{\interval_{n-k+1}} \mapsto\num{k}$ be bijections. In addition let the following condition (1.) hold:
\begin{enumerate}
\item There is a set of $k^2-1$ pairwise disjoint unit intervals $\IntervalsFour = \set{\intervalFour_1,\ldots,\intervalFour_{k^2-1}}$, which lie between the interval $\interval_k$ and the interval $\interval_{n-k+1}$, whose addition does not increase the clique number of $\Intervals$ above $k$, i.e.,
\begin{enumerate}
\item $\interval_1 \linearOrder \interval_2 \linearOrder \ldots \linearOrder \interval_k \ \linearOrder \ \intervalFour_1 < \intervalFour_2 < \ldots < \intervalFour_{k^2-1} \ \linearOrder \ \interval_{n-k+1} \linearOrder \interval_{n-k+2} \linearOrder \ldots \linearOrder \interval_n$,
\item $\fClique{\Intervals \cup \IntervalsFour} \leqslant k.$
\end{enumerate}

\end{enumerate}
Then there is a proper coloring $\coloring:\Intervals\rightarrow [k]$ which is an extension of coloring $\coloringPrime$ and $\coloringDouble$ to the whole~$\Intervals$, i.e., $\fColoring{\interval_i}=\fColoringPrime{\interval_i}$ for $i=1,\ldots,k$ and $\fColoring{\interval_j}=\fColoringDouble{\interval_j}$ for $j=n-k+1,\ldots,n$.
\end{lemma}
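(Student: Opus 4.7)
Plan: I would prove the lemma by constructing the desired coloring via a controlled modification of a canonical extension of $\coloringPrime$. First, use Observation~\ref{obs:gcc} (the \ModColComp algorithm) to extend $\coloringPrime$ from $\prefix{\Intervals}{\interval_k}$ to a proper $k$-coloring $\hat{\coloring}$ on all of $\Intervals$; this is valid because $\coloringPrime$ is a bijection on the last $k$ intervals of $\prefix{\Intervals}{\interval_k}$. Since $\hat{\coloring}$ is proper and $\fClique{\Intervals}\leqslant k$, its restriction to $\suffix{\Intervals}{\interval_{n-k+1}}$ is a bijection onto $[k]$. Let $\sigma\in S_k$ be the unique permutation with $\fColoringDouble{\interval_j}=\sigma\bra{\hat{\coloring}(\interval_j)}$ for $j\in\set{n-k+1,\ldots,n}$. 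The task reduces to showing that one can re-color $\Intervals$ in the middle region $\infix{\Intervals}{\interval_{k+1}}{\interval_{n-k}}$ so that the resulting proper $k$-coloring still agrees with $\coloringPrime$ on the left and realizes the permutation $\sigma$ on the right.

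Next, I would factor $\sigma$ as a product $\sigma=\tau_t\circ\cdots\circ\tau_1$ of adjacent transpositions with $t\leqslant \binom{k}{2}\leqslant k^2-1$ for $k\geqslant 2$ (and the trivial case $k=1$ is vacuous). The $k^2-1$ disjoint slots of $\IntervalsFour$ provide a budget of one slot per transposition, with room to spare. I would allocate slots $L_{j_1}\linearOrder\cdots\linearOrder L_{j_t}$ in order, and inductively maintain the invariant that after processing slot $L_{j_s}$, the coloring on the suffix $\suffix{\Intervals}{L_{j_s}}$ coincides with the composition $\tau_s\circ\cdots\circ\tau_1$ applied to the \ModColComp coloring. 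The invariant at $s=t$ yields exactly $\coloringDouble$ on the final $k$ intervals.

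The essential local step — and the main obstacle — is to show that a single adjacent transposition can be realized across one slot $L_{j_s}$. The crucial fact is that at every point of $L_{j_s}$, the $\Intervals$-load is at most $k-1$ (because $L_{j_s}\cup\Intervals$ has clique at most $k$ and the other slots are disjoint from $L_{j_s}$), so at least one color is not used by intervals of $\Intervals$ overlapping $L_{j_s}$. Using this free color as a temporary placeholder, I would re-color the $O(k)$ intervals whose lifespan touches $L_{j_s}$ — namely the clique on its left boundary, on its right boundary, and any intervals lying inside — in a way that swaps the two colors being transposed on the right side while agreeing with the already fixed coloring on the left side, invoking Observations~\ref{obs:extremalIntervals} and~\ref{obs:gcc} to verify properness and to re-extend the coloring cleanly past $L_{j_s}$ until the next allocated slot. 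The verification that each local swap is consistent, and that consecutive slots do not interfere (because the $L_{j_s}$ are pairwise disjoint and well-separated from the fixed left/right endpoints by design), is the technical heart of the argument.
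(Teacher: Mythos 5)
Your high-level strategy mirrors the paper's: extend $\coloringPrime$ canonically, measure the discrepancy with $\coloringDouble$ as a permutation of $[k]$, and spend the slack intervals of $\IntervalsFour$ one elementary move at a time. But there is a genuine gap exactly at what you call the essential local step, and the one concrete fact you offer in its support is false. From $\fClique{\Intervals\cup\IntervalsFour}\leqslant k$ you may conclude that every \emph{point} of a slot $\intervalFour_{j_s}$ is covered by at most $k-1$ intervals of $\Intervals$; it does not follow that some color is unused by the set of all intervals of $\Intervals$ \emph{overlapping} $\intervalFour_{j_s}$. That set need not be a clique: it can contain up to $2(k-1)$ intervals (at most $k-1$ containing the left endpoint of the slot and at most $k-1$ containing points near its right endpoint, with the two groups possibly disjoint from each other), and these can easily carry all $k$ colors, so there is no ``free placeholder color.'' Consequently the mechanism by which one disjoint slot buys one adjacent transposition is never established. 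The statement is in fact true --- the paper realizes an adjacent transposition as a special case of an ``insertion'' step, where the separator interval supplies the extra element needed to invoke Observation~\ref{obs:extremalIntervals} and conclude that each recolored interval is disjoint from everything sufficiently far to its left; see Equation~(\ref{eq:defCj}) and the properness verification that follows it --- but that argument is the heart of the proof and is missing from yours.

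A second unaddressed point is the bookkeeping between slots. Under the \ModColComp coloring, the color word read off a window of $k$ consecutive intervals rotates cyclically as the window slides, so ``the composition $\tau_s\circ\cdots\circ\tau_1$ applied to the \ModColComp coloring on the suffix'' is not well defined until you fix, for each slot, the window in which the transposition acts and conjugate by the intervening rotations. The paper sidesteps this by spending $k(k-1)$ of the $k^2-1$ slack intervals as padding that makes every block length divisible by $k$ (so the permutation propagates unchanged through a block), and uses only the remaining $k-1$ slack intervals as separators --- one per insertion step; since an insertion moves a color an arbitrary distance, $k-1$ steps suffice rather than $\binom{k}{2}$. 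Your budget of $\binom{k}{2}\leqslant k^2-1$ transpositions is fine arithmetically, and consecutive disjoint unit slots do force some separation, but without either the padding device or an explicit treatment of the rotations, and without a worked-out local swap, the induction cannot be completed as stated.
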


\begin{proof}An example instance for which our assumtions hold is pictured in Figure~\ref{fig:1}.
First of all, we assume that the multiset of unit intervals $\Intervals$ is connected as a graph.
Otherwise Lemma \ref{lem:partColor} follows as one can use the algorithm of Observation~\ref{obs:gcc}. 

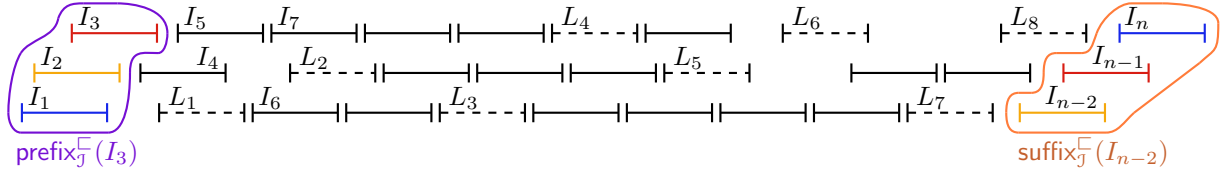
\begin{figure}[hbpt!]
\begin{small}

\centering

\begin{tikzpicture}[xscale={2\textwidth/97}, yscale=15]
	\begin{pgfonlayer}{nodelayer}
		\node [style=none] (0) at (-1, 0) {};
		\node [style=none] (1) at (0, 2) {};
		\node [style=none] (2) at (3, 4) {};
		\node [style=none] (3) at (6, 0) {};
		\node [style=none] (4) at (7, 2) {};
		\node [style=none] (5) at (8.5, 2) {};
		\node [style=none] (6) at (10, 0) {};
		\node [style=none] (7) at (10, 4) {};
		\node [style=none] (8) at (11.5, 4) {};
		\node [style=none] (9) at (15.5, 2) {};
		\node [style=none] (10) at (17, 0) {};
		\node [style=none] (11) at (17.5, 0) {};
		\node [style=none] (12) at (18.5, 4) {};
		\node [style=none] (13) at (19, 4) {};
		\node [style=none] (14) at (20.5, 2) {};
		\node [style=none] (15) at (24.5, 0) {};
		\node [style=none] (16) at (25, 0) {};
		\node [style=none] (17) at (26, 4) {};
		\node [style=none] (18) at (26.5, 4) {};
		\node [style=none] (19) at (27.5, 2) {};
		\node [style=none] (20) at (28, 2) {};
		\node [style=none] (21) at (32, 0) {};
		\node [style=none] (22) at (32.5, 0) {};
		\node [style=none] (23) at (33.5, 4) {};
		\node [style=none] (24) at (34, 4) {};
		\node [style=none] (25) at (35, 2) {};
		\node [style=none] (26) at (35.5, 2) {};
		\node [style=none] (27) at (39.5, 0) {};
		\node [style=none] (28) at (40, 0) {};
		\node [style=none] (29) at (41, 4) {};
		\node [style=none] (30) at (41.5, 4) {};
		\node [style=none] (31) at (42.5, 2) {};
		\node [style=none] (32) at (43, 2) {};
		\node [style=none] (33) at (47, 0) {};
		\node [style=none] (34) at (47.5, 0) {};
		\node [style=none] (35) at (48.5, 4) {};
		\node [style=none] (36) at (49, 4) {};
		\node [style=none] (37) at (50, 2) {};
		\node [style=none] (38) at (54.5, 0) {};
		\node [style=none] (39) at (55, 0) {};
		\node [style=none] (40) at (50.5, 2) {};
		\node [style=none] (41) at (56, 4) {};
		\node [style=none] (42) at (62, 0) {};
		\node [style=none] (43) at (62.5, 0) {};
		\node [style=none] (44) at (60, 4) {};
		\node [style=none] (45) at (57.5, 2) {};
		\node [style=none] (46) at (65.5, 2) {};
		\node [style=none] (47) at (69.5, 0) {};
		\node [style=none] (48) at (70, 0) {};
		\node [style=none] (49) at (67, 4) {};
		\node [style=none] (51) at (0.5, 0.75) {$\interval_1$};
		\node [style=none] (52) at (72.5, 2) {};
		\node [style=none] (53) at (73, 2) {};
		\node [style=none] (54) at (77.5, 4) {};
		\node [style=none] (55) at (77, 0) {};
		\node [style=none] (56) at (79, 0) {};
		\node [style=none] (57) at (80, 2) {};
		\node [style=none] (58) at (82.5, 2) {};
		\node [style=none] (59) at (84.5, 4) {};
		\node [style=none] (60) at (87, 4) {};
		\node [style=none] (61) at (86, 0) {};
		\node [style=none] (62) at (89.5, 2) {};
		\node [style=none] (63) at (94, 4) {};
		\node [style=none] (65) at (1.5, 2.75) {$\interval_2$};
		\node [style=none] (66) at (4.5, 4.75) {$\interval_3$};
		\node [style=none] (67) at (14, 2.75) {$\interval_4$};
		\node [style=none] (68) at (13, 4.75) {$\interval_5$};
		\node [style=none] (69) at (12, 0.75) {$\intervalFour_1$};
		\node [style=none] (70) at (19, 0.75) {$\interval_6$};
		\node [style=none] (71) at (20.5, 4.75) {$\interval_7$};
		\node [style=none] (72) at (22.5, 2.75) {$\intervalFour_2$};
		\node [style=none] (73) at (34.5, 0.75) {$\intervalFour_3$};
		\node [style=none] (74) at (43.5, 4.75) {$\intervalFour_4$};
		\node [style=none] (75) at (52.5, 2.75) {$\intervalFour_5$};
		\node [style=none] (76) at (62, 4.75) {$\intervalFour_6$};
		\node [style=none] (77) at (72, 0.75) {$\intervalFour_7$};
		\node [style=none] (78) at (79.5, 4.75) {$\intervalFour_8$};
		\node [style=none] (81) at (3.5, -2) {\textcolor{mynavy}{$\prefix{\Intervals}{\interval_{3}}$}};
		\node [style=none] (82) at (-2, 0) {};
		\node [style=none] (83) at (-1, 3) {};
		\node [style=none] (84) at (2, 4.75) {};
		\node [style=none] (85) at (-1, -1) {};
		\node [style=none] (86) at (6, -1) {};
		\node [style=none] (87) at (7.75, 2) {};
		\node [style=none] (88) at (10.75, 4) {};
		\node [style=none] (89) at (4, 5.5) {};
		\node [style=none] (90) at (10, 3) {};
		\node [style=none] (91) at (6.75, 5.5) {};
		\node [style=none] (92) at (83.25, 0.75) {$\interval_{n-2}$};
		\node [style=none] (93) at (87, 2.75) {$\interval_{n-1}$};
		\node [style=none] (94) at (88.5, 4.75) {$\interval_{n}$};
		\node [style=none] (95) at (78, 0) {};
		\node [style=none] (96) at (81, 2) {};
		\node [style=none] (97) at (85.25, 4) {};
		\node [style=none] (98) at (84, 3) {};
		\node [style=none] (99) at (82, 3) {};
		\node [style=none] (100) at (80, 1) {};
		\node [style=none] (101) at (79, 1) {};
		\node [style=none] (102) at (79, -1) {};
		\node [style=none] (103) at (86, -1) {};
		\node [style=none] (104) at (88.25, 0) {};
		\node [style=none] (105) at (95, 4.5) {};
		\node [style=none] (106) at (90.5, 5.5) {};
		\node [style=none] (107) at (88, 5.5) {};
		\node [style=none] (110) at (93.5, 2.5) {};
		\node [style=none] (111) at (85, -2) {\textcolor{myorange}{$\suffix{\Intervals}{\interval_{n-2}}$}};
	\end{pgfonlayer}
	\begin{pgfonlayer}{edgelayer}
		\draw [style=intBlue] (0.center) to (3.center);
		\draw [style=intYellow] (1.center) to (4.center);
		\draw [style=intRed] (2.center) to (7.center);
		\draw [style=interval] (5.center) to (9.center);
		\draw [style=intDashed] (6.center) to (10.center);
		\draw [style=interval] (8.center) to (12.center);
		\draw [style=interval] (11.center) to (15.center);
		\draw [style=interval] (13.center) to (17.center);
		\draw [style=intDashed] (14.center) to (19.center);
		\draw [style=interval] (16.center) to (21.center);
		\draw [style=interval] (18.center) to (23.center);
		\draw [style=interval] (20.center) to (25.center);
		\draw [style=intDashed] (22.center) to (27.center);
		\draw [style=interval] (24.center) to (29.center);
		\draw [style=interval] (26.center) to (31.center);
		\draw [style=interval] (28.center) to (33.center);
		\draw [style=intDashed] (30.center) to (35.center);
		\draw [style=interval] (32.center) to (37.center);
		\draw [style=interval] (34.center) to (38.center);
		\draw [style=interval] (36.center) to (41.center);
		\draw [style=interval] (39.center) to (42.center);
		\draw [style=intDashed] (40.center) to (45.center);
		\draw [style=interval] (43.center) to (47.center);
		\draw [style=intDashed] (44.center) to (49.center);
		\draw [style=interval] (46.center) to (52.center);
		\draw [style=intDashed] (48.center) to (55.center);
		\draw [style=interval] (53.center) to (57.center);
		\draw [style=intDashed] (54.center) to (59.center);
		\draw [style=intYellow] (56.center) to (61.center);
		\draw [style=intRed] (58.center) to (62.center);
		\draw [style=intBlue] (60.center) to (63.center);
		\draw [style=areaNavy, in=60, out=-180] (89.center) to (84.center);
		\draw [style=areaNavy, in=45, out=-105, looseness=0.75] (84.center) to (83.center);
		\draw [style=areaNavy, in=90, out=-135, looseness=0.75] (83.center) to (82.center);
		\draw [style=areaNavy, in=-180, out=-90] (82.center) to (85.center);
		\draw [style=areaNavy, in=-180, out=0, looseness=0.50] (85.center) to (86.center);
		\draw [style=areaNavy, in=-90, out=0] (86.center) to (87.center);
		\draw [style=areaNavy, in=180, out=90, looseness=1.25] (87.center) to (90.center);
		\draw [style=areaNavy, in=-90, out=0] (90.center) to (88.center);
		\draw [style=areaNavy, in=0, out=90, looseness=0.75] (88.center) to (91.center);
		\draw [style=areaNavy, in=0, out=180, looseness=0.75] (91.center) to (89.center);
		\draw [style=areaOrange] (102.center) to (103.center);
		\draw [style=areaOrange, in=210, out=0] (103.center) to (104.center);
		\draw [style=areaOrange, in=0, out=90, looseness=0.50] (105.center) to (106.center);
		\draw [style=areaOrange] (106.center) to (107.center);
		\draw [style=areaOrange, in=60, out=-180, looseness=0.75] (107.center) to (97.center);
		\draw [style=areaOrange, in=0, out=-120, looseness=1.25] (97.center) to (98.center);
		\draw [style=areaOrange] (98.center) to (99.center);
		\draw [style=areaOrange, in=90, out=-180, looseness=1.25] (99.center) to (96.center);
		\draw [style=areaOrange, in=0, out=-90, looseness=1.25] (96.center) to (100.center);
		\draw [style=areaOrange] (100.center) to (101.center);
		\draw [style=areaOrange, in=90, out=180, looseness=1.25] (101.center) to (95.center);
		\draw [style=areaOrange, in=180, out=-90, looseness=1.25] (95.center) to (102.center);
		\draw [style=areaOrange, in=30, out=-90, looseness=1.50] (105.center) to (110.center);
		\draw [style=areaOrange] (110.center) to (104.center);
	\end{pgfonlayer}
\end{tikzpicture}
 
\end{small}

\caption{An example illustrating the assumptions of Lemma~\ref{lem:partColor} for $k=3$.}
\label{fig:1}
\end{figure}

Recall that intervals in $\IntervalsFour=\set{\intervalFour_1,\ldots,\intervalFour_{k^2-1}}$ satisfy $\intervalFour_i<\intervalFour_{i+1}$ for $i \in \num{k^2-2}$. We first partition $\IntervalsFour$ into $2k-1$ sets of unit intervals as follows: 
$
\IntervalsFour=\IntervalsFour_1\cup\set{\intervalFive_1}\cup\IntervalsFour_2\cup\ldots\cup\set{\intervalFive_{k-1}}\cup\IntervalsFour_{k},
$
where $\IntervalsFour_i=\{ \intervalFour_{(i-1)k+1}, \ldots,\intervalFour_{ik-1} \}$ for $i \in \num{k}$ and $\intervalFive_i=\intervalFour_{ik}$ for $i \in \num{k-1}$. Observe that $\size{\IntervalsFour_i}=k-1$. Thus, the intervals of the partition are ordered as follows:
$\prefix{\Intervals}{\interval_{k}} \linearOrder \IntervalsFour_1 < \{ \intervalFive_1 \} < \IntervalsFour_2 < \ldots < \{ \intervalFive_{k-1} \} < \IntervalsFour_{k} \linearOrder \suffix{\Intervals}{\interval_{n-k+1}}$.
Next we partition $\Intervals$ into $k$ parts:
$
\Intervals=\Intervals_1\cup\Intervals_2\cup\ldots\cup\Intervals_{k}
$,
in the way that the following holds:
$\prefix{\Intervals}{\interval_{k}} \subseteq \Intervals_1 \linearOrder \{ \intervalFive_1 \} \linearOrder \Intervals_2 \linearOrder \ldots \linearOrder \{ \intervalFive_{k-1} \} \linearOrder \Intervals_{k} \supseteq \suffix{\Intervals}{\interval_{n-k+1}}$.
This partition is pictured in Figure~\ref{fig:2}. 
\begin{figure}[hbpt!]

\begin{small}

\centering

\begin{tikzpicture}[xscale={2\textwidth/94}, yscale=15]
	\begin{pgfonlayer}{nodelayer}
		\node [style=none] (0) at (0, 0) {};
		\node [style=none] (1) at (1, 2) {};
		\node [style=none] (2) at (4, 4) {};
		\node [style=none] (3) at (7, 0) {};
		\node [style=none] (4) at (8, 2) {};
		\node [style=none] (5) at (8.5, 2) {};
		\node [style=none] (6) at (9.5, 0) {};
		\node [style=none] (7) at (11, 4) {};
		\node [style=none] (8) at (11.5, 4) {};
		\node [style=none] (9) at (15.5, 2) {};
		\node [style=none] (10) at (16.5, 0) {};
		\node [style=none] (11) at (17.5, 0) {};
		\node [style=none] (12) at (18.5, 4) {};
		\node [style=none] (13) at (19, 4) {};
		\node [style=none] (14) at (20.5, 2) {};
		\node [style=none] (15) at (24.5, 0) {};
		\node [style=none] (16) at (25, 0) {};
		\node [style=none] (17) at (26, 4) {};
		\node [style=none] (18) at (26.5, 4) {};
		\node [style=none] (19) at (27.5, 2) {};
		\node [style=none] (20) at (28.5, 2) {};
		\node [style=none] (21) at (32, 0) {};
		\node [style=none] (22) at (33, 0) {};
		\node [style=none] (23) at (33.5, 4) {};
		\node [style=none] (24) at (34.5, 4) {};
		\node [style=none] (25) at (35.5, 2) {};
		\node [style=none] (26) at (36.5, 2) {};
		\node [style=none] (27) at (40, 0) {};
		\node [style=none] (28) at (41, 0) {};
		\node [style=none] (29) at (41.5, 4) {};
		\node [style=none] (30) at (42, 4) {};
		\node [style=none] (31) at (43.5, 2) {};
		\node [style=none] (32) at (44, 2) {};
		\node [style=none] (33) at (48, 0) {};
		\node [style=none] (34) at (48.5, 0) {};
		\node [style=none] (35) at (49, 4) {};
		\node [style=none] (36) at (49.5, 4) {};
		\node [style=none] (37) at (51, 2) {};
		\node [style=none] (38) at (55.5, 0) {};
		\node [style=none] (39) at (56, 0) {};
		\node [style=none] (40) at (53, 2) {};
		\node [style=none] (41) at (56.5, 4) {};
		\node [style=none] (42) at (63, 0) {};
		\node [style=none] (43) at (64.5, 0) {};
		\node [style=none] (44) at (61.5, 4) {};
		\node [style=none] (45) at (60, 2) {};
		\node [style=none] (46) at (65.5, 2) {};
		\node [style=none] (47) at (71.5, 0) {};
		\node [style=none] (49) at (68.5, 4) {};
		\node [style=none] (51) at (1.5, 0.75) {$\interval_1$};
		\node [style=none] (52) at (72.5, 2) {};
		\node [style=none] (53) at (73, 2) {};
		\node [style=none] (56) at (78.5, 0) {};
		\node [style=none] (57) at (80, 2) {};
		\node [style=none] (58) at (80.5, 2) {};
		\node [style=none] (60) at (85, 4) {};
		\node [style=none] (61) at (85.5, 0) {};
		\node [style=none] (62) at (87.5, 2) {};
		\node [style=none] (63) at (92, 4) {};
		\node [style=none] (65) at (2.5, 2.75) {$\interval_2$};
		\node [style=none] (66) at (5.5, 4.75) {$\interval_3$};
		\node [style=none] (67) at (14, 2.75) {$\interval_4$};
		\node [style=none] (68) at (13, 4.75) {$\interval_5$};
		\node [style=none] (69) at (11.25, -0.75) {\textcolor{myviolet}{$\intervalFour_1$}};
		\node [style=none] (72) at (22.5, 2.75) {\textcolor{myviolet}{$\intervalFour_2$}};
		\node [style=none] (73) at (36.5, -1.25) {\textcolor{mygreen}{$\intervalFour_3\!=\!\intervalFive_1$}};
		\node [style=none] (74) at (44, 4.75) {\textcolor{myviolet}{$\intervalFour_4$}};
		\node [style=none] (75) at (58.5, 1.25) {\textcolor{myviolet}{$\intervalFour_5$}};
		\node [style=none] (76) at (65, 5.25) {\textcolor{mygreen}{$\intervalFour_6\!=\!\intervalFive_2$}};
		\node [style=none] (81) at (16.75, -1.75) {\textcolor{myviolet}{$\IntervalsFour'_1$}};
		\node [style=none] (82) at (10, 1) {};
		\node [style=none] (83) at (15.5, 1) {};
		\node [style=none] (84) at (22, 3.5) {};
		\node [style=none] (85) at (9, 0) {};
		\node [style=none] (86) at (11.75, -1.5) {};
		\node [style=none] (87) at (17, 0) {};
		\node [style=none] (88) at (27, 1) {};
		\node [style=none] (89) at (23, 3.5) {};
		\node [style=none] (90) at (18, 1) {};
		\node [style=none] (91) at (27.75, 2) {};
		\node [style=none] (92) at (82.75, 0.75) {$\interval_{n-2}$};
		\node [style=none] (93) at (85, 2.75) {$\interval_{n-1}$};
		\node [style=none] (94) at (86.5, 4.75) {$\interval_{n}$};
		\node [style=none] (103) at (85.5, -1) {};
		\node [style=none] (104) at (88, 0) {};
		\node [style=none] (105) at (93, 4) {};
		\node [style=none] (106) at (88, 5.75) {};
		\node [style=none] (107) at (78.5, 5.75) {};
		\node [style=none] (112) at (16, -1) {};
		\node [style=none] (113) at (27, 3) {};
		\node [style=none] (115) at (59.75, 0.5) {};
		\node [style=none] (116) at (60.75, 2) {};
		\node [style=none] (117) at (59.75, 3) {};
		\node [style=none] (119) at (43.5, 5.5) {};
		\node [style=none] (120) at (44.5, 5.5) {};
		\node [style=none] (121) at (41.75, 4) {};
		\node [style=none] (122) at (43.25, 3) {};
		\node [style=none] (123) at (52, 2) {};
		\node [style=none] (124) at (50, 2.75) {};
		\node [style=none] (126) at (49.25, 4) {};
		\node [style=none] (127) at (51, 3) {};
		\node [style=none] (128) at (48.25, 5) {};
		\node [style=none] (129) at (53.25, 1) {};
		\node [style=none] (130) at (58, 0.5) {};
		\node [style=none] (131) at (10.75, -1.5) {};
		\node [style=none] (132) at (58.25, 3.75) {\textcolor{myviolet}{$\IntervalsFour'_2$}};
		\node [style=none] (133) at (78.25, 4) {\textcolor{myviolet}{$\IntervalsFour'_3\!=\!\emptyset$}};
		\node [style=none] (134) at (-1, 0) {};
		\node [style=none] (135) at (-0.25, 2) {};
		\node [style=none] (136) at (2, 4) {};
		\node [style=none] (137) at (5, 5.75) {};
		\node [style=none] (138) at (6, 5.75) {};
		\node [style=none] (139) at (13.5, 5.75) {};
		\node [style=none] (140) at (0, -1) {};
		\node [style=none] (141) at (15, -2.75) {};
		\node [style=none] (142) at (16.25, -2.75) {};
		\node [style=none] (143) at (26, 5.75) {};
		\node [style=none] (144) at (33.75, 4) {};
		\node [style=none] (145) at (35.75, 2) {};
		\node [style=none] (149) at (32.5, 0) {};
		\node [style=none] (150) at (30.5, -1.5) {};
		\node [style=none] (151) at (18, -2.75) {};
		\node [style=none] (152) at (29, -2.75) {\textcolor{mygreen}{$\Intervals_1$}};
		\node [style=none] (153) at (36.25, 2) {};
		\node [style=none] (154) at (34.25, 4) {};
		\node [style=none] (155) at (40.5, 0) {};
		\node [style=none] (157) at (43.5, 5.75) {};
		\node [style=none] (158) at (44.5, 5.75) {};
		\node [style=none] (159) at (41.5, -1) {};
		\node [style=none] (160) at (54.75, -1) {};
		\node [style=none] (161) at (62.25, -1) {};
		\node [style=none] (162) at (63.5, 0) {};
		\node [style=none] (164) at (59.25, 4.75) {};
		\node [style=none] (165) at (52.5, -2) {\textcolor{mygreen}{$\Intervals_2$}};
		\node [style=none] (166) at (64, 0) {};
		\node [style=none] (168) at (65.5, 3) {};
		\node [style=none] (169) at (65, -1) {};
		\node [style=none] (171) at (92, 2.75) {};
		\node [style=none] (172) at (76.5, -2) {\textcolor{mygreen}{$\Intervals_3$}};
	\end{pgfonlayer}
	\begin{pgfonlayer}{edgelayer}
		\draw [style=intBlue] (0.center) to (3.center);
		\draw [style=intYellow] (1.center) to (4.center);
		\draw [style=intRed] (2.center) to (7.center);
		\draw [style=interval] (5.center) to (9.center);
		\draw [style=intDashedViolet] (6.center) to (10.center);
		\draw [style=interval] (8.center) to (12.center);
		\draw [style=interval] (11.center) to (15.center);
		\draw [style=interval] (13.center) to (17.center);
		\draw [style=intDashedViolet] (14.center) to (19.center);
		\draw [style=interval] (16.center) to (21.center);
		\draw [style=interval] (18.center) to (23.center);
		\draw [style=interval] (20.center) to (25.center);
		\draw [style=intDashedGreen] (22.center) to (27.center);
		\draw [style=interval] (24.center) to (29.center);
		\draw [style=interval] (26.center) to (31.center);
		\draw [style=interval] (28.center) to (33.center);
		\draw [style=intDashedViolet] (30.center) to (35.center);
		\draw [style=interval] (32.center) to (37.center);
		\draw [style=interval] (34.center) to (38.center);
		\draw [style=interval] (36.center) to (41.center);
		\draw [style=interval] (39.center) to (42.center);
		\draw [style=intDashedViolet] (40.center) to (45.center);
		\draw [style=interval] (43.center) to (47.center);
		\draw [style=intDashedGreen] (44.center) to (49.center);
		\draw [style=interval] (46.center) to (52.center);
		\draw [style=interval] (53.center) to (57.center);
		\draw [style=intYellow] (56.center) to (61.center);
		\draw [style=intRed] (58.center) to (62.center);
		\draw [style=intBlue] (60.center) to (63.center);
		\draw [style=areaViolet, in=0, out=-180, looseness=0.75] (89.center) to (84.center);
		\draw [style=areaViolet, in=0, out=180] (84.center) to (83.center);
		\draw [style=areaViolet, in=90, out=-180] (82.center) to (85.center);
		\draw [style=areaViolet, in=180, out=90, looseness=1.25] (87.center) to (90.center);
		\draw [style=areaViolet, in=-90, out=0, looseness=1.25] (88.center) to (91.center);
		\draw [style=areaViolet, in=15, out=-90, looseness=1.50] (87.center) to (112.center);
		\draw [style=areaViolet, in=0, out=-165, looseness=0.50] (112.center) to (86.center);
		\draw [style=areaViolet] (88.center) to (90.center);
		\draw [style=areaViolet] (83.center) to (82.center);
		\draw [style=areaViolet, in=-15, out=90, looseness=1.25] (91.center) to (113.center);
		\draw [style=areaViolet, in=360, out=165, looseness=0.50] (113.center) to (89.center);
		\draw [style=areaViolet] (120.center) to (119.center);
		\draw [style=areaViolet, in=90, out=-180, looseness=1.25] (119.center) to (121.center);
		\draw [style=areaViolet, in=-180, out=-90, looseness=1.25] (121.center) to (122.center);
		\draw [style=areaViolet] (122.center) to (124.center);
		\draw [style=areaViolet, in=90, out=0] (124.center) to (123.center);
		\draw [style=areaViolet, in=-90, out=0, looseness=1.25] (115.center) to (116.center);
		\draw [style=areaViolet, in=0, out=90, looseness=1.50] (116.center) to (117.center);
		\draw [style=areaViolet, in=-90, out=180, looseness=1.25] (127.center) to (126.center);
		\draw [style=areaViolet, in=-15, out=90, looseness=1.25] (126.center) to (128.center);
		\draw [style=areaViolet, in=0, out=165] (128.center) to (120.center);
		\draw [style=areaViolet, in=180, out=0] (127.center) to (117.center);
		\draw [style=areaViolet] (130.center) to (115.center);
		\draw [style=areaViolet, in=165, out=-90, looseness=1.25] (123.center) to (129.center);
		\draw [style=areaViolet, in=-15, out=180, looseness=0.75] (130.center) to (129.center);
		\draw [style=areaViolet] (86.center) to (131.center);
		\draw [style=areaViolet, in=-90, out=180, looseness=1.25] (131.center) to (85.center);
		\draw [style=areaGreen, in=-165, out=0, looseness=0.25] (151.center) to (150.center);
		\draw [style=areaGreen, in=-90, out=15] (150.center) to (149.center);
		\draw [style=areaGreen, in=0, out=90, looseness=0.50] (144.center) to (143.center);
		\draw [style=areaGreen, in=360, out=180] (143.center) to (139.center);
		\draw [style=areaGreen] (139.center) to (138.center);
		\draw [style=areaGreen] (138.center) to (137.center);
		\draw [style=areaGreen, in=45, out=-180] (137.center) to (136.center);
		\draw [style=areaGreen, in=45, out=-135] (136.center) to (135.center);
		\draw [style=areaGreen, in=90, out=-135] (135.center) to (134.center);
		\draw [style=areaGreen, in=165, out=-90, looseness=1.25] (134.center) to (140.center);
		\draw [style=areaGreen, in=180, out=-15, looseness=0.50] (140.center) to (141.center);
		\draw [style=areaGreen] (141.center) to (142.center);
		\draw [style=areaGreen] (142.center) to (151.center);
		\draw [style=areaGreen, in=-90, out=90, looseness=1.25] (145.center) to (144.center);
		\draw [style=areaGreen, in=90, out=-90, looseness=1.25] (154.center) to (153.center);
		\draw [style=areaGreen, in=-90, out=90] (155.center) to (153.center);
		\draw [style=areaGreen, in=90, out=-90, looseness=1.25] (145.center) to (149.center);
		\draw [style=areaGreen] (157.center) to (158.center);
		\draw [style=areaGreen, in=180, out=-90] (155.center) to (159.center);
		\draw [style=areaGreen] (160.center) to (159.center);
		\draw [style=areaGreen] (160.center) to (161.center);
		\draw [style=areaGreen, in=-90, out=0, looseness=1.25] (161.center) to (162.center);
		\draw [style=areaGreen, in=360, out=135, looseness=0.25] (164.center) to (158.center);
		\draw [style=areaGreen, in=-45, out=90, looseness=0.50] (162.center) to (164.center);
		\draw [style=areaGreen, in=-90, out=165, looseness=1.25] (169.center) to (166.center);
		\draw [style=areaGreen, in=180, out=0] (168.center) to (107.center);
		\draw [style=areaGreen] (169.center) to (103.center);
		\draw [style=areaGreen, in=210, out=0] (103.center) to (104.center);
		\draw [style=areaGreen, in=-135, out=30, looseness=0.50] (104.center) to (171.center);
		\draw [style=areaGreen, in=-90, out=45, looseness=0.75] (171.center) to (105.center);
		\draw [style=areaGreen] (106.center) to (107.center);
		\draw [style=areaGreen, in=180, out=90, looseness=1.25] (166.center) to (168.center);
		\draw [style=areaGreen, in=-180, out=90, looseness=0.50] (154.center) to (157.center);
		\draw [style=areaGreen, in=0, out=90, looseness=0.75] (105.center) to (106.center);
	\end{pgfonlayer}
\end{tikzpicture}
 
\end{small}
\caption{An example illustrating partitioning the intervals, $k=3$.}
\label{fig:2}
\end{figure}
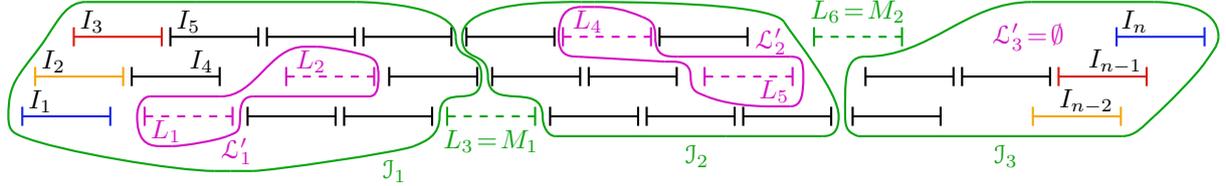
We now enlarge $\Intervals$ by adding to each $\Intervals_i$ some dummy intervals.
More precisely, to each $\Intervals_i \subseteq \Intervals$ we add as many intervals of $\IntervalsFour_i \subseteq \IntervalsFour$  as to make the number of intervals in such extended $\Intervals_i$ a multiple of $k$.
More formally, for each $i \in \num{k}$ let $\IntervalsFour'_i\subseteq\IntervalsFour_i$ be an arbitrary subset of size $\size{\IntervalsFour'_i} = k p_i -\size{\Intervals_i} \in \num{k}$, where $k\bra{p_i-1}<\size{\Intervals_i}\leqslant k p_i$.
Then we define an extension $\IntervalsTwo_i$ of $\Intervals_i$ as $\IntervalsTwo_i\eqdef\Intervals_i\cup\IntervalsFour'_i$ and an extension $\IntervalsTwo$ of $\Intervals$ as
$
\IntervalsTwo\eqdef\IntervalsTwo_1\cup\IntervalsTwo_1\cup\ldots\cup\IntervalsTwo_{k}=\Intervals\cup\IntervalsFour'_1\cup\IntervalsFour'_2\cup\ldots\cup\IntervalsFour'_{k}.
$ The sets $\IntervalsFour'_i$ used for extending $\Intervals_i$ are also shown in Figure~\ref{fig:2}. 
From definition, for $\IntervalsTwo$ and $\IntervalsTwo_1,\IntervalsTwo_2,\ldots,\IntervalsTwo_{k}$ the following properties hold (note that since $\Intervals$ is connected as a graph, each $\Intervals_i$ is nonempty and as a consequence each $\IntervalsTwo_i$ is also nonempty):
\begin{enumerate}\item $\fClique{\IntervalsTwo\cup\set{\intervalFive_1,\ldots,\intervalFive_{k-1}}} \leqslant k$,
\item for each $i\in\num{k}$ we have $\size{\IntervalsTwo_i}= k p_i$ for some $p_i\in\NaturalNumbers\setminus\set{0}$,
\item $\prefix{\Intervals}{\interval_{k}} \subseteq \IntervalsTwo_1 \linearOrder \{ \intervalFive_1 \} \linearOrder \IntervalsTwo_2 \linearOrder \ldots \linearOrder \{ \intervalFive_{k-1} \} \linearOrder \IntervalsTwo_{k} \supseteq \suffix{\Intervals}{\interval_{n-k+1}}.$
\end{enumerate}

Let us now briefly shed some light on the main idea for the proof. First of all, rather than the proper coloring for $\Intervals$, we construct the proper coloring $\coloring$ for $\IntervalsTwo$ (which is obviously also proper for $\Intervals$). The construction of $\coloring$ starts by copying colors of $\coloringPrime$, so that $\coloring$ coincides with $\coloringPrime$ on $\prefix{\Intervals}{\interval_k}$. Let us now consider a block $\IntervalsTwo_i=\set{\intervalTwo^i_0 \linearOrder \ldots \linearOrder \intervalTwo^i_{s_i}}$, $i \in \num{k}$, $s_i=\size{\IntervalsTwo_i}$. Suppose that $\coloring$ is already defined on $\prefix{\IntervalsTwo_i}{\intervalTwo^i_{k-1}}$ (the first $k$ intervals of $\IntervalsTwo_i$). Note that coloring $\coloring$ on $\prefix{\IntervalsTwo_i}{\intervalTwo^i_{k-1}}$ defines a permutation of colors. We use the coloring modulo introduced in Observation~\ref{obs:gcc} to copy this permutation to $\suffix{\IntervalsTwo_i}{\intervalTwo^i_{s_i-k}}$ (the last $k$ intervals of $\IntervalsTwo_i$). This works because $k | s_i$. Suppose that this permutation already coincides with the permutation given by $\coloringDouble$ on the last $i-1$ positions. We now use the slack given by $\intervalFive_i$ to define $\coloring$ on $\prefix{\IntervalsTwo_{i+1}}{\intervalTwo^{i+1}_{k-1}}$ in such a way that the corresponding permutation agrees with the permutation given by $\coloringDouble$ on the last $i$ positions. To be more precise, we apply one step of the insertion sort on the permutation to bring one more color to the appriopriate position. This is possible because of the slack given by $\intervalFive_i$. We now make this idea formal in the following claim.
\begin{claim}
\label{clm:insertionSort}
For each $j\in\num{k}$ there is a proper coloring $\coloring:\IntervalsTwo_1\cup\IntervalsTwo_2\cup\ldots\cup\IntervalsTwo_j\rightarrow [k]$ which satisfies the following invariants:
\begin{enumerate}\item $\fColoring{\intervalTwo^1_i} = \fColoringPrime{\intervalTwo^1_i}$ for each $i \in \set{0, 1, \ldots, k-1}$, \label{clmA:cPrim}
\item $\coloring$ is a bijection on $\suffix{\IntervalsTwo_j}{\intervalTwo^j_{s_j-k}}$ , \label{clmA:bijection}
\item $\fColoring{\intervalTwo^j_{s_j-i+1}} = \fColoringDouble{\intervalTwo^k_{s_k-i+1}}$ for each $i \in \set{2, \ldots, j}$. \label{clmA:cDouble}
\end{enumerate}
\end{claim}
\begin{proof}[Proof of Claim \ref{clm:insertionSort}]
The proof proceedes by induction on $j \in \num{k}$.
For $j=1$ we define $\coloring:\IntervalsTwo_1\rightarrow \num{k}$ in the following way:
$
\fColoring{\intervalTwo^1_{i}} \, \eqdef \, \fColoringPrime{\intervalTwo^1_{\bra{i \!\!\! \mod k}}} \qquad \text{for each} \ i\in\set{0,1,\ldots,s_1-1}.
$
Note that $\coloring$ is a proper $k$-coloring of $\IntervalsTwo_1$ due to Observation~\ref{obs:gcc}. Invariant \ref{clmA:bijection} also follows from the definition of $\coloring$ and from the fact that $\coloringPrime$ is bijection on $\prefix{\IntervalsTwo_1}{\intervalTwo^1_{k-1}}$ (the first $k$ intervals of $\IntervalsTwo_1$).
Invariant \ref{clmA:cDouble} is trivially satisfied.

\smallskip

We move on to proving the inductive step, i.e., we assume that Claim~\ref{clm:insertionSort} is true for $j-1$, where $1 \leqslant {j-1} < k$, and we prove it for $j$.
To this end, we extend the coloring $\coloring:\IntervalsTwo_1\cup\ldots\cup\IntervalsTwo_{j-1}\rightarrow\num{k}$ to the coloring of $\IntervalsTwo_{j}$.
We proceed in two steps.
In the first step we define $\coloring$ on $\prefix{\IntervalsTwo_j}{\intervalTwo^j_{k-1}}$. In the second step we define  $\coloring$ on the remaining suffix of $\IntervalsTwo_j$, i.e., on $\suffix{\IntervalsTwo_j}{\intervalTwo^j_{k}}$.

\medskip

\noindent \emph{Step\,1.}  In this step the goal is to increase the compatibility with the coloring $\coloringDouble$ (see Invariant~\ref{clmA:cDouble}).
To be more precise, we need the color $\gamma \eqdef \fColoringDouble{\intervalTwo^k_{s_k-j+1}}\in\num{k}$ to be assigned to the interval $\intervalTwo^j_{k-j+1}$.
Let $i_0\in\set{0,1,\ldots,k-1}$ be the position where $\gamma$ is assigned on the set $\set{\intervalTwo^{j-1}_{s_{j-1}-k} \linearOrder \ldots \linearOrder \intervalTwo^{j-1}_{s_{j-1}-1}}$, i.e., $\fColoring{\intervalTwo^{j-1}_{\bra{s_{j-1}-k}+i_0}}=\gamma$. Such position extsts since by Invariant~\ref{clmA:bijection} coloring $\coloring$ is a bijection on $\suffix{\IntervalsTwo_{j-1}}{\intervalTwo^{j-1}_{s_{j-1}-k}}$.
Moreover, $i_0 \leqslant k - j + 1$, because  $\coloringDouble$ is a bijection and $\fColoring{\intervalTwo^{j-1}_{s_{j-1}-i+1}} = \fColoringDouble{\intervalTwo^k_{s_k-i+1}}$ for each $i \in \set{2, \ldots, j-1}$ (Invariant \ref{clmA:cDouble} for $j-1$), so $\gamma = \fColoringDouble{\intervalTwo^k_{s_k-j+1}} \notin \fColoringDouble{\set{\intervalTwo^{k}_{s_{k}-j+2}, \ldots, \intervalTwo^{k}_{s_{k}-1}}}= \fColoring{\set{\intervalTwo^{j-1}_{s_{j-1}-j+2}, \ldots, \intervalTwo^{j-1}_{s_{j-1}-1}}}$.
We now define the coloring $\coloring$ on the first $k$ intervals of $\IntervalsTwo_j$, which is supposed to immitate one step of the insertion sort on the current permutation of colors:
\begin{equation}
\label{eq:defCj}
\fColoring{\intervalTwo^{j}_i} \ \eqdef \ 
\begin{cases}
\fColoring{\intervalTwo^{j-1}_{(s_{j-1}-k)+i}} & 0 \leqslant i < i_0, \\
\fColoring{\intervalTwo^{j-1}_{(s_{j-1}-k)+i+1}} & i_0 \leqslant i < k - j + 1, \\
\fColoring{\intervalTwo^{j-1}_{(s_{j-1}-k)+i_0}} & i = k - j + 1, \\
\fColoring{\intervalTwo^{j-1}_{(s_{j-1}-k)+i}} & k - j + 1 < i \leqslant k - 1.
\end{cases}
\end{equation}
An example of one step of the insertion sort defined by Equation (\ref{eq:defCj}) for $j = 2$, $k = 5$, and $i_0=2$ is presented in Figure~\ref{fig:3}.
\begin{figure}[hbpt]

\begin{footnotesize}

\centering

\begin{tikzpicture}[xscale={2\textwidth/55}, yscale=18]
	\begin{pgfonlayer}{nodelayer}
		\node [style=none] (0) at (-3, 0) {};
		\node [style=none] (1) at (5, 0) {};
		\node [style=none] (2) at (-1, 2) {};
		\node [style=none] (3) at (7, 2) {};
		\node [style=none] (4) at (4, 4) {};
		\node [style=none] (5) at (12, 4) {};
		\node [style=none] (6) at (6, 6) {};
		\node [style=none] (7) at (14, 6) {};
		\node [style=none] (8) at (8, 8) {};
		\node [style=none] (9) at (16, 8) {};
		\node [style=none] (10) at (9, -2) {};
		\node [style=none] (11) at (17, -2) {};
		\node [style=none] (12) at (11, 0) {};
		\node [style=none] (13) at (19, 0) {};
		\node [style=none] (14) at (13, 2) {};
		\node [style=none] (15) at (21, 2) {};
		\node [style=none] (16) at (15, 4) {};
		\node [style=none] (17) at (23, 4) {};
		\node [style=none] (18) at (18, 6) {};
		\node [style=none] (19) at (26, 6) {};
		\node [style=none] (20) at (20, 8) {};
		\node [style=none] (21) at (28, 8) {};
		\node [style=none] (22) at (31, 4) {{\huge $\cdots$}};
		\node [style=none] (23) at (34, 0) {};
		\node [style=none] (24) at (42, 0) {};
		\node [style=none] (25) at (37, 2) {};
		\node [style=none] (26) at (45, 2) {};
		\node [style=none] (27) at (38, 4) {};
		\node [style=none] (28) at (46, 4) {};
		\node [style=none] (29) at (43, 6) {};
		\node [style=none] (30) at (51, 6) {};
		\node [style=none] (31) at (44, 8) {};
		\node [style=none] (32) at (52, 8) {};
		\node [style=none] (33) at (13, -1.25) {$\intervalFive_{j-1}$};
		\node [style=none] (34) at (12, 8.75) {$\intervalTwo^{j-1}_{s_{j-1}-1}$};
		\node [style=none] (35) at (10, 6.75) {$\intervalTwo^{j-1}_{s_{j-1}-2}$};
		\node [style=none] (36) at (8, 4.75) {$\intervalTwo^{j-1}_{s_{j-1}-3}$};
		\node [style=none] (37) at (3, 2.75) {$\intervalTwo^{j-1}_{s_{j-1}-4}$};
		\node [style=none] (38) at (1, 0.75) {$\intervalTwo^{j-1}_{s_{j-1}-5}$};
		\node [style=none] (39) at (24, 8.75) {$\intervalTwo^{j}_{4}$};
		\node [style=none] (40) at (22, 6.75) {$\intervalTwo^{j}_{3}$};
		\node [style=none] (41) at (19, 4.75) {$\intervalTwo^{j}_{2}$};
		\node [style=none] (42) at (17, 2.75) {$\intervalTwo^{j}_{1}$};
		\node [style=none] (43) at (15, 0.75) {$\intervalTwo^{j}_{0}$};
		\node [style=none] (44) at (48, 8.75) {$\intervalTwo^{k}_{s_{k}-1}$};
		\node [style=none] (45) at (47, 6.75) {$\intervalTwo^{k}_{s_{k}-2}$};
		\node [style=none] (46) at (42, 4.75) {$\intervalTwo^{k}_{s_{k}-3}$};
		\node [style=none] (47) at (41, 2.75) {$\intervalTwo^{k}_{s_{k}-4}$};
		\node [style=none] (48) at (39, 0.75) {$\intervalTwo^{k}_{s_{k}-5}$};
		\node [style=none] (49) at (17, 8) {};
		\node [style=none] (50) at (19, 8) {};
		\node [style=none] (51) at (15, 6) {};
		\node [style=none] (52) at (17, 6) {};
		\node [style=none] (53) at (12.25, 3.5) {};
		\node [style=none] (54) at (12.75, 2.5) {};
		\node [style=none] (55) at (8, 1.5) {};
		\node [style=none] (56) at (10, 0.5) {};
		\node [style=none] (57) at (6, 0.5) {};
		\node [style=none] (58) at (14, 3.5) {};
	\end{pgfonlayer}
	\begin{pgfonlayer}{edgelayer}
		\draw [style=intViolet] (0.center) to (1.center);
		\draw [style=intYellow] (2.center) to (3.center);
		\draw [style=intGreen] (4.center) to (5.center);
		\draw [style=intBlue] (6.center) to (7.center);
		\draw [style=intRed] (8.center) to (9.center);
		\draw [style=intDashed] (10.center) to (11.center);
		\draw [style=intYellow] (12.center) to (13.center);
		\draw [style=intGreen] (14.center) to (15.center);
		\draw [style=intViolet] (16.center) to (17.center);
		\draw [style=intBlue] (18.center) to (19.center);
		\draw [style=intRed] (20.center) to (21.center);
		\draw [style=intGreen] (23.center) to (24.center);
		\draw [style=intYellow] (25.center) to (26.center);
		\draw [style=intViolet] (27.center) to (28.center);
		\draw [style=intBlue] (29.center) to (30.center);
		\draw [style=intRed] (31.center) to (32.center);
		\draw [style=arrDashedRed] (49.center) to (50.center);
		\draw [style=arrDashedBlue] (51.center) to (52.center);
		\draw [style=arrDashedGreen] (53.center) to (54.center);
		\draw [style=arrDashedYellow] (55.center) to (56.center);
		\draw [style=arrDashedViolet] (57.center) to (58.center);
	\end{pgfonlayer}
\end{tikzpicture}
 
\end{footnotesize}
\caption{Step 1 of constructing coloring $\coloring$ on $\IntervalsTwo_j$ for $j = 2$, $k = 5$, and $i_0=2$.}
\label{fig:3}
\end{figure}

First, we prove that $\coloring$ is a proper coloring on $\IntervalsTwo'$ where
$
\IntervalsTwo' \eqdef \IntervalsTwo_1 \cup \IntervalsTwo_2 \cup \ldots \cup \IntervalsTwo_{j-1} \cup \prefix{\IntervalsTwo_j}{\intervalTwo^j_{k-1}}.
$
By inductive assumption $\coloring$ is proper on $ \IntervalsTwo_1 \cup \IntervalsTwo_2 \cup \ldots \cup \IntervalsTwo_{j-1}$. To complete the proof it suffices to show that for any $\intervalTwo_i^j \in \prefix{\IntervalsTwo_j}{\intervalTwo^j_{k-1}}$ and  $\intervalTwo^p_q \linearOrder \intervalTwo_i^j$ with $\intervalTwo_i^j \cap \intervalTwo^p_q \neq \emptyset$, $\intervalTwo_i^j$ and $\intervalTwo^p_q$ have different colors, i.e., $\fColoring{\intervalTwo^p_q} \neq \fColoring{\intervalTwo_i^j}$.
First, let us assume that $\intervalTwo^p_q \linearOrder \intervalTwo^{j-1}_{(s_{j-1}-k)+i+2}$.
Then, because
$
\intervalTwo^p_q \linearOrder \intervalTwo^{j-1}_{(s_{j-1}-k)+i+2} \linearOrder \ldots \linearOrder \intervalTwo^{j-1}_{s_{j-1}-1} \linearOrder \intervalFive_{j-1} \linearOrder \intervalTwo^{j}_{1} \linearOrder \ldots \linearOrder \intervalTwo^{j}_{i}
$, by Observation~\ref{obs:extremalIntervals} the extreme intervals are disjoint, i.e.,
$
\intervalTwo^p_q < \intervalTwo^{j}_{i}.
$
On the other hand, because $\coloring$ is a bijection on $\suffix{\IntervalsTwo_{j-1}}{\intervalTwo^{j-1}_{s_{j-1}-k}}$  and by definition in Equation~(\ref{eq:defCj}), the remaining intervals $\intervalTwo^{j-1}_{(s_{j-1}-k)+i+2},  \ldots , \intervalTwo^{j-1}_{s_{j-1}-1}, \intervalTwo^{j}_{1}, \ldots, \intervalTwo^{j}_{i-1}$, are assigned different colors than $\intervalTwo^{j}_{i}$.

Now we prove the following counterparts of Invariant~\ref{clmA:bijection} and Invariant~\ref{clmA:cDouble} for $\prefix{\IntervalsTwo_j}{\intervalTwo^j_{k-1}}$:
\begin{enumerate}\setcounter{enumi}{1}
\item $\coloring$ is a bijection on $\prefix{\IntervalsTwo_j}{\intervalTwo^j_{k-1}}$ , and \label{clmAPrim:bijection}
\item $\fColoring{\intervalTwo^j_{i}} = \fColoringDouble{\intervalTwo^k_{\bra{s_k-k}+i}}$ for each $i \in \set{k-j+1, \ldots, k-1}$. \label{clmAPrim:cDouble}
\end{enumerate}
The statement \ref{clmAPrim:bijection} is a consequence of the fact that $\coloring$ is a~bijection on $\suffix{\IntervalsTwo_{j-1}}{\intervalTwo^{j-1}_{s_{j-1}-k}}$, and that the mapping defined by Equation~(\ref{eq:defCj}) is a permutation.
The proof of statement~\ref{clmAPrim:cDouble} breaks down into two cases.
First, $\fColoring{\intervalTwo^j_{i}} = \fColoringDouble{\intervalTwo^k_{\bra{s_k-k}+i}}$ for each $i \in \set{ k-j+2, \ldots, k-1}$ follows from Equation~(\ref{eq:defCj}) and Invariant \ref{clmA:cDouble} for $j-1$.
Finally, $\fColoring{\intervalTwo^j_{k-j+1}} = \fColoringDouble{\intervalTwo^k_{s_k-j+1}}$ follows from the appropriate selection of $i_0$ in Equation (\ref{eq:defCj}).

\medskip

\noindent \emph{Step\,2.} Here, we define the coloring on $\IntervalsTwo_j \setminus \set{\intervalTwo^j_0, \ldots, \intervalTwo^j_{k-1}}$ similarily as for $j=1$.
We set
$
\fColoring{\intervalTwo^{j}_i} \eqdef \fColoring{\intervalTwo^{j}_{\bra{i\!\!\!\! \mod k}}} \quad \text{for} \ i=k,\ldots,s_{j}-1.
$
The coloring $\coloring$ is proper on $\IntervalsTwo_1\cup\ldots\cup\IntervalsTwo_j$ by Observation~\ref{obs:gcc}. Let us argue that Invariant~\ref{clmA:cPrim}, Invariant~\ref{clmA:bijection} and Invariant~\ref{clmA:cDouble} are now satisfied for $j$. Invariant~\ref{clmA:cPrim} for $j$ is an immediate consequence of the inductive assumption.
Because the size $s_{j} = \size{\IntervalsTwo_{j}}$ is a multiple of $k$ we get that $\fColoring{\intervalTwo^{j}_i} = \fColoring{\intervalTwo^{j}_{(s_{j}-k)+i}}$ for $i = 0, \ldots, k-1$, meaning that the coloring $\coloring$ is identical on the first and the last $k$ intervals of $\IntervalsTwo_j$.
As a consequence we obtain that Invariants~\ref{clmA:bijection} and~\ref{clmA:cDouble} follow from Invariants~\ref{clmAPrim:bijection} and~\ref{clmAPrim:cDouble} that we proved for $\prefix{\IntervalsTwo_j}{\intervalTwo^j_{k-1}}$ in Step~1.
This finishes the proof of Claim \ref{clm:insertionSort}.
\end{proof}

For $j = k$ Claim $\ref{clm:insertionSort}$ guarantees the proper coloring $\coloring : \IntervalsTwo \rightarrow \num{k}$ which satisfies the following:
\begin{enumerate}
\item $\fColoring{\intervalTwo^1_i} = \fColoringPrime{\intervalTwo^1_i}$ for each $i \in \set{0, 1, \ldots, k-1}$, \item $\coloring$ is a bijection on $\set{\intervalTwo^k_{s_k-k}, \ldots, \intervalTwo^j_{s_k-1}}$ , \item $\fColoring{\intervalTwo^k_{\bra{s_k-k}+i}} = \fColoringDouble{\intervalTwo^k_{\bra{s_k-k}+i}}$ for each $i \in \set{1, \ldots, k-1}$. \label{cond:cDouble}
\end{enumerate}
Thus, $\coloring$ is the extension of $\fColoringPrime{\cdot}$ and $\fColoringDouble{\cdot}$ on instance $\IntervalsTwo$, where the first and the last $k$ intervals of $\IntervalsTwo$ coincide with the first and the last $k$ intervals of $\Intervals$. In addition to that $\Intervals \subseteq \IntervalsTwo$. Thus, $\coloring$ restricted to $\Intervals$ is the coloring promised in Lemma~\ref{lem:partColor}.
This finishes the proof of Lemma \ref{lem:partColor}.
\end{proof}

As a side effect of Lemma \ref{lem:partColor}, we obtain the following interesting theorem regarding the chromatic number of the unit circular-arc graph.
For the purpose of the argument below, we assume that a point of a circle with circumference $\circumference>0$ is defined as an equivalence claa $\eqClass{\point}\eqdef\set{\point+m\circumference\in\RealNumbers:m\in\IntegerNumbers}$, as to enforce that if we move a distance $\circumference$ from any point on the circle, we end up in the same point (or formally speaking both points are equivallent). 
\begin{theorem}
\label{thm:main-circ-2}
Let $\Arcs$ be a set of unit circular-arcs on the circle such that $\Arcs$ can be extended with $\fClique{\Arcs}^2-1$ not intersecting unit circular-arcs without increasing the clique number. Then $\fChromatic{\Arcs} \leqslant \fClique{\Arcs}$.
\end{theorem}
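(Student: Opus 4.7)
The approach is to cut the circle, duplicate the wrapping arcs on the line, and invoke Lemma~\ref{lem:partColor}. Let $k \eqdef \fClique{\Arcs}$ (assume $k\geqslant 1$; the case $k=0$ is vacuous). Since the clique number equals $k$, some point on the circle lies in exactly $k$ arcs of $\Arcs$; pick such a point $p$ as the cut point, parameterize the circle as $[0,\lambda)$ with $p=0$, and write $B_1,\dots,B_k$ for the arcs containing $p$ -- these are precisely the arcs that wrap around the cut. Because the load of $\Arcs\cup\Arcs'$ at $p$ is at most $k$ and is already saturated by $B_1,\dots,B_k$, no arc of the extension $\Arcs'$ contains $p$, so every arc of $\Arcs'$ embeds on the real line as a genuine unit interval contained in $[0,\lambda-1]$.

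Next I would build the linear multiset $\Intervals$ on $\RealNumbers$: every non-wrapping arc is mapped to its natural position in $[0,\lambda-1]$, while each wrapping arc $B_i$ (with circular representative $[a_i,\lambda)\cup[0,a_i+1-\lambda)$ where $a_i\in(\lambda-1,\lambda)$) is inserted twice -- as a \emph{left copy} $[a_i-\lambda,\,a_i-\lambda+1)\subseteq(-1,0)$ and a \emph{right copy} $[a_i,a_i+1)\subseteq(\lambda-1,\lambda)$. Ordering $\Intervals$ by $\linearOrder$, the first $k$ entries are exactly the left copies of $B_1,\dots,B_k$ (sorted by the $a_i$'s) and the last $k$ entries are the matching right copies in the same order. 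I would then verify that $\fClique{\Intervals}\leqslant k$: for each point $q$ on the line, the intervals of $\Intervals$ containing $q$ form a subset of the arcs of $\Arcs$ that contain the corresponding circle point $q\bmod\lambda$, so linear load never exceeds circular load, which is $\leqslant k$. Applying the same argument to $\Arcs\cup\Arcs'$ gives $\fClique{\Intervals\cup\Arcs'}\leqslant k$.

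To close the argument I set $\IntervalsFour\eqdef\Arcs'$ (strictly between the first and last $k$ intervals of $\Intervals$, of size $k^2-1$), and define the required bijections by $\coloringPrime(\text{left copy of }B_i)=i$ and $\coloringDouble(\text{right copy of }B_i)=i$, so that the two copies of each $B_i$ are pre-committed to the same color. Lemma~\ref{lem:partColor} then provides a proper $k$-coloring $\coloring$ of $\Intervals$ extending $\coloringPrime$ and $\coloringDouble$, and since the two copies of each wrapping arc now receive the same color, $\coloring$ descends to a well-defined map $\hat\coloring:\Arcs\to[k]$.

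The remaining (and essentially only) obstacle is the bookkeeping that transfers properness back to the circle: I must check that whenever two arcs of $\Arcs$ intersect on the circle, at least one pair of their line-copies also intersect on the line, so that properness of $\coloring$ on $\Intervals$ implies properness of $\hat\coloring$ on $\Arcs$. This reduces to a short case analysis according to whether the circular intersection point $q_0$ falls in $[0,1)$, in $[1,\lambda-1)$, or in $[\lambda-1,\lambda)$, and which of the two intersecting arcs is wrapping. Verifying the size condition $n\geqslant 2k$ of Lemma~\ref{lem:partColor} is automatic, since $\fClique{\Arcs}=k$ forces $|\Arcs|\geqslant k$ and hence $|\Intervals|=|\Arcs|+k\geqslant 2k$; no further combinatorial idea is required beyond the slack-based coloring already supplied by Lemma~\ref{lem:partColor}.
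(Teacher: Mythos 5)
Your proposal is correct and follows essentially the same route as the paper's proof: cut the circle at a point realizing the maximum clique, unroll to the line while duplicating the arcs crossing the cut at both ends, pre-color the two copies of each such arc identically via the bijections $\coloringPrime$ and $\coloringDouble$, and invoke Lemma~\ref{lem:partColor} with $\IntervalsFour=\Arcs'$. The verification steps you defer (that circular intersections are witnessed by line intersections of the copies, and that the hypotheses of the lemma hold) are exactly the ones the paper also treats as routine.
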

\begin{proof}Without loss of generality we assume that the unit circular-arcs containing the point $\eqClass{0}$ are an $\fClique{\Arcs}$-clique.
Let us order $\Arcs = \set{\intDef{\point_1}{\point_1+1},\ldots,\intDef{\point_n}{\point_n+1}}$ by the representatives $\point_1, \ldots, \point_n$ of the starting points such that
$
-1 < \point_1 < \point_2 < \ldots < \point_{\omega} \leqslant 0 < \point_{\omega+1} < \point_{\omega+2} < \ldots < \point_n \leqslant \circumference-1
$
where $\omega\eqdef \fClique{\Arcs}$ and $n \eqdef \size{\Arcs}$.
Now we define the set of $n+\omega$ unit intervals on a line (the $\circumference$ distant points are no more equivalent):
\begin{multline*}
\Intervals\eqdef\{\intDef{\point_1}{\point_1+1}, \intDef{\point_2}{\point_2+1}, \ldots, \intDef{\point_n}{\point_n+1},
\\
\intDef{\point_{1}+\circumference}{\point_{1}+\circumference+1},\intDef{\point_{2}+\circumference}{\point_{2}+\circumference+1},\ldots\intDef{\point_{\omega}+\circumference}{\point_{\omega}+\circumference+1}\}
\end{multline*}
and coloring $\coloringPrime$ of intervals $\intDef{\point_1}{\point_1+1}, \ldots, \intDef{\point_{\omega}}{\point_{\omega}+1}$
and coloring $\coloringDouble$ of intervals $\intDef{\point_1+\circumference}{\point_1+\circumference+1}, \ldots, \intDef{\point_{\omega}+\circumference}{\point_{\omega}+\circumference+1}$ such that for each $i=1,2,\ldots,\omega$ there is
$
\fColoringPrime{\intDef{\point_i}{\point_i+1}}\eqdef\fColoringDouble{\intDef{\point_i+\circumference}{\point_i+\circumference+1}}\eqdef i.
$
By the assumptions of the theorem and Lemma~\ref{lem:partColor} we obtain the proper coloring $\coloring$ of the whole $\Intervals$ by colors from $\set{1,2,\ldots,\omega}$ which is the extension of the colorings $\coloringPrime \cup \coloringDouble$, i.e.,
$
\fColoring{\intDef{\point_i}{\point_i+1}}=\fColoring{\intDef{\point_i+\circumference}{\point_i+\circumference+1}}= i
$
for each $i=1,2,\ldots,\omega$.
The final coloring $\coloringArc:\Arcs\rightarrow\set{1,2,\ldots,\omega}$ is obtained by appropriately copying the coloring $\coloring$ in the following way
$
\fColoringArc{\eqClass{\intDef{\point_i}{\point_i+1}}}\eqdef\fColoring{\intDef{\point_i}{\point_i+1}}
$
for each $i=1,2,\ldots,n$.
\end{proof}

\section{The Frogs Game}\label{sec:frog}
In this section we study the \Frog problem introduced in Section~\ref{sec:prelims}. It is a generalisation of a much simpler \SetUnion problem, which can also be seen as a game. In the \SetUnion game we are initially given $n$ sets of size $1$. There are $n-1$ turns in the game. In each turn the adversary picks two sets that get merged, and the cost of merging is the size of the smaller of these two sets. It is easy to see that the total cost is bounded by $n \log n$, since each set element contributes at most $\log n$ times to the merging cost.

In the \Frog game, we start with a sequence of $n$ sets of size $\Rini$. The adversary again in each turn picks two consecutive sets $S$ and $S'$ that get merged. However, the cost we pay is the total size of $\jn$ consecutive sets, either starting from $S'$ and going to the right, or starting from $S$ and going to the left. We only pay the cheaper of these two options. Our result with regarding to \Frog problem is that the total cost encountered in the game is $\fO{\jn \Rini n \log n}$.
The precise statement is given in the following Theorem~\ref{thm:main}. Due to space limitations, we postpone the proof of Theorem~\ref{thm:main} until Appendix~\ref{app:zaba}.
\begin{theorem}
\label{thm:main}
The total cost of the \Frog game is at most $\Rini \bra{2 \jn-1}\bra{n+2 \jn-2}\log_2\frac{n+2 \jn -2}{2 \jn-1}.$
\end{theorem}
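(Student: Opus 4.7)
The proof strategy generalizes the classical $\fO{n \log n}$ amortized bound for the \SetUnion game (the special case $\jn = 1$), where each leaf is charged $\Rini$ every time its set lies on the smaller side of a merge and doubles in size. My plan proceeds in three stages.

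Step 1 (Padding). First I would pad the rank sequence with $\jn - 1$ phantom positions of value $\Rini$ on each side, obtaining a padded sequence of $N \eqdef n + 2\jn - 2$ positions in which the boundary convention ``missing $\rrank{\ti}{i}$ equals $\Rini$'' is replaced by genuine positions that never get merged. In this padded setting, $\cost_\ti$ is uniformly the minimum of two sums of $\jn$ consecutive ranks on each side of the merge, with no special boundary casework. I set $K \eqdef 2\jn - 1$, the size of the window that will play the central role below; the numerology $\Rini \cdot K \cdot N \cdot \log_2(N/K)$ in the claimed bound already hints at this partition.

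Step 2 (Charging to leaves). I interpret each current rank as a disjoint union of unit markers of weight $\Rini$ (one per padded position), so that the cost of an event at position $j_\ti$ equals $\Rini$ times the number of markers contained in the chosen min-side $\jn$-window. I charge each such marker an amount $\Rini$ per event in which it lies in this min-side window, so that the total cost equals $\Rini \cdot \sum_\ti m_\ti$ where $m_\ti$ counts the markers in the min-side at event $\ti$.

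Step 3 (Doubling / potential argument). For each marker I track a \emph{witness $K$-window} of the current positions around its present location. The key lemma, and the main obstacle of the proof, is: whenever a marker is charged, the total weight contained in a suitably chosen $K$-window around it at least doubles. The intuition is that taking the minimum of the left-$\jn$-sum and the right-$\jn$-sum gives at most half the combined $2\jn$-sum, so after the merge absorbs the chosen side into the opposite side the weight of an appropriate $K$-window covering the marker grows by at least a factor of two. Since a $K$-window has weight at most $\Rini N$ and at least $\Rini K$, any single window can double at most $\log_2(N/K)$ times. To cover all $\jn$ possible offsets of a marker within a charging min-side window, I assign each marker $K$ overlapping witness windows; each charging of the marker forces at least one of them to double. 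Summing over all $N$ padded markers and the $K$ overlapping witnesses gives at most $K \cdot N \cdot \log_2(N/K)$ chargings in total, and multiplying by $\Rini$ yields the stated bound $\Rini (2\jn-1)(n+2\jn-2)\log_2\frac{n+2\jn-2}{2\jn-1}$. The delicate part will be the precise choice of witness window per marker and a careful accounting showing that the $K$ copies do not overcount: every charging must be attributable to a doubling event of exactly one witness window, and no two distinct chargings of the same marker can be absorbed by the same doubling event.
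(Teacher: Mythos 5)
Your Steps 1 and 2 are sound bookkeeping and the final numerology is exactly right, but the key lemma of Step 3 --- that every charging of a marker forces at least one of its $2\jn-1$ witness windows to double --- is false, and this is not a repairable technicality but a structural obstruction. Take $\jn=2$ (so $K=3$) at the very first turn, when all ranks equal $\Rini$: the jump merges positions $j$ and $j+1$, both sides have weight $2\Rini$, and a marker sitting at position $j+2$ is charged. Before the merge its three witness windows have weights $(3\Rini,3\Rini,3\Rini)$; after the merge they have weights $(4\Rini,4\Rini,3\Rini)$. Nothing has doubled (from the floor value $K\Rini=3\Rini$ a doubling would require $6\Rini$), yet a charge has been made that your counting scheme must absorb. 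The underlying reason is a width mismatch: the quantity that is at most half of something is the \emph{$\jn$-window} weight $\min(W_L,W_R)\le\frac12(W_L+W_R)$, whereas the object whose weight equals $W_L+W_R$ after the merge is a \emph{$(2\jn-1)$-window}. Any $(2\jn-1)$-window containing the marker \emph{before} the merge necessarily contains positions beyond the min-side $\jn$-window (either from the opposite side or from outside the merged region), so its before-value can already exceed $\frac12(W_L+W_R)$, and the doublings cannot be chained from one charge to the next.

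The paper proves the theorem by smoothing exactly this doubling idea into a potential function: $\Phi(\RRanks{\ti})=\sum_j H\bra{\rrank{\ti}{j}+\cdots+\rrank{\ti}{j+2\jn-2}}$ with $H(x)=x\log_2 x$, summed over \emph{all} $(2\jn-1)$-windows, with out-of-range ranks set to $\Rini$ (this is your padding). One shows $\cost_{\ti}\le\Phi(\RRanks{\ti+1})-\Phi(\RRanks{\ti})$: after localizing the difference to the windows meeting the merged positions (using that $x\mapsto(x+c)\log_2(x+c)-x\log_2 x$ is non-decreasing), the difference collapses to $(R_L+R_R)\log_2(R_L+R_R)-R_L\log_2 R_L-R_R\log_2 R_R\ge 2\min(R_L,R_R)$, which is the analytic surrogate for ``the smaller side doubles''. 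Summing $H$ over all window offsets is precisely what absorbs the offset problem that defeats the per-window combinatorial doubling, and telescoping gives $\Rini(2\jn-1)(n+2\jn-2)\log_2\frac{n+2\jn-2}{2\jn-1}$. To salvage your argument you would need to replace per-window doubling by an aggregate potential of this convex type; as written, the proof does not go through.
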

 
\section{Unit interval recoloring problem}\label{sec:mainres}
In this section we are interested in the \Pruir problem defined in Section~\ref{sec:prelims}. 
We a $k$-colorable instance of a unit interval graph $\Intervals = \set{\interval_1,\interval_2,\ldots,\interval_n}$.  This instance defines a sequence of $n+1$ instances: $\Intervals_0=\emptyset$ and $\Intervals_j = \set{\interval_1,\ldots,\interval_j}$ for $j>0$. Instance $\Intervals_j$ differs from $\Intervals_{j-1}$ by one interval $\interval_j$.
The parameter $k$ is known to the algorithm from the start. After the interval $\interval_j$ is presented, the algorithm needs to compute a 
proper $k$-coloring $\coloring_j$ for $\Intervals_j$. Our objective is to 
minimize the recourse budget, which is the number of intervals with 
different colors in $\coloring_j$ and $\coloring_{j-1}$. Note that here an index of an interval denotes the time of arrival and has nothing to do with the $\linearOrder$ order. To differentiate with the time in the Frogs game, we refer to index $j$ of interval $\interval_j$ as step, and we say that $\interval_j$ is revealed (or added) in step $j$. The main result of this section is stated in the following theorem.

\begin{theorem}\label{thm:mainres}
There is an algorithm for \Pruir with a total recourse budget of $\fO{k^7 n \log n}$.
\end{theorem}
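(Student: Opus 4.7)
The plan is to maintain, alongside the dynamic instance $\Intervals_j$, a decomposition of its intervals into \emph{super-blocks} $\IntervalsEleven_1 \linearOrder \cdots \linearOrder \IntervalsEleven_{m_j}$ such that between any two consecutive super-blocks lies a gap containing at least $k^2-1$ pairwise disjoint unit intervals arranged in the geometry demanded by Lemma~\ref{lem:partColor}. On each super-block we maintain a proper $k$-coloring whose restrictions to the first $k$ and to the last $k$ intervals are bijections onto $\num{k}$; this is precisely the hypothesis of Lemma~\ref{lem:partColor} and it allows us to re-extend the coloring whenever a super-block grows.

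Upon the arrival of $\interval_j$ we locate it in the $\linearOrder$-order and distinguish three cases. If $\interval_j$ lands in a gap and leaves the $k^2-1$ slack on both sides intact, we create a new singleton super-block with the trivial bijective coloring at cost $\fO{1}$. If $\interval_j$ is absorbed into a single super-block $\IntervalsEleven_i$ without damaging the adjacent gaps, we locally restore the coloring of $\IntervalsEleven_i$ by combining \ColComp, \ModColComp and Observation~\ref{obs:makeBijection}, repainting $\fO{\size{\IntervalsEleven_i}}$ intervals so that its first and last $k$ intervals are again bijectively coloured. Otherwise $\interval_j$ has eaten into the slack on at least one side; in this case we merge the affected consecutive super-blocks into a single new one and recolor it from scratch via Lemma~\ref{lem:partColor}, using as its slack set $\IntervalsFour$ the $k^2-1$ still-untouched intervals that remain in the two outer gaps.

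To bound the total recourse we cast the sequence of merge events as a \Frog game and invoke Theorem~\ref{thm:main}. The super-blocks play the role of ranks; the initial rank $\Rini$ and the jump horizon $\jn$ are tuned to $\fTheta{k^a}$ and $\fTheta{k^b}$ so that (i) every insertion of the first two cases costs $\fO{k}$ recolorings amortised against the rank of its super-block, and (ii) every merge triggered by the third case maps to a jumping step in the Frogs game whose local cost $\cost_{\ti}$ upper-bounds, up to a factor $\fO{k}$, the number of intervals actually repainted by Lemma~\ref{lem:partColor}. Plugging these two observations into Theorem~\ref{thm:main} bounds the total recourse by $\fO{k \cdot \Rini \cdot \jn \cdot n \log n}$, and a choice of exponents with $a+b=6$ (for example $a=b=3$) yields the promised $\fO{k^7 n \log n}$.

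The main technical hurdle is the faithful reduction of the algorithm's merges to the Frogs game. One must carefully define the threshold at which a super-block is declared \emph{full} and what exactly constitutes the boundary of a gap, and then verify two properties in tandem: whenever super-blocks are merged, both outer gaps still contain the $k^2-1$ intervals required by Lemma~\ref{lem:partColor}; and the number of intervals repainted by a single merge is at most $\fO{k}$ times the $\cost_{\ti}$ charged by the corresponding Frogs jump. Striking this balance between the $k^2-1$ geometric slack of Lemma~\ref{lem:partColor} and the $\jn$-neighbourhood of the Frogs game is what produces the polynomial overhead $k^7$ rather than the $k^3$ that either device alone would naively suggest.
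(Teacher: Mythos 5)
Your high-level plan (combine Lemma~\ref{lem:partColor} with the \Frog game of Theorem~\ref{thm:main}) is the right one, but two steps of the charging scheme fail as stated. First, in your second case you repaint $\fO{\size{\IntervalsEleven_i}}$ intervals every time an interval is absorbed into an existing super-block, and you claim this is ``$\fO{k}$ recolorings amortised against the rank of its super-block.'' There is no mechanism behind that amortization: absorptions do not correspond to any jump of the Frogs game (ranks only merge; they never grow by external insertions), so nothing pays for them, and an adversary who repeatedly inserts into the interior of one ever-growing super-block forces total cost $\Omega(n^2)$ in this case alone. Second, in your merge case you ``recolor the merged super-block from scratch,'' which costs the \emph{sum} of the sizes of the merged parts, whereas the Frogs cost $\cost_{\ti}$ you charge it to is a \emph{minimum} of a left sum and a right sum. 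The whole point of the Frogs device is that you can only afford the cheaper side; to exploit it the algorithm itself must scan both left and right from $\interval_j$, find on each side the nearest place where enough slack exists ($k^2-1$ insertable disjoint unit intervals, plus $k$-interval buffers so that Observation~\ref{obs:makeBijection} applies), and repaint only the smaller of the two stretches while keeping the coloring beyond it fixed via Lemma~\ref{lem:partColor}. That ``min of the two sides'' step is absent from your algorithm description even though your analysis silently assumes it.

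The paper's proof makes the reduction work by charging not against super-blocks of current intervals but against a partition of the spans of $k$-cliques of the \emph{final} instance into blocks: every insertion activates at least one previously passive span, each passive block carries at least two ranks (so every insertion has a legal adjacent pair of ranks to merge, i.e., a Frogs jump), and a separate combinatorial argument (Observations~\ref{obs:algToBlR} and~\ref{obs:algToBlL}) shows that the repainted stretch on either side meets only $\gamma=\fO{k^3}$ consecutive blocks, which with $k+1$ ranks per block and $\Rini=k$ gives $\jn=\fO{k^4}$ and the stated $\fO{k^7 n\log n}$. Your exponent bookkeeping ($a+b=6$) is consistent with that arithmetic, but without a rank merge for every insertion and without the explicit left/right minimum in the algorithm, the reduction to Theorem~\ref{thm:main} does not go through.
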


\begin{proof}
In order to prove Theorem~\ref{thm:mainres}, we first show the \Recolor algorithm for the \Pruir problem. Later we show that \Recolor achieves the claimed recoloring budget. In the algorithm and the analysis we use the $\linearOrder$ order on the arriving intervals, where the ties are broken by the arrival time. That is, if $\interval_j=\interval_{j'}$, and $j'>j$, we set $\interval_j \linearOrder \interval_{j'}$. Note that this is no change to previous definition of $\linearOrder$ order, as we were allowed to break ties in an arbitrary consistent manner. Here, we only use one specific way of breaking the ties, that is the arrival order.  
\subparagraph*{The \Recolor algorithm} The \Recolor algorithm, given instance $\Intervals_{j-1}$, proper coloring $\coloring_{j-1}$ on $\Intervals_{j-1}$, and interval $\interval_j$, finds a proper coloring $\coloring_j$ on $\Intervals_j$. The \Recolor algorithm consists of two phases.
In the first phase \Recolor finds two intervals: $\intervalTwo^{(R)}$ and $\intervalTwo^{(L)}$, such that $\intervalTwo^{(L)} \linearOrder \interval_j \linearOrder \intervalTwo^{(R)}$. The important thing is that there is a proper coloring $\coloring$ for $\Intervals_j$, such that $\coloring$ differs from $\coloring_{j-1}$ only on intervals of $\infix{\Intervals_j}{\interval_j}{\intervalTwo^{(R)}}$. Analogously, there is a proper coloring $\coloring'$ for $\Intervals_j$, such that $\coloring'$ differs from $\coloring_{j-1}$ only on intervals of $\infix{\Intervals_j}{\intervalTwo^{(L)}}{\interval_j}$. In other words, to obtain $\coloring_j$ it is sufficient to recolor intervals of $\infix{\Intervals_j}{\interval_j}{\intervalTwo^{(R)}}$ or intervals of $\infix{\Intervals_j}{\intervalTwo^{(L)}}{\interval_j}$. In the second phase, the \Recolor algorithm recolors the intervals of the smaller set among $\infix{\Intervals_j}{\interval_j}{\intervalTwo^{(R)}}$ and $\infix{\Intervals_j}{\intervalTwo^{(L)}}{\interval_j}$ and sets $\coloring_j$ accordingly.

Next we describe the first phase which consists of finding intervals $\intervalTwo^{(L)}$ and $\intervalTwo^{(R)}$. We only describe the idea for finding $\intervalTwo^{(R)}$, as finding $\intervalTwo^{(L)}$ is analogous. To find $\intervalTwo^{(R)}$ the \Recolor algorithm scans through the intervals of $\Intervals_j$ in $\linearOrder$ order starting in $\interval_j$. Before processing interval $\intervalThree$, the algorithm always checks if $\infix{\Intervals_j}{\interval_j}{\intervalThree}$ is connected as a graph. If this is ever not the case, the algorithm declares $\intervalTwo^{(R)}$ to be the predecessor of $\intervalThree$ and moves on to finding $\intervalTwo^{(L)}$. Thus, it is guaranteed by the algorithm that $\infix{\Intervals_j}{\interval_j}{\intervalTwo^{(R)}}$ is connected.
So the algorithm iterates through $\Intervals_j$ starting from $\interval_j$.
First, the algorithm finds interval $\interval$, such that $\infix{\Intervals_j}{\interval_j}{\interval}=k+1$ (or in other words it skips through the first $k$ intervals).  Then, starting from $\interval$, the algorithm iterates further. It detects
the first interval $\intervalTwo$, for which a set $\IntervalsFour$ of $k^2-1$ disjoint unit intervals, such that $\set{ \interval_j } < \IntervalsFour < \set{ \intervalTwo}$, can be added to $\Intervals_{j}$ in a way that $\fClique{\Intervals_{j} \cup \interval_j \cup \IntervalsFour} \leq k$. The algorithm then returns as $\intervalTwo^{(R)}$ the interval satisfying $\infix{\Intervals_j}{\intervalTwo}{\intervalTwo^{(R)}}=k+1$ (in other words, it skips $k$ intervals from $\intervalTwo$ to find $\intervalTwo^{(R)}$). The procedure \FRight for finding $\intervalTwo^{(R)}$ is given in Algorithm~\ref{alg:findset}. An analogous procedure \FLeft is used to find $\intervalTwo^{(L)}$, with the only difference that we iterate backwards from $\interval_j$. The pseudocode for the recoloring phase is presented in Algorithm~\ref{alg:recolor}. 

At this point let us observe that there is a proper $k$-coloring $\coloring$ for $\Intervals_j$ such that $\coloring$ coincides with $\coloring_{j-1}$ on all intervals but $\infix{\Intervals_j}{\interval_j}{\intervalTwo^{(R)}}$. If the algorithm stopped iterating because it detected that the successor of $\intervalTwo^{(R)}$ is disconnected with $\infix{\Intervals_j}{\interval_j}{\intervalTwo^{(R)}}$, then based on Observation~\ref{obs:gccNC} we can find coloring $\coloring$ using the \ColComp algorithm. Otherwise, coloring $\coloring$ exists based on Observation~\ref{obs:makeBijection} and Lemma~\ref{lem:partColor}. In particular, the algorithm skips $k$ intervals succeding $\interval_j$ and $k$ intervals preceeding $\intervalTwo^{(R)}$ to be able to use Observation~\ref{obs:makeBijection}. Based on this observation, the algorithm generates a proper $k$-coloring $\overline{\coloring}$ which is a bijection on the first and the last $k$ intervals of $\infix{\Intervals_j}{\interval_j}{\intervalTwo^{(R)}}$, and coincides with $\coloring_{j-1}$ outside of $\infix{\Intervals_j}{\interval_j}{\intervalTwo^{(R)}}$. Thus the assumptions of Lemma~\ref{lem:partColor} are satisfied by $\overline{\coloring}$ restricted to $k$ first and last intervals of $\infix{\Intervals_j}{\interval_j}{\intervalTwo^{(R)}}$. Hence, Lemma~\ref{lem:partColor} implies that coloring $\coloring$ exists.
\begin{algorithm}
\SetKwInOut{Input}{Input}
 \SetKwInOut{Output}{Output}
\SetKw{Raise}{raise exception:\ }
 
 \vskip 0.2cm
\Input{A set of unit intervals $\Intervals_j$ ordered by $\linearOrder$, an integer $k$ such that $\fClique{\Intervals_j} \leq k$, an interval $\interval_j \in \Intervals_j$}
\Output{the first interval $\intervalTwo^{(R)} \sqsupset \interval_j$ satisfying one of the following conditions:
\begin{itemize}
 \item $\infix{\Intervals_j}{\interval_j}{\intervalTwo^{(R)}}$ is disjoint from the successor of $\intervalTwo^{(R)}$
 \item for intervals $\interval,\intervalTwo$
 such that $\infix{\Intervals_j}{\interval_j}{\interval}=k+1$ and $\infix{\Intervals_j}{\intervalTwo}{\intervalTwo^{(R)}}=k+1$ there is a set $\IntervalsFour$ of $k^2-1$ disjoint unit intervals such that $\set{\interval} < \IntervalsFour < \set{ \intervalTwo }$ and $\fClique{\Intervals_j \cup \IntervalsFour} \leq k$. 
\end{itemize}
}
$Q \gets \emptyset$;

$current=\interval_j$\;
\For{$i \gets 1 \ldots k$}
{
$add(Q,current)$\;
$prev=current$\;
$current=next(current)$\;
\If{$current \cap prev = \emptyset$} 
{
$\intervalTwo^{(R)}=prev$\;
\Return $\intervalTwo^{(R)}$\;
}
}

$esp=\opy{current}$\;
$\Delta \gets 0$\;

\While{$\Delta+\floor{\opx{current}-esp}) < k^2+1$}
{
$pop(Q)$\;
$add(Q,current)$\;
$prev=current$\;
$current=next(current)$\;
\If{$current \cap prev = \emptyset$} 
{
$\intervalTwo^{(R)}=prev$\;
\Return $\intervalTwo^{(R)}$\;
}

\If{$span(Q) \neq \emptyset$} 
{
$bsp=\opx{span(Q)}$\;
  
$\Delta+=\max(0,\floor{bsp-esp})$\;
$esp=\opy{span(Q)}$\;

}
}
\For{$i \gets 1 \ldots k$}
{
$prev=current$\;
$current=next(current)$\;
\If{$current \cap prev = \emptyset$} 
{
$\intervalTwo^{(R)}=prev$\;
\Return $\intervalTwo^{(R)}$\;
}

}
$\intervalTwo^{(R)}=current$\;
\Return $\intervalTwo^{(R)}$\;
\caption{The procedure \FRight $Q$ is a FIFO queue, $next(\intervalThree)$ returns the next after  $\intervalThree$ interval of $\Intervals_j$ in $\linearOrder$ order or $\emptyset$ if the next interval does not exist, $span(Q)$ returns the span of all intervals in $Q$ or $\emptyset$ if the intervals on $Q$ are not a clique}
\label{alg:findset}
\end{algorithm}

\begin{algorithm}
 \SetKwInOut{Input}{Input}
 \SetKwInOut{Output}{Output}
\SetKw{Raise}{raise exception:\ }
 
 \vskip 0.2cm
 
  \Input{A set of unit intervals $\Intervals_{j-1}$,  $\interval_j$, proper $k$-coloring $c_{j-1}$ on $\Intervals_{j-1}$}
  \Output{A proper $k$-coloring $c_j$ on $\Intervals_j$}
  
  \vskip 0.1cm

  $\intervalTwo^{(R)} \gets \FRight(\Intervals_j,k,\interval_j)$\;

$\intervalTwo^{(L)} \gets \FLeft(\Intervals_j,k,\interval_j)$\;

  find $c_j$ by recoloring the smaller set among $\infix{\Intervals_j}{\intervalTwo^{(L)}}{\interval_j}$ and $\infix{\Intervals_j}{\interval_j}{\intervalTwo^{(R)}}$ using $\ColComp$ (Observation~\ref{obs:gccNC}) or Observation~\ref{obs:makeBijection} combined with Lemma~\ref{lem:partColor}
 
  \caption{The recoloring algorithm \Recolor}
  \label{alg:recolor}
 \end{algorithm}

\subparagraph*{Analysis} The core idea of the analysis is to employ the \Frog game from Section~\ref{sec:frog}. The recoloring cost of \Recolor algorithm in step $j$ is $\min(\size{\infix{\Intervals_j}{\intervalTwo^{(L)}}{\interval_j}},\size{\infix{\Intervals_j}{\interval_j}{\intervalTwo^{(R)}}})$, and we plan to charge it to the cost of an appropriate jump action in the appriopriatly defined instance of the \Frog game.  Before we present the key idea for the analysis, we need to develop some tools to relate the \Frog game with the dynamically changing instance. We start by making (without loss of generality) some assumptions on the final instance $\Intervals_n$. First of all, observe that if the final instance $\Intervals_n$ is not connected as a graph, then the \Recolor algorithm behaves as if it was applied to each connected component separately. Thus, the analysis can be applied to each connected component of $\Intervals_n$. Hence, without loss of generality, we assume that $\Intervals_n$ is connected as a graph.

For every instance $\Intervals_j$, visible to the algorithm after interval $\interval_j$ is presented, we define a set of intervals $\Spanset_j$, which contains spans of all cliques in $\Intervals_j$. To be more precise $\Spanset_j= \{ \spann{\IntervalsTwo} : \IntervalsTwo \subseteq \Intervals_j \text{ and } \IntervalsTwo \text{ is a $k$-clique } \}$. In Figure~\ref{fig:4} the intervals of $\Intervals_j$ are solid black, and their spans $\Spanset_j$ are pictured above as solid black segments. Observe that spans in $\Spanset_j$ are pairwise disjoint so they are linearly ordered by $<$. 
For any two intervals $\interval$ and $\intervalTwo$ we define the distance between them as
$\distInt{\interval}{\intervalTwo}= 
0 \text{ if } \interval \cap \intervalTwo \neq \emptyset \text{ and }
\distInt{\interval}{\intervalTwo}=\min \bra{ \Abs{\opy{\interval}-\opx{\intervalTwo}},\Abs{\opy{\intervalTwo}-\opx{\interval}} } \text{ otherwise.}
$

At this point we would like to make one more assumption on the final instance $\Intervals_n$, mainly that $\Spanset_n$ is such that $\distInt{\Spanel}{\Spanel'} < 1$ for any two consecutive spans $\Spanel,\Spanel' \in \Spanset_n$. If this is not the case, we insert $n'-n$ dummy intervals for some $n'>n$ in the following way. Let $i=1$. As long as there are two consecutive spans $\Spanel,\Spanel' \in \Spanset_{n+i-1}$ with $\distInt{\Spanel}{\Spanel'}\geq 1$, we insert a dummy unit interval $\interval_{n+i}$ such that $\Spanel < \interval_{n+i} < \Spanel'$ and we increase $i$ by one.  Observe, that dummy intervals do not increase the clique number above $k$. Since $\Intervals_n$ is connected as a graph, it holds that $n' \in \fO{kn}$. By Observation~\ref{obs:unitCli} (that follows next), $\size{ \Spanset_{n'} } \in \fO{kn}$. From now on we denote the size of the final instance as $n'$.

\begin{observation}\label{obs:unitCli}
For any $x \in \RealNumbers$, at most $k$ spans of $\Spanset_{n'}$ intersect unit interval $\intDef{x}{x+1}$. 
\end{observation}
\begin{proof}
There is at most $k$ intervals in $\Intervals_{n'}$ with begin coordinate in $\intDef{x}{x+1}$. Each span intersecting $\intDef{x}{x+1}$ can be uniquely associated with a begin coordinate inside $\intDef{x}{x+1}$. \end{proof}

Before we move on to the analysis, we introduce a partition of $\Spanset_{n'}$ for each insertion step $j$. The sets of the partition are referred to as blocks. First of all, let us call the spans in $\Spanset_j$ active in step $j$, while the spans in $\Spanset_{n'} \setminus \Spanset_j$ will be called passive in step $j$. So the active spans are the ones present in the instance $\Intervals_j$, and the other spans of $\Spanset_{n'}$ are passive. In Figure~\ref{fig:4}, active spans of $\Spanset_j$ and intervals of $\Intervals_j$ are solid black, while the passive spans of $\Spanset_{n'} \setminus \Spanset_j$ and future intervals of $\Intervals_{n'} \setminus \Intervals_j$ are dashed blue. In our partition at step $j$, we place all passive spans in singleton blocks, while the consecutive (in $<$ order) active spans are placed in the same block of the partition. The partition is marked green in Figure~\ref{fig:4}. 
\begin{figure}[hbpt!]
\begin{small}

\centering

\begin{tikzpicture}[xscale={2\textwidth/52.75}, yscale=15]
	\begin{pgfonlayer}{nodelayer}
		\node [style=none] (0) at (0, 0) {};
		\node [style=none] (1) at (5.25, 0) {};
		\node [style=none] (2) at (2, 2) {};
		\node [style=none] (3) at (7.25, 2) {};
		\node [style=none] (4) at (3.5, 4) {};
		\node [style=none] (5) at (8.75, 4) {};
		\node [style=none] (6) at (5.75, 0) {};
		\node [style=none] (7) at (11, 0) {};
		\node [style=none] (8) at (7.75, 2) {};
		\node [style=none] (9) at (13, 2) {};
		\node [style=none] (10) at (9.5, 4) {};
		\node [style=none] (11) at (14.75, 4) {};
		\node [style=none] (12) at (11.75, 0) {};
		\node [style=none] (13) at (17, 0) {};
		\node [style=none] (14) at (15.25, 4) {};
		\node [style=none] (15) at (20.5, 4) {};
		\node [style=none] (16) at (15.75, 2) {};
		\node [style=none] (17) at (21, 2) {};
		\node [style=none] (18) at (19.5, 0) {};
		\node [style=none] (19) at (24.75, 0) {};
		\node [style=none] (20) at (21.5, 4) {};
		\node [style=none] (21) at (26.75, 4) {};
		\node [style=none] (22) at (23.75, 2) {};
		\node [style=none] (23) at (28.5, 2) {};
		\node [style=none] (24) at (25.5, 0) {};
		\node [style=none] (25) at (30.75, 0) {};
		\node [style=none] (26) at (27.5, 4) {};
		\node [style=none] (27) at (32.75, 4) {};
		\node [style=none] (28) at (29.25, 2) {};
		\node [style=none] (29) at (34.5, 2) {};
		\node [style=none] (30) at (31.5, 0) {};
		\node [style=none] (31) at (36.75, 0) {};
		\node [style=none] (32) at (33.25, 4) {};
		\node [style=none] (33) at (38.5, 4) {};
		\node [style=none] (34) at (35, 2) {};
		\node [style=none] (35) at (40.25, 2) {};
		\node [style=none] (36) at (37.5, 0) {};
		\node [style=none] (37) at (42.75, 0) {};
		\node [style=none] (38) at (39.25, 4) {};
		\node [style=none] (39) at (44.5, 4) {};
		\node [style=none] (40) at (41, 2) {};
		\node [style=none] (41) at (46.25, 2) {};
		\node [style=none] (42) at (43.5, 0) {};
		\node [style=none] (43) at (48.75, 0) {};
		\node [style=none] (44) at (46.75, 4) {};
		\node [style=none] (45) at (52.25, 4) {};
		\node [style=none] (46) at (47.25, 2) {};
		\node [style=none] (47) at (52.75, 2) {};
		\node [style=none] (48) at (3.5, 9.25) {};
		\node [style=none] (49) at (5.25, 9) {};
		\node [style=none] (50) at (5.25, -1.25) {};
		\node [style=none] (51) at (3.5, -1) {};
		\node [style=none] (52) at (5.75, -1) {};
		\node [style=none] (53) at (7.25, -1.25) {};
		\node [style=none] (54) at (5.75, 9.25) {};
		\node [style=none] (55) at (7.25, 9) {};
		\node [style=none] (56) at (7.75, 9.25) {};
		\node [style=none] (57) at (8.75, 9) {};
		\node [style=none] (58) at (8.75, -1.25) {};
		\node [style=none] (59) at (7.75, -1) {};
		\node [style=none] (60) at (9.5, 9.25) {};
		\node [style=none] (61) at (11, 9) {};
		\node [style=none] (62) at (11, -1.25) {};
		\node [style=none] (63) at (9.5, -1) {};
		\node [style=none] (64) at (11.75, -1) {};
		\node [style=none] (65) at (13, -1.25) {};
		\node [style=none] (66) at (13, 9) {};
		\node [style=none] (67) at (11.75, 9.25) {};
		\node [style=none] (68) at (15.75, 9.25) {};
		\node [style=none] (69) at (17, 9) {};
		\node [style=none] (70) at (17, -1.25) {};
		\node [style=none] (71) at (15.75, -1) {};
		\node [style=none] (72) at (19.5, -1) {};
		\node [style=none] (73) at (20.5, -1.25) {};
		\node [style=none] (74) at (20.5, 9) {};
		\node [style=none] (75) at (19.5, 9.25) {};
		\node [style=none] (76) at (23.75, -1) {};
		\node [style=none] (77) at (24.75, -1.25) {};
		\node [style=none] (78) at (24.75, 9) {};
		\node [style=none] (79) at (23.75, 9.25) {};
		\node [style=none] (80) at (25.5, -1) {};
		\node [style=none] (81) at (26.75, -1.25) {};
		\node [style=none] (82) at (26.75, 9) {};
		\node [style=none] (83) at (25.5, 9.25) {};
		\node [style=none] (84) at (27.5, -1) {};
		\node [style=none] (85) at (28.5, -1.25) {};
		\node [style=none] (86) at (27.5, 9.25) {};
		\node [style=none] (87) at (28.5, 9) {};
		\node [style=none] (88) at (29.25, -1) {};
		\node [style=none] (89) at (30.75, -1.25) {};
		\node [style=none] (90) at (30.75, 9) {};
		\node [style=none] (91) at (29.25, 9.25) {};
		\node [style=none] (92) at (31.5, -1) {};
		\node [style=none] (93) at (32.75, -1.25) {};
		\node [style=none] (94) at (32.75, 9) {};
		\node [style=none] (95) at (31.5, 9.25) {};
		\node [style=none] (96) at (33.25, 9.25) {};
		\node [style=none] (97) at (34.5, 9) {};
		\node [style=none] (98) at (33.25, -1) {};
		\node [style=none] (99) at (34.5, -1.25) {};
		\node [style=none] (100) at (35, -1) {};
		\node [style=none] (101) at (36.75, -1.25) {};
		\node [style=none] (102) at (36.75, 9) {};
		\node [style=none] (103) at (35, 9.25) {};
		\node [style=none] (104) at (37.5, -1) {};
		\node [style=none] (105) at (38.5, -1.25) {};
		\node [style=none] (106) at (37.5, 9.25) {};
		\node [style=none] (107) at (38.5, 9) {};
		\node [style=none] (108) at (39.25, -1) {};
		\node [style=none] (109) at (40.25, -1.25) {};
		\node [style=none] (110) at (39.25, 9.25) {};
		\node [style=none] (111) at (40.25, 9) {};
		\node [style=none] (112) at (41, -1) {};
		\node [style=none] (113) at (42.75, -1.25) {};
		\node [style=none] (114) at (41, 9.25) {};
		\node [style=none] (115) at (42.75, 9) {};
		\node [style=none] (116) at (43.5, -1) {};
		\node [style=none] (117) at (44.5, -1.25) {};
		\node [style=none] (118) at (44.5, 9) {};
		\node [style=none] (119) at (43.5, 9.25) {};
		\node [style=none] (120) at (47.25, -1) {};
		\node [style=none] (121) at (48.75, -1.25) {};
		\node [style=none] (122) at (47.25, 9.25) {};
		\node [style=none] (123) at (48.75, 9) {};
		\node [style=none] (124) at (3.5, 7) {};
		\node [style=none] (125) at (5.25, 7) {};
		\node [style=none] (126) at (5.75, 7) {};
		\node [style=none] (127) at (7.25, 7) {};
		\node [style=none] (128) at (7.75, 7) {};
		\node [style=none] (129) at (8.75, 7) {};
		\node [style=none] (130) at (9.5, 7) {};
		\node [style=none] (131) at (11, 7) {};
		\node [style=none] (132) at (11.75, 7) {};
		\node [style=none] (133) at (13, 7) {};
		\node [style=none] (134) at (15.75, 7) {};
		\node [style=none] (135) at (17, 7) {};
		\node [style=none] (136) at (3.25, 7) {};
		\node [style=none] (137) at (3.75, 8) {};
		\node [style=none] (138) at (8.5, 8) {};
		\node [style=none] (139) at (9, 7) {};
		\node [style=none] (140) at (8.5, 6) {};
		\node [style=none] (141) at (3.75, 6) {};
		\node [style=none] (142) at (9.25, 7) {};
		\node [style=none] (143) at (9.75, 6) {};
		\node [style=none] (144) at (10.75, 6) {};
		\node [style=none] (145) at (11.25, 7) {};
		\node [style=none] (146) at (9.75, 8) {};
		\node [style=none] (147) at (10.75, 8) {};
		\node [style=none] (150) at (11.5, 7) {};
		\node [style=none] (151) at (12, 8) {};
		\node [style=none] (152) at (12.75, 8) {};
		\node [style=none] (153) at (13.25, 7) {};
		\node [style=none] (154) at (12.75, 6) {};
		\node [style=none] (155) at (12, 6) {};
		\node [style=none] (156) at (15.5, 7) {};
		\node [style=none] (157) at (16, 8) {};
		\node [style=none] (158) at (16.75, 8) {};
		\node [style=none] (159) at (17.25, 7) {};
		\node [style=none] (160) at (16.75, 6) {};
		\node [style=none] (161) at (16, 6) {};
		\node [style=none] (162) at (19.5, 7) {};
		\node [style=none] (163) at (20.5, 7) {};
		\node [style=none] (164) at (19.75, 8) {};
		\node [style=none] (165) at (20.25, 8) {};
		\node [style=none] (166) at (20.75, 7) {};
		\node [style=none] (167) at (20.25, 6) {};
		\node [style=none] (168) at (19.75, 6) {};
		\node [style=none] (169) at (19.25, 7) {};
		\node [style=none] (170) at (23.75, 7) {};
		\node [style=none] (171) at (24.75, 7) {};
		\node [style=none] (172) at (25.5, 7) {};
		\node [style=none] (173) at (26.75, 7) {};
		\node [style=none] (174) at (27.5, 7) {};
		\node [style=none] (175) at (28.5, 7) {};
		\node [style=none] (176) at (29.25, 7) {};
		\node [style=none] (177) at (30.75, 7) {};
		\node [style=none] (178) at (31.5, 7) {};
		\node [style=none] (179) at (32.75, 7) {};
		\node [style=none] (180) at (33.25, 7) {};
		\node [style=none] (181) at (34.5, 7) {};
		\node [style=none] (182) at (35, 7) {};
		\node [style=none] (183) at (36.75, 7) {};
		\node [style=none] (184) at (37.5, 7) {};
		\node [style=none] (185) at (38.5, 7) {};
		\node [style=none] (186) at (39.25, 7) {};
		\node [style=none] (187) at (40.25, 7) {};
		\node [style=none] (188) at (41, 7) {};
		\node [style=none] (189) at (42.75, 7) {};
		\node [style=none] (190) at (44.5, 7) {};
		\node [style=none] (191) at (47.25, 7) {};
		\node [style=none] (192) at (48.75, 7) {};
		\node [style=none] (193) at (43.5, 7) {};
		\node [style=none] (194) at (23.5, 7) {};
		\node [style=none] (195) at (24, 8) {};
		\node [style=none] (196) at (24.5, 8) {};
		\node [style=none] (197) at (25, 7) {};
		\node [style=none] (198) at (24.5, 6) {};
		\node [style=none] (199) at (24, 6) {};
		\node [style=none] (200) at (25.25, 7) {};
		\node [style=none] (201) at (25.75, 8) {};
		\node [style=none] (202) at (26.5, 8) {};
		\node [style=none] (203) at (27, 7) {};
		\node [style=none] (204) at (26.5, 6) {};
		\node [style=none] (205) at (25.75, 6) {};
		\node [style=none] (206) at (27.25, 7) {};
		\node [style=none] (207) at (27.75, 8) {};
		\node [style=none] (208) at (28.25, 8) {};
		\node [style=none] (209) at (28.75, 7) {};
		\node [style=none] (210) at (28.25, 6) {};
		\node [style=none] (211) at (27.75, 6) {};
		\node [style=none] (212) at (29, 7) {};
		\node [style=none] (213) at (29.5, 8) {};
		\node [style=none] (214) at (30.5, 8) {};
		\node [style=none] (215) at (31, 7) {};
		\node [style=none] (216) at (30.5, 6) {};
		\node [style=none] (217) at (29.5, 6) {};
		\node [style=none] (218) at (31.25, 7) {};
		\node [style=none] (219) at (31.75, 8) {};
		\node [style=none] (220) at (36.5, 8) {};
		\node [style=none] (221) at (37, 7) {};
		\node [style=none] (222) at (36.5, 6) {};
		\node [style=none] (223) at (31.75, 6) {};
		\node [style=none] (224) at (37.25, 7) {};
		\node [style=none] (225) at (37.75, 8) {};
		\node [style=none] (226) at (38.25, 8) {};
		\node [style=none] (227) at (38.75, 7) {};
		\node [style=none] (228) at (38.25, 6) {};
		\node [style=none] (229) at (37.75, 6) {};
		\node [style=none] (230) at (39, 7) {};
		\node [style=none] (231) at (39.5, 8) {};
		\node [style=none] (232) at (40, 8) {};
		\node [style=none] (233) at (40.5, 7) {};
		\node [style=none] (234) at (40, 6) {};
		\node [style=none] (235) at (39.5, 6) {};
		\node [style=none] (236) at (40.75, 7) {};
		\node [style=none] (237) at (41.25, 8) {};
		\node [style=none] (238) at (42.5, 8) {};
		\node [style=none] (239) at (43, 7) {};
		\node [style=none] (240) at (42.5, 6) {};
		\node [style=none] (241) at (41.25, 6) {};
		\node [style=none] (242) at (43.25, 7) {};
		\node [style=none] (243) at (43.75, 8) {};
		\node [style=none] (244) at (44.25, 8) {};
		\node [style=none] (245) at (44.75, 7) {};
		\node [style=none] (246) at (44.25, 6) {};
		\node [style=none] (247) at (43.75, 6) {};
		\node [style=none] (248) at (47, 7) {};
		\node [style=none] (249) at (47.5, 8) {};
		\node [style=none] (250) at (48.5, 8) {};
		\node [style=none] (251) at (49, 7) {};
		\node [style=none] (252) at (48.5, 6) {};
		\node [style=none] (253) at (47.5, 6) {};
		\node [style=none] (254) at (1, 4) {$\Intervals_{n'}$:};
		\node [style=none] (255) at (1, 7) {$\Spanset_{n'}$:};
		\node [style=none] (256) at (50.75, 7) {\textcolor{mygreen}{$\PartB_j$}};
	\end{pgfonlayer}
	\begin{pgfonlayer}{edgelayer}
		\draw [style=fillGray] (50.center)
			 to (49.center)
			 to [in=30, out=-150, looseness=1.50] (48.center)
			 to (51.center)
			 to [in=-150, out=30, looseness=1.50] cycle;
		\draw [style=fillGray] (55.center)
			 to [in=30, out=-150, looseness=1.50] (54.center)
			 to (52.center)
			 to [in=-150, out=30, looseness=1.50] (53.center)
			 to cycle;
		\draw [style=fillGray] (56.center)
			 to [in=-150, out=30, looseness=1.50] (57.center)
			 to (58.center)
			 to [in=30, out=-150, looseness=1.50] (59.center)
			 to cycle;
		\draw [style=fillBlue] (63.center)
			 to (60.center)
			 to [in=-150, out=30, looseness=1.50] (61.center)
			 to (62.center)
			 to [in=30, out=-150, looseness=1.50] cycle;
		\draw [style=fillBlue] (67.center)
			 to (64.center)
			 to [in=-150, out=30, looseness=1.25] (65.center)
			 to (66.center)
			 to [in=30, out=-150, looseness=1.25] cycle;
		\draw [style=fillBlue] (68.center)
			 to [in=-150, out=30, looseness=1.25] (69.center)
			 to (70.center)
			 to [in=30, out=-150, looseness=1.25] (71.center)
			 to cycle;
		\draw [style=fillBlue] (74.center)
			 to [in=30, out=-150, looseness=1.25] (75.center)
			 to (72.center)
			 to [in=-150, out=30, looseness=1.25] (73.center)
			 to cycle;
		\draw [style=fillBlue] (78.center)
			 to [in=30, out=-150, looseness=1.25] (79.center)
			 to (76.center)
			 to [in=-150, out=30, looseness=1.25] (77.center)
			 to cycle;
		\draw [style=fillBlue] (80.center)
			 to [in=-150, out=30, looseness=1.25] (81.center)
			 to (82.center)
			 to [in=30, out=-150, looseness=1.25] (83.center)
			 to cycle;
		\draw [style=fillBlue] (85.center)
			 to (87.center)
			 to [in=30, out=-150, looseness=1.50] (86.center)
			 to (84.center)
			 to [in=-150, out=30, looseness=1.50] cycle;
		\draw [style=fillBlue] (91.center)
			 to (88.center)
			 to [in=-150, out=30, looseness=1.25] (89.center)
			 to (90.center)
			 to [in=30, out=-150, looseness=1.25] cycle;
		\draw [style=fillGray] (94.center)
			 to [in=30, out=-150, looseness=1.25] (95.center)
			 to (92.center)
			 to [in=-150, out=30, looseness=1.25] (93.center)
			 to cycle;
		\draw [style=fillGray] (98.center)
			 to [in=-150, out=30, looseness=1.25] (99.center)
			 to (97.center)
			 to [in=30, out=-150, looseness=1.25] (96.center)
			 to cycle;
		\draw [style=fillGray] (103.center)
			 to (100.center)
			 to [in=-150, out=30, looseness=1.25] (101.center)
			 to (102.center)
			 to [in=30, out=-150, looseness=1.25] cycle;
		\draw [style=fillBlue] (107.center)
			 to [in=30, out=-150, looseness=1.25] (106.center)
			 to (104.center)
			 to [in=-150, out=30, looseness=1.25] (105.center)
			 to cycle;
		\draw [style=fillBlue] (109.center)
			 to (111.center)
			 to [in=30, out=-150, looseness=1.25] (110.center)
			 to (108.center)
			 to [in=-150, out=30, looseness=1.25] cycle;
		\draw [style=fillBlue] (113.center)
			 to (115.center)
			 to [in=30, out=-150, looseness=1.25] (114.center)
			 to (112.center)
			 to [in=-150, out=30, looseness=1.50] cycle;
		\draw [style=fillBlue] (119.center)
			 to (116.center)
			 to [in=-150, out=30, looseness=1.25] (117.center)
			 to (118.center)
			 to [in=30, out=-150, looseness=1.25] cycle;
		\draw [style=fillGray] (123.center)
			 to (121.center)
			 to [in=30, out=-150, looseness=1.25] (120.center)
			 to (122.center)
			 to [in=-150, out=30, looseness=1.25] cycle;
		\draw [style=interval] (0.center) to (1.center);
		\draw [style=interval] (2.center) to (3.center);
		\draw [style=interval] (4.center) to (5.center);
		\draw [style=interval] (6.center) to (7.center);
		\draw [style=interval] (8.center) to (9.center);
		\draw [style=intDashedBlue] (10.center) to (11.center);
		\draw [style=interval] (12.center) to (13.center);
		\draw [style=interval] (14.center) to (15.center);
		\draw [style=intDashedBlue] (16.center) to (17.center);
		\draw [style=interval] (18.center) to (19.center);
		\draw [style=intDashedBlue] (20.center) to (21.center);
		\draw [style=interval] (22.center) to (23.center);
		\draw [style=intDashedBlue] (24.center) to (25.center);
		\draw [style=interval] (26.center) to (27.center);
		\draw [style=interval] (28.center) to (29.center);
		\draw [style=interval] (30.center) to (31.center);
		\draw [style=interval] (32.center) to (33.center);
		\draw [style=interval] (34.center) to (35.center);
		\draw [style=intDashedBlue] (36.center) to (37.center);
		\draw [style=interval] (38.center) to (39.center);
		\draw [style=intDashedBlue] (40.center) to (41.center);
		\draw [style=interval] (42.center) to (43.center);
		\draw [style=interval] (44.center) to (45.center);
		\draw [style=interval] (46.center) to (47.center);
		\draw [style=intThick] (124.center) to (125.center);
		\draw [style=intThick] (126.center) to (127.center);
		\draw [style=intThick] (128.center) to (129.center);
		\draw [style=interval] (124.center) to (125.center);
		\draw [style=interval] (126.center) to (127.center);
		\draw [style=interval] (128.center) to (129.center);
		\draw [style=intThickDashedBlue] (130.center) to (131.center);
		\draw [style=intDashedBlue] (130.center) to (131.center);
		\draw [style=areaGreen, in=-90, out=180, looseness=1.25] (141.center) to (136.center);
		\draw [style=areaGreen, in=-180, out=90, looseness=1.25] (136.center) to (137.center);
		\draw [style=areaGreen] (137.center) to (138.center);
		\draw [style=areaGreen, in=90, out=0, looseness=1.25] (138.center) to (139.center);
		\draw [style=areaGreen, in=0, out=-90, looseness=1.25] (139.center) to (140.center);
		\draw [style=areaGreen] (140.center) to (141.center);
		\draw [style=areaGreen] (146.center) to (147.center);
		\draw [style=areaGreen, in=90, out=0, looseness=1.25] (147.center) to (145.center);
		\draw [style=areaGreen, in=0, out=-90, looseness=1.25] (145.center) to (144.center);
		\draw [style=areaGreen] (144.center) to (143.center);
		\draw [style=areaGreen, in=-90, out=-180, looseness=1.25] (143.center) to (142.center);
		\draw [style=areaGreen, in=-180, out=90, looseness=1.25] (142.center) to (146.center);
		\draw [style=areaGreen] (155.center) to (154.center);
		\draw [style=areaGreen, in=-90, out=0, looseness=1.25] (154.center) to (153.center);
		\draw [style=areaGreen, in=0, out=90, looseness=1.25] (153.center) to (152.center);
		\draw [style=areaGreen] (152.center) to (151.center);
		\draw [style=areaGreen, in=90, out=-180, looseness=1.25] (151.center) to (150.center);
		\draw [style=areaGreen, in=180, out=-90, looseness=1.25] (150.center) to (155.center);
		\draw [style=intDashedBlue] (132.center) to (133.center);
		\draw [style=intThickDashedBlue] (132.center) to (133.center);
		\draw [style=intThickDashedBlue] (134.center) to (135.center);
		\draw [style=intDashedBlue] (134.center) to (135.center);
		\draw [style=areaGreen, in=180, out=90, looseness=1.25] (156.center) to (157.center);
		\draw [style=areaGreen] (157.center) to (158.center);
		\draw [style=areaGreen, in=90, out=0, looseness=1.25] (158.center) to (159.center);
		\draw [style=areaGreen, in=0, out=-90, looseness=1.25] (159.center) to (160.center);
		\draw [style=areaGreen] (160.center) to (161.center);
		\draw [style=areaGreen, in=-90, out=-180, looseness=1.25] (161.center) to (156.center);
		\draw [style=intDashedBlue] (162.center) to (163.center);
		\draw [style=intThickDashedBlue] (162.center) to (163.center);
		\draw [style=areaGreen] (164.center) to (165.center);
		\draw [style=areaGreen, in=90, out=0, looseness=1.25] (165.center) to (166.center);
		\draw [style=areaGreen, in=0, out=-90, looseness=1.25] (166.center) to (167.center);
		\draw [style=areaGreen] (167.center) to (168.center);
		\draw [style=areaGreen, in=-90, out=-180, looseness=1.25] (168.center) to (169.center);
		\draw [style=areaGreen, in=-180, out=90, looseness=1.25] (169.center) to (164.center);
		\draw [style=intDashedBlue] (170.center) to (171.center);
		\draw [style=intDashedBlue] (172.center) to (173.center);
		\draw [style=intDashedBlue] (174.center) to (175.center);
		\draw [style=intDashedBlue] (176.center) to (177.center);
		\draw [style=intDashedBlue] (184.center) to (185.center);
		\draw [style=intDashedBlue] (186.center) to (187.center);
		\draw [style=intDashedBlue] (188.center) to (189.center);
		\draw [style=intDashedBlue] (193.center) to (190.center);
		\draw [style=intThickDashedBlue] (170.center) to (171.center);
		\draw [style=intThickDashedBlue] (172.center) to (173.center);
		\draw [style=intThickDashedBlue] (174.center) to (175.center);
		\draw [style=intThickDashedBlue] (176.center) to (177.center);
		\draw [style=intThickDashedBlue] (184.center) to (185.center);
		\draw [style=intThickDashedBlue] (186.center) to (187.center);
		\draw [style=intThickDashedBlue] (188.center) to (189.center);
		\draw [style=intThickDashedBlue] (193.center) to (190.center);
		\draw [style=interval] (191.center) to (192.center);
		\draw [style=interval] (178.center) to (179.center);
		\draw [style=interval] (180.center) to (181.center);
		\draw [style=interval] (182.center) to (183.center);
		\draw [style=intThick] (178.center) to (179.center);
		\draw [style=intThick] (180.center) to (181.center);
		\draw [style=intThick] (182.center) to (183.center);
		\draw [style=intThick] (191.center) to (192.center);
		\draw [style=areaGreen, in=-90, out=-180, looseness=1.25] (199.center) to (194.center);
		\draw [style=areaGreen, in=180, out=90, looseness=1.25] (194.center) to (195.center);
		\draw [style=areaGreen] (195.center) to (196.center);
		\draw [style=areaGreen, in=90, out=0, looseness=1.25] (196.center) to (197.center);
		\draw [style=areaGreen, in=0, out=-90, looseness=1.25] (197.center) to (198.center);
		\draw [style=areaGreen] (198.center) to (199.center);
		\draw [style=areaGreen, in=-90, out=-180, looseness=1.25] (205.center) to (200.center);
		\draw [style=areaGreen, in=180, out=90, looseness=1.25] (200.center) to (201.center);
		\draw [style=areaGreen] (201.center) to (202.center);
		\draw [style=areaGreen, in=90, out=0, looseness=1.25] (202.center) to (203.center);
		\draw [style=areaGreen, in=0, out=-90, looseness=1.25] (203.center) to (204.center);
		\draw [style=areaGreen] (204.center) to (205.center);
		\draw [style=areaGreen, in=-90, out=-180, looseness=1.25] (211.center) to (206.center);
		\draw [style=areaGreen, in=180, out=90, looseness=1.25] (206.center) to (207.center);
		\draw [style=areaGreen] (207.center) to (208.center);
		\draw [style=areaGreen, in=90, out=0, looseness=1.25] (208.center) to (209.center);
		\draw [style=areaGreen, in=0, out=-90, looseness=1.25] (209.center) to (210.center);
		\draw [style=areaGreen] (210.center) to (211.center);
		\draw [style=areaGreen, in=-90, out=-180, looseness=1.25] (217.center) to (212.center);
		\draw [style=areaGreen, in=180, out=90, looseness=1.25] (212.center) to (213.center);
		\draw [style=areaGreen] (213.center) to (214.center);
		\draw [style=areaGreen, in=90, out=0, looseness=1.25] (214.center) to (215.center);
		\draw [style=areaGreen, in=0, out=-90, looseness=1.25] (215.center) to (216.center);
		\draw [style=areaGreen] (216.center) to (217.center);
		\draw [style=areaGreen, in=-90, out=180, looseness=1.25] (223.center) to (218.center);
		\draw [style=areaGreen, in=-180, out=90, looseness=1.25] (218.center) to (219.center);
		\draw [style=areaGreen] (219.center) to (220.center);
		\draw [style=areaGreen, in=90, out=0, looseness=1.25] (220.center) to (221.center);
		\draw [style=areaGreen, in=0, out=-90, looseness=1.25] (221.center) to (222.center);
		\draw [style=areaGreen] (222.center) to (223.center);
		\draw [style=areaGreen, in=-90, out=-180, looseness=1.25] (229.center) to (224.center);
		\draw [style=areaGreen, in=-180, out=90, looseness=1.25] (224.center) to (225.center);
		\draw [style=areaGreen] (225.center) to (226.center);
		\draw [style=areaGreen, in=90, out=0, looseness=1.25] (226.center) to (227.center);
		\draw [style=areaGreen, in=0, out=-90, looseness=1.25] (227.center) to (228.center);
		\draw [style=areaGreen] (228.center) to (229.center);
		\draw [style=areaGreen, in=-90, out=-180, looseness=1.25] (235.center) to (230.center);
		\draw [style=areaGreen, in=-180, out=90, looseness=1.25] (230.center) to (231.center);
		\draw [style=areaGreen] (231.center) to (232.center);
		\draw [style=areaGreen, in=90, out=0, looseness=1.25] (232.center) to (233.center);
		\draw [style=areaGreen, in=0, out=-90, looseness=1.25] (233.center) to (234.center);
		\draw [style=areaGreen] (234.center) to (235.center);
		\draw [style=areaGreen, in=-90, out=-180, looseness=1.25] (241.center) to (236.center);
		\draw [style=areaGreen, in=-180, out=90, looseness=1.25] (236.center) to (237.center);
		\draw [style=areaGreen] (237.center) to (238.center);
		\draw [style=areaGreen, in=90, out=0, looseness=1.25] (238.center) to (239.center);
		\draw [style=areaGreen, in=0, out=-90, looseness=1.25] (239.center) to (240.center);
		\draw [style=areaGreen] (240.center) to (241.center);
		\draw [style=areaGreen, in=-90, out=-180, looseness=1.25] (247.center) to (242.center);
		\draw [style=areaGreen, in=180, out=90, looseness=1.25] (242.center) to (243.center);
		\draw [style=areaGreen] (243.center) to (244.center);
		\draw [style=areaGreen, in=90, out=0, looseness=1.25] (244.center) to (245.center);
		\draw [style=areaGreen, in=0, out=-90, looseness=1.25] (245.center) to (246.center);
		\draw [style=areaGreen] (246.center) to (247.center);
		\draw [style=areaGreen, in=-90, out=-180, looseness=1.25] (253.center) to (248.center);
		\draw [style=areaGreen, in=-180, out=90, looseness=1.25] (248.center) to (249.center);
		\draw [style=areaGreen] (249.center) to (250.center);
		\draw [style=areaGreen, in=90, out=0, looseness=1.25] (250.center) to (251.center);
		\draw [style=areaGreen, in=0, out=-90, looseness=1.25] (251.center) to (252.center);
		\draw [style=areaGreen] (252.center) to (253.center);
	\end{pgfonlayer}
\end{tikzpicture}
 
\end{small}
\caption{Example span partition : {\protect\tikzInterval} -- intervals of $\Intervals_{j}$, {\protect\tikzIntDashedBlue} -- intervals of $\Intervals_{n'} \setminus \Intervals_{j}$ (future intervals), {\protect\tikzIntThick} -- spans of $\Spanset_{j}$, {\protect\tikzIntThickDashedBlue} -- spans of $\Spanset_{n'} \setminus \Spanset_{j}$ (future spans), {\protect\tikzAreaGreen} -- blocks of partition $\PartB_j$.}
\label{fig:4}
\end{figure}
To be more formal, let us first enumerate all passive spans, i.e., let $\Spanset_{n'} \setminus \Spanset_j = \set{ P_1, P_2, \ldots, P_{p_j} }$, where $P_i < P_{i+1}$ for $i \in \num{p_j-1}$.
Let $\mathcal{A}_1= \set{ \Spanel \in \Spanset_{n'}: \Spanel < P_1 }$, $\mathcal{A}_{p_j+1}=\set{ \Spanel \in \Spanset_{n'}: P_{p_j} < \Spanel }$ and $\mathcal{A}_{i+1}=\set{\Spanel \in \Spanset_{n'}: P_i < \Spanel < P_{i+1}}$ for $i \in \num{p_j-1}$. We define the partition $\PartB_j \eqdef \bigcup_{i \in \num{p_j+1}: \mathcal{A}_i \neq \emptyset} \set{ \mathcal{A}_i } \cup \bigcup_{i \in \num{p_j}} \set{ \set{P_i} }$. The non empty blocks  $\mathcal{A}_i$ of the partition are called active, while the blocks $\set{ P_i }$ are called passive. Let us now enumerate blocks of $\PartB_j$ according to $<$ order, i.e., let $\PartB_j=\set{ \Block^j_1, \ldots, \Block^j_{b_j} }$, where $\Block^j_i < \Block^j_{i+1}$ for $i \in \num{b_j-1}$. Note that as long as there is no $k$-clique in $\Intervals_j$, all blocks of $\PartB_j$ are passive singleton blocks. We now plan to relate the cost of \Recolor algorithm in step $j$ with the span partition that we introduced. For that, let us first make a few more definitions. For any set $\Block$ of consecutive spans in $\Spanset_{n'}$ (according to $<$ order) we denote by $\intervals{j}{\Block}$ the set of all intervals in $\Intervals_j$ intersecting a span in $\Block$.  Observe that $\intervals{j}{\Block}$ are consecutive (according to $\linearOrder$) intervals of $\Intervals_j$. Observe also, that since two consecutive spans of $\Spanset_{n'}$ are at distance less then $1$, there is a span $\Spanel^{(j)} \in \Spanset_{n'}$ such that $\Spanel^{(j)} \subseteq \interval_j$ (if there are more than one such spans, take as $\Spanel^{(j)}$ any of them). Then, $\Spanel^{(j)}$ is passive in step $j-1$, and hence it constitutes a passive singleton block in $\PartB_{j-1}$. Let $s(j)$ be such that $\Block^{j-1}_{s(j)}=\set{\Spanel^{(j)}}$, i.e., $\set{\Spanel^{(j)}}$ is the $s(j)$'th block of $\PartB_{j-1}$ in $<$ order. The observations that we make next state, that the intervals recolored by \Recolor algorithm in step $j$ span over only a small number $\gamma=\fO{k^3}$ of consecutive blocks of partition $\PartB_j$. Due to space limitations, the proof of Observation~\ref{obs:algToBlR} is deferred to Appendix~\ref{app:connection}, while the proof of Observation~\ref{obs:algToBlL} is analogous to the proof of Observation~\ref{obs:algToBlR}. 
\begin{observation}\label{obs:algToBlR}Let $\gamma=4k^3+2k(k+3)-1$ and $\Block^{j-1}_{s(j)+i}\eqdef \emptyset$ for $s(j)+i > b_j$. Then,\\
$\infix{\Intervals_j}{\interval_j}{\intervalTwo^{(R)}} \subseteq \set{\interval_j} \cup  \intervals{j-1}{\Block^{j-1}_{s(j)} \cup \Block^{j-1}_{s(j)+1} \cup \ldots \cup \Block^{j-1}_{s(j)+\gamma}}$. 
\end{observation}

\begin{observation}\label{obs:algToBlL}Let $\gamma=4k^3+2k(k+3)-1$ and $\Block^{j-1}_{s(j)-i}\eqdef \emptyset$ if $s(j)-i < 1$. Then,\\
$\infix{\Intervals_j}{\intervalTwo^{(L)}}{\interval_j} \subseteq \set{\interval_j} \cup \intervals{j-1}{\Block^{j-1}_{s(j)-\gamma} \cup \Block^{j-1}_{s(j)-\gamma+1} \cup \ldots \cup \Block^{j-1}_{s(j)}}$, 
\end{observation}

We are now ready to introduce the appropriate \Frog game, that will allow us to charge for the recolorings applied by $\Recolor$ algorithm. We run the \Frog game alongside the interval insertions. We maintain a number of invariants that relate the blocks of $\PartB_j$ with the ranks in the maintained \Frog game. 

We now move on to describing the \Frog game. We start by setting the parameter $\jn \eqdef (k+1) \gamma$ for $\gamma=4k^3+2k(k+3)+1$ as in Observation~\ref{obs:algToBlL} and Observation~\ref{obs:algToBlR}.
The idea is that for each insertion step $j$ we execute some jumps in the \Frog game. Thus, by $\ti(j)$ we denote a turn in \Frog game, where all the jumps corresponding to insertion of $\interval_j$ were already performed.
From now on we will denote $|\Spanset_{n'}|$ by $\sizesp$. 
We initialize the \Frog game with the initial rank sequence $\RRanks{1}=\bra{\rrank{1}{1},\ldots,\rrank{1}{(k+1)\sizesp}}$, where $\rrank{1}{i}=\Rini \eqdef k$ for $i \in \num{(k+1)\sizesp}$. We let $\ti(0)=1$, i.e., the initial empty instance $\Intervals_0=\emptyset$ is associated with rank sequence $\RRanks{1}$. Note that there is no insertion in step $0$. Recall that $\RRanks{\ti(j)}=\bra{\rrank{\ti(j)}{1},\ldots,\rrank{\ti(j)}{(k+1)\sizesp-\ti(j)+1}}$ is the rank sequence in the Frogs game in turn $\ti(j)$. The invariants maintained in step $j$ are:
 \begin{enumerate}
  \item There are at least as many ranks as there are blocks in the partition: $(k+1)\sizesp-\ti(j)+1 \geq b_j$
  \item There is a function $f: \num{(k+1)\sizesp-\ti(j)+1} \mapsto \num{b_j}$ assigning each rank in $\RRanks{\ti(j)}$ to a block in $\PartB_j=\set{\Block^j_1,\ldots,\Block^j_{b_j}}$.
\item The ranks assigned to the same block are consecutive, or equvalently $f$ is non-decreasing, i.e., $i' < i''$ implies $f(i') \leq f(i'')$ 
  \item The rank assigned to a block bounds the number of intervals intersecting this block, formally: $\size{ \intervals{j}{\Block^j_i} } \leq \rrank{\ti(j)}{i'}$ if $i =f(i')$
  \item Each passive block $\Block \in \PartB_j$ is assigned at least $k+1-\size{\intervals{j}{\Block}}$ (consecutive) ranks, while each active block is assigned precisely one rank
 \end{enumerate}
The example rank assignment satisfying the invariants is shown in Figure~\ref{fig:6}. Note that active blocks are asigned one rank, while the passive ones are assigned as many ranks as Invariant~5 requires.

For $j = 0$, i.e., when $\ti(j)=1$, there is precisely $(k+1)\sizesp$ ranks in $\RRanks{\ti(j)}$ and precisely $\sizesp=b_0$ singleton passive blocks in $\PartB_j$ (and no active blocks). Note that $\size{\intervals{0}{\Block}}=0$ for all $\Block \in \PartB_j$. We then set $f(i)=(i-1) \; \divi \; (k+1)+1$. In other words each block in $\PartB_j$ is assigned $(k+1)$ consecutive ranks. With so defined function $f$, all invariants are clearly satisfied for $j=0$.
\begin{figure}[hbpt]
\begin{small}

\centering

\begin{tikzpicture}[xscale={2\textwidth/53.5}, yscale=15]
	\begin{pgfonlayer}{nodelayer}
		\node [style=none] (0) at (0, 0) {};
		\node [style=none] (1) at (6.5, 0) {};
		\node [style=none] (2) at (2, 2) {};
		\node [style=none] (3) at (8.5, 2) {};
		\node [style=none] (4) at (3.5, 4) {};
		\node [style=none] (5) at (10, 4) {};
		\node [style=none] (6) at (7, 0) {};
		\node [style=none] (7) at (13.75, 0) {};
		\node [style=none] (8) at (9, 2) {};
		\node [style=none] (9) at (15.75, 2) {};
		\node [style=none] (10) at (10.75, 4) {};
		\node [style=none] (11) at (17.5, 4) {};
		\node [style=none] (12) at (14.5, 0) {};
		\node [style=none] (13) at (21.25, 0) {};
		\node [style=none] (14) at (20, 4) {};
		\node [style=none] (15) at (26.5, 4) {};
		\node [style=none] (16) at (18, 2) {};
		\node [style=none] (17) at (24.5, 2) {};
		\node [style=none] (18) at (23, 0) {};
		\node [style=none] (19) at (29.5, 0) {};
		\node [style=none] (20) at (43, 4) {};
		\node [style=none] (21) at (49.75, 4) {};
		\node [style=none] (22) at (25.25, 2) {};
		\node [style=none] (23) at (31.5, 2) {};
		\node [style=none] (24) at (32, 2) {};
		\node [style=none] (25) at (38.75, 2) {};
		\node [style=none] (26) at (28.25, 4) {};
		\node [style=none] (27) at (34.75, 4) {};
		\node [style=none] (30) at (33.75, 0) {};
		\node [style=none] (31) at (40.5, 0) {};
		\node [style=none] (32) at (35.5, 4) {};
		\node [style=none] (33) at (42.25, 4) {};
		\node [style=none] (34) at (39.5, 2) {};
		\node [style=none] (35) at (46.25, 2) {};
		\node [style=none] (36) at (41.25, 0) {};
		\node [style=none] (37) at (48, 0) {};
		\node [style=none] (40) at (47, 2) {};
		\node [style=none] (41) at (53.5, 2) {};
		\node [style=none] (48) at (3.5, 9.25) {};
		\node [style=none] (49) at (6.5, 9) {};
		\node [style=none] (50) at (6.5, -1.25) {};
		\node [style=none] (51) at (3.5, -1) {};
		\node [style=none] (52) at (7, -1) {};
		\node [style=none] (53) at (8.5, -1.25) {};
		\node [style=none] (54) at (7, 9.25) {};
		\node [style=none] (55) at (8.5, 9) {};
		\node [style=none] (56) at (9, 9.25) {};
		\node [style=none] (57) at (10, 9) {};
		\node [style=none] (58) at (10, -1.25) {};
		\node [style=none] (59) at (9, -1) {};
		\node [style=none] (60) at (10.75, 9.25) {};
		\node [style=none] (61) at (13.75, 9) {};
		\node [style=none] (62) at (13.75, -1.25) {};
		\node [style=none] (63) at (10.75, -1) {};
		\node [style=none] (64) at (14.5, -1) {};
		\node [style=none] (65) at (15.75, -1.25) {};
		\node [style=none] (66) at (15.75, 9) {};
		\node [style=none] (67) at (14.5, 9.25) {};
		\node [style=none] (68) at (20, 9.25) {};
		\node [style=none] (69) at (21.25, 9) {};
		\node [style=none] (70) at (21.25, -1.25) {};
		\node [style=none] (71) at (20, -1) {};
		\node [style=none] (72) at (23, -1) {};
		\node [style=none] (73) at (24.5, -1.25) {};
		\node [style=none] (74) at (24.5, 9) {};
		\node [style=none] (75) at (23, 9.25) {};
		\node [style=none] (84) at (33.75, -1) {};
		\node [style=none] (85) at (34.75, -1.25) {};
		\node [style=none] (86) at (33.75, 9.25) {};
		\node [style=none] (87) at (34.75, 9) {};
		\node [style=none] (88) at (35.5, -1) {};
		\node [style=none] (89) at (38.75, -1.25) {};
		\node [style=none] (90) at (38.75, 9) {};
		\node [style=none] (91) at (35.5, 9.25) {};
		\node [style=none] (92) at (25.25, -1) {};
		\node [style=none] (93) at (26.5, -1.25) {};
		\node [style=none] (94) at (26.5, 9) {};
		\node [style=none] (95) at (25.25, 9.25) {};
		\node [style=none] (96) at (28.25, 9.25) {};
		\node [style=none] (97) at (29.5, 9) {};
		\node [style=none] (98) at (28.25, -1) {};
		\node [style=none] (99) at (29.5, -1.25) {};
		\node [style=none] (100) at (39.5, -1) {};
		\node [style=none] (101) at (40.5, -1.25) {};
		\node [style=none] (102) at (40.5, 9) {};
		\node [style=none] (103) at (39.5, 9.25) {};
		\node [style=none] (104) at (41.25, -1) {};
		\node [style=none] (105) at (42.25, -1.25) {};
		\node [style=none] (106) at (41.25, 9.25) {};
		\node [style=none] (107) at (42.25, 9) {};
		\node [style=none] (108) at (47, -1) {};
		\node [style=none] (109) at (48, -1.25) {};
		\node [style=none] (110) at (47, 9.25) {};
		\node [style=none] (111) at (48, 9) {};
		\node [style=none] (112) at (43, -1) {};
		\node [style=none] (113) at (46.25, -1.25) {};
		\node [style=none] (114) at (43, 9.25) {};
		\node [style=none] (115) at (46.25, 9) {};
		\node [style=none] (124) at (3.5, 7) {};
		\node [style=none] (125) at (6.5, 7) {};
		\node [style=none] (126) at (7, 7) {};
		\node [style=none] (127) at (8.5, 7) {};
		\node [style=none] (128) at (9, 7) {};
		\node [style=none] (129) at (10, 7) {};
		\node [style=none] (130) at (10.75, 7) {};
		\node [style=none] (131) at (13.75, 7) {};
		\node [style=none] (132) at (14.5, 7) {};
		\node [style=none] (133) at (15.75, 7) {};
		\node [style=none] (134) at (20, 7) {};
		\node [style=none] (135) at (21.25, 7) {};
		\node [style=none] (136) at (3.25, 7) {};
		\node [style=none] (137) at (3.75, 8) {};
		\node [style=none] (138) at (9.75, 8) {};
		\node [style=none] (139) at (10.25, 7) {};
		\node [style=none] (140) at (9.75, 6) {};
		\node [style=none] (141) at (3.75, 6) {};
		\node [style=none] (142) at (10.5, 7) {};
		\node [style=none] (143) at (11, 6) {};
		\node [style=none] (144) at (13.5, 6) {};
		\node [style=none] (145) at (14, 7) {};
		\node [style=none] (146) at (11, 8) {};
		\node [style=none] (147) at (13.5, 8) {};
		\node [style=none] (150) at (14.25, 7) {};
		\node [style=none] (151) at (14.75, 8) {};
		\node [style=none] (152) at (15.5, 8) {};
		\node [style=none] (153) at (16, 7) {};
		\node [style=none] (154) at (15.5, 6) {};
		\node [style=none] (155) at (14.75, 6) {};
		\node [style=none] (156) at (19.75, 7) {};
		\node [style=none] (157) at (20.25, 8) {};
		\node [style=none] (158) at (21, 8) {};
		\node [style=none] (159) at (21.5, 7) {};
		\node [style=none] (160) at (21, 6) {};
		\node [style=none] (161) at (20.25, 6) {};
		\node [style=none] (162) at (23, 7) {};
		\node [style=none] (163) at (24.5, 7) {};
		\node [style=none] (164) at (23.25, 8) {};
		\node [style=none] (165) at (24.25, 8) {};
		\node [style=none] (166) at (24.75, 7) {};
		\node [style=none] (167) at (24.25, 6) {};
		\node [style=none] (168) at (23.25, 6) {};
		\node [style=none] (169) at (22.75, 7) {};
		\node [style=none] (174) at (33.75, 7) {};
		\node [style=none] (175) at (34.75, 7) {};
		\node [style=none] (176) at (35.5, 7) {};
		\node [style=none] (177) at (38.75, 7) {};
		\node [style=none] (178) at (25.25, 7) {};
		\node [style=none] (179) at (26.5, 7) {};
		\node [style=none] (180) at (28.25, 7) {};
		\node [style=none] (181) at (29.5, 7) {};
		\node [style=none] (182) at (39.5, 7) {};
		\node [style=none] (183) at (40.5, 7) {};
		\node [style=none] (184) at (41.25, 7) {};
		\node [style=none] (185) at (42.25, 7) {};
		\node [style=none] (186) at (47, 7) {};
		\node [style=none] (187) at (48, 7) {};
		\node [style=none] (188) at (43, 7) {};
		\node [style=none] (189) at (46.25, 7) {};
		\node [style=none] (206) at (33.5, 7) {};
		\node [style=none] (207) at (34, 8) {};
		\node [style=none] (208) at (34.5, 8) {};
		\node [style=none] (209) at (35, 7) {};
		\node [style=none] (210) at (34.5, 6) {};
		\node [style=none] (211) at (34, 6) {};
		\node [style=none] (212) at (35.25, 7) {};
		\node [style=none] (213) at (35.75, 8) {};
		\node [style=none] (214) at (38.5, 8) {};
		\node [style=none] (215) at (39, 7) {};
		\node [style=none] (216) at (38.5, 6) {};
		\node [style=none] (217) at (35.75, 6) {};
		\node [style=none] (218) at (39.25, 7) {};
		\node [style=none] (219) at (39.75, 8) {};
		\node [style=none] (220) at (40.25, 8) {};
		\node [style=none] (221) at (40.75, 7) {};
		\node [style=none] (222) at (40.25, 6) {};
		\node [style=none] (223) at (39.75, 6) {};
		\node [style=none] (224) at (41, 7) {};
		\node [style=none] (225) at (41.5, 8) {};
		\node [style=none] (226) at (42, 8) {};
		\node [style=none] (227) at (42.5, 7) {};
		\node [style=none] (228) at (42, 6) {};
		\node [style=none] (229) at (41.5, 6) {};
		\node [style=none] (230) at (46.75, 7) {};
		\node [style=none] (231) at (47.25, 8) {};
		\node [style=none] (232) at (47.75, 8) {};
		\node [style=none] (233) at (48.25, 7) {};
		\node [style=none] (234) at (47.75, 6) {};
		\node [style=none] (235) at (47.25, 6) {};
		\node [style=none] (236) at (42.75, 7) {};
		\node [style=none] (237) at (43.25, 8) {};
		\node [style=none] (238) at (46, 8) {};
		\node [style=none] (239) at (46.5, 7) {};
		\node [style=none] (240) at (46, 6) {};
		\node [style=none] (241) at (43.25, 6) {};
		\node [style=none] (254) at (1, 4) {$\Intervals_{n'}$:};
		\node [style=none] (255) at (1, 7) {$\Spanset_{n'}$:};
		\node [style=none] (256) at (50.25, 7) {\textcolor{mygreen}{$\PartB_j$}};
		\node [style=none] (257) at (25, 7) {};
		\node [style=none] (258) at (25.5, 8) {};
		\node [style=none] (259) at (29.25, 8) {};
		\node [style=none] (260) at (29.75, 7) {};
		\node [style=none] (261) at (29.25, 6) {};
		\node [style=none] (262) at (25.5, 6) {};
		\node [style=textRotated] (263) at (7.25, 12) {$r_1$};
		\node [style=none] (264) at (3.25, 9.5) {};
		\node [style=none] (265) at (10.25, 9.5) {};
		\node [style=none] (266) at (14, 9.5) {};
		\node [style=none] (267) at (10.5, 9.5) {};
		\node [style=textRotated] (268) at (13.25, 12.5) {$r_2, r_3$};
		\node [style=none] (269) at (16, 9.5) {};
		\node [style=none] (270) at (14.25, 9.5) {};
		\node [style=textRotated] (271) at (16.25, 12.5) {$r_4, r_5$};
		\node [style=none] (272) at (21.5, 9.5) {};
		\node [style=none] (273) at (19.75, 9.5) {};
		\node [style=none] (274) at (24.75, 9.5) {};
		\node [style=none] (275) at (22.75, 9.5) {};
		\node [style=none] (276) at (29.75, 9.5) {};
		\node [style=none] (277) at (25, 9.5) {};
		\node [style=none] (278) at (35, 9.5) {};
		\node [style=none] (279) at (33.5, 9.5) {};
		\node [style=none] (280) at (39, 9.5) {};
		\node [style=none] (281) at (35.25, 9.5) {};
		\node [style=none] (282) at (40.75, 9.5) {};
		\node [style=none] (283) at (39.25, 9.5) {};
		\node [style=none] (284) at (42.5, 9.5) {};
		\node [style=none] (285) at (41, 9.5) {};
		\node [style=none] (286) at (46.5, 9.5) {};
		\node [style=none] (287) at (42.75, 9.5) {};
		\node [style=none] (288) at (48.25, 9.5) {};
		\node [style=none] (289) at (46.75, 9.5) {};
		\node [style=textRotated] (290) at (21.75, 12.5) {$r_6, r_7$};
		\node [style=textRotated] (291) at (24.75, 12.5) {$r_8, r_9$};
		\node [style=textRotated] (292) at (28, 12.25) {$r_{10}$};
		\node [style=textRotated] (293) at (35.5, 13) {$r_{11}, r_{12}$};
		\node [style=textRotated] (294) at (38.5, 13) {$r_{13}, r_{14}$};
		\node [style=textRotated] (295) at (40.5, 12.25) {$r_{15}$};
		\node [style=textRotated] (296) at (43.25, 13) {$r_{16}, r_{17}$};
		\node [style=textRotated] (297) at (46.75, 13.5) {$r_{18}, r_{19}, r_{20}$};
		\node [style=textRotated] (298) at (48.75, 13) {$r_{21}, r_{22}$};
	\end{pgfonlayer}
	\begin{pgfonlayer}{edgelayer}
		\draw [style=fillGray] (50.center)
			 to (49.center)
			 to [in=30, out=-150, looseness=1.50] (48.center)
			 to (51.center)
			 to [in=-150, out=30, looseness=1.50] cycle;
		\draw [style=fillGray] (55.center)
			 to [in=30, out=-150, looseness=1.50] (54.center)
			 to (52.center)
			 to [in=-150, out=30, looseness=1.50] (53.center)
			 to cycle;
		\draw [style=fillGray] (56.center)
			 to [in=-150, out=30, looseness=1.50] (57.center)
			 to (58.center)
			 to [in=30, out=-150, looseness=1.50] (59.center)
			 to cycle;
		\draw [style=fillBlue] (63.center)
			 to (60.center)
			 to [in=-150, out=30, looseness=1.50] (61.center)
			 to (62.center)
			 to [in=30, out=-150, looseness=1.50] cycle;
		\draw [style=fillBlue] (67.center)
			 to (64.center)
			 to [in=-150, out=30, looseness=1.25] (65.center)
			 to (66.center)
			 to [in=30, out=-150, looseness=1.25] cycle;
		\draw [style=fillBlue] (68.center)
			 to [in=-150, out=30, looseness=1.25] (69.center)
			 to (70.center)
			 to [in=30, out=-150, looseness=1.25] (71.center)
			 to cycle;
		\draw [style=fillBlue] (74.center)
			 to [in=30, out=-150, looseness=1.25] (75.center)
			 to (72.center)
			 to [in=-150, out=30, looseness=1.25] (73.center)
			 to cycle;
		\draw [style=fillBlue] (85.center)
			 to (87.center)
			 to [in=30, out=-150, looseness=1.50] (86.center)
			 to (84.center)
			 to [in=-150, out=30, looseness=1.50] cycle;
		\draw [style=fillBlue] (91.center)
			 to (88.center)
			 to [in=-150, out=30, looseness=1.25] (89.center)
			 to (90.center)
			 to [in=30, out=-150, looseness=1.25] cycle;
		\draw [style=fillGray] (94.center)
			 to [in=30, out=-150, looseness=1.25] (95.center)
			 to [in=90, out=-90] (92.center)
			 to [in=-150, out=30, looseness=1.25] (93.center)
			 to cycle;
		\draw [style=fillGray] (98.center)
			 to [in=-150, out=30, looseness=1.25] (99.center)
			 to [in=270, out=90] (97.center)
			 to [in=30, out=-150, looseness=1.25] (96.center)
			 to cycle;
		\draw [style=fillGray] (103.center)
			 to (100.center)
			 to [in=-150, out=30, looseness=1.25] (101.center)
			 to (102.center)
			 to [in=30, out=-150, looseness=1.25] cycle;
		\draw [style=fillBlue] (107.center)
			 to [in=30, out=-150, looseness=1.25] (106.center)
			 to (104.center)
			 to [in=-150, out=30, looseness=1.25] (105.center)
			 to cycle;
		\draw [style=fillBlue] (109.center)
			 to (111.center)
			 to [in=30, out=-150, looseness=1.25] (110.center)
			 to (108.center)
			 to [in=-150, out=30, looseness=1.25] cycle;
		\draw [style=fillBlue] (113.center)
			 to (115.center)
			 to [in=30, out=-150, looseness=1.25] (114.center)
			 to (112.center)
			 to [in=-150, out=30, looseness=1.50] cycle;
		\draw [style=interval] (0.center) to (1.center);
		\draw [style=interval] (2.center) to (3.center);
		\draw [style=interval] (4.center) to (5.center);
		\draw [style=interval] (6.center) to (7.center);
		\draw [style=interval] (8.center) to (9.center);
		\draw [style=intDashedBlue] (10.center) to (11.center);
		\draw [style=interval] (12.center) to (13.center);
		\draw [style=interval] (14.center) to (15.center);
		\draw [style=intDashedBlue] (16.center) to (17.center);
		\draw [style=interval] (18.center) to (19.center);
		\draw [style=intDashedBlue] (20.center) to (21.center);
		\draw [style=interval] (22.center) to (23.center);
		\draw [style=intDashedBlue] (24.center) to (25.center);
		\draw [style=interval] (26.center) to (27.center);
		\draw [style=interval] (30.center) to (31.center);
		\draw [style=interval] (32.center) to (33.center);
		\draw [style=interval] (34.center) to (35.center);
		\draw [style=intDashedBlue] (36.center) to (37.center);
		\draw [style=intDashedBlue] (40.center) to (41.center);
		\draw [style=intThick] (124.center) to (125.center);
		\draw [style=intThick] (126.center) to (127.center);
		\draw [style=intThick] (128.center) to (129.center);
		\draw [style=interval] (124.center) to (125.center);
		\draw [style=interval] (126.center) to (127.center);
		\draw [style=interval] (128.center) to (129.center);
		\draw [style=intThickDashedBlue] (130.center) to (131.center);
		\draw [style=intDashedBlue] (130.center) to (131.center);
		\draw [style=areaGreen, in=-90, out=180, looseness=1.25] (141.center) to (136.center);
		\draw [style=areaGreen, in=-180, out=90, looseness=1.25] (136.center) to (137.center);
		\draw [style=areaGreen] (137.center) to (138.center);
		\draw [style=areaGreen, in=90, out=0, looseness=1.25] (138.center) to (139.center);
		\draw [style=areaGreen, in=0, out=-90, looseness=1.25] (139.center) to (140.center);
		\draw [style=areaGreen] (140.center) to (141.center);
		\draw [style=areaGreen] (146.center) to (147.center);
		\draw [style=areaGreen, in=90, out=0, looseness=1.25] (147.center) to (145.center);
		\draw [style=areaGreen, in=0, out=-90, looseness=1.25] (145.center) to (144.center);
		\draw [style=areaGreen] (144.center) to (143.center);
		\draw [style=areaGreen, in=-90, out=-180, looseness=1.25] (143.center) to (142.center);
		\draw [style=areaGreen, in=-180, out=90, looseness=1.25] (142.center) to (146.center);
		\draw [style=areaGreen] (155.center) to (154.center);
		\draw [style=areaGreen, in=-90, out=0, looseness=1.25] (154.center) to (153.center);
		\draw [style=areaGreen, in=0, out=90, looseness=1.25] (153.center) to (152.center);
		\draw [style=areaGreen] (152.center) to (151.center);
		\draw [style=areaGreen, in=90, out=-180, looseness=1.25] (151.center) to (150.center);
		\draw [style=areaGreen, in=180, out=-90, looseness=1.25] (150.center) to (155.center);
		\draw [style=intDashedBlue] (132.center) to (133.center);
		\draw [style=intThickDashedBlue] (132.center) to (133.center);
		\draw [style=intThickDashedBlue] (134.center) to (135.center);
		\draw [style=intDashedBlue] (134.center) to (135.center);
		\draw [style=areaGreen, in=180, out=90, looseness=1.25] (156.center) to (157.center);
		\draw [style=areaGreen] (157.center) to (158.center);
		\draw [style=areaGreen, in=90, out=0, looseness=1.25] (158.center) to (159.center);
		\draw [style=areaGreen, in=0, out=-90, looseness=1.25] (159.center) to (160.center);
		\draw [style=areaGreen] (160.center) to (161.center);
		\draw [style=areaGreen, in=-90, out=-180, looseness=1.25] (161.center) to (156.center);
		\draw [style=intDashedBlue] (162.center) to (163.center);
		\draw [style=intThickDashedBlue] (162.center) to (163.center);
		\draw [style=areaGreen] (164.center) to (165.center);
		\draw [style=areaGreen, in=90, out=0, looseness=1.25] (165.center) to (166.center);
		\draw [style=areaGreen, in=0, out=-90, looseness=1.25] (166.center) to (167.center);
		\draw [style=areaGreen] (167.center) to (168.center);
		\draw [style=areaGreen, in=-90, out=-180, looseness=1.25] (168.center) to (169.center);
		\draw [style=areaGreen, in=-180, out=90, looseness=1.25] (169.center) to (164.center);
		\draw [style=intDashedBlue] (174.center) to (175.center);
		\draw [style=intDashedBlue] (176.center) to (177.center);
		\draw [style=intDashedBlue] (184.center) to (185.center);
		\draw [style=intDashedBlue] (186.center) to (187.center);
		\draw [style=intDashedBlue] (188.center) to (189.center);
		\draw [style=intThickDashedBlue] (174.center) to (175.center);
		\draw [style=intThickDashedBlue] (176.center) to (177.center);
		\draw [style=intThickDashedBlue] (184.center) to (185.center);
		\draw [style=intThickDashedBlue] (186.center) to (187.center);
		\draw [style=intThickDashedBlue] (188.center) to (189.center);
		\draw [style=interval] (178.center) to (179.center);
		\draw [style=interval] (180.center) to (181.center);
		\draw [style=interval] (182.center) to (183.center);
		\draw [style=intThick] (178.center) to (179.center);
		\draw [style=intThick] (180.center) to (181.center);
		\draw [style=intThick] (182.center) to (183.center);
		\draw [style=areaGreen, in=-90, out=-180, looseness=1.25] (211.center) to (206.center);
		\draw [style=areaGreen, in=180, out=90, looseness=1.25] (206.center) to (207.center);
		\draw [style=areaGreen] (207.center) to (208.center);
		\draw [style=areaGreen, in=90, out=0, looseness=1.25] (208.center) to (209.center);
		\draw [style=areaGreen, in=0, out=-90, looseness=1.25] (209.center) to (210.center);
		\draw [style=areaGreen] (210.center) to (211.center);
		\draw [style=areaGreen, in=-90, out=-180, looseness=1.25] (217.center) to (212.center);
		\draw [style=areaGreen, in=180, out=90, looseness=1.25] (212.center) to (213.center);
		\draw [style=areaGreen] (213.center) to (214.center);
		\draw [style=areaGreen, in=90, out=0, looseness=1.25] (214.center) to (215.center);
		\draw [style=areaGreen, in=0, out=-90, looseness=1.25] (215.center) to (216.center);
		\draw [style=areaGreen] (216.center) to (217.center);
		\draw [style=areaGreen, in=-90, out=180, looseness=1.25] (223.center) to (218.center);
		\draw [style=areaGreen, in=-180, out=90, looseness=1.25] (218.center) to (219.center);
		\draw [style=areaGreen] (219.center) to (220.center);
		\draw [style=areaGreen, in=90, out=0, looseness=1.25] (220.center) to (221.center);
		\draw [style=areaGreen, in=0, out=-90, looseness=1.25] (221.center) to (222.center);
		\draw [style=areaGreen] (222.center) to (223.center);
		\draw [style=areaGreen, in=-90, out=-180, looseness=1.25] (229.center) to (224.center);
		\draw [style=areaGreen, in=-180, out=90, looseness=1.25] (224.center) to (225.center);
		\draw [style=areaGreen] (225.center) to (226.center);
		\draw [style=areaGreen, in=90, out=0, looseness=1.25] (226.center) to (227.center);
		\draw [style=areaGreen, in=0, out=-90, looseness=1.25] (227.center) to (228.center);
		\draw [style=areaGreen] (228.center) to (229.center);
		\draw [style=areaGreen, in=-90, out=-180, looseness=1.25] (235.center) to (230.center);
		\draw [style=areaGreen, in=-180, out=90, looseness=1.25] (230.center) to (231.center);
		\draw [style=areaGreen] (231.center) to (232.center);
		\draw [style=areaGreen, in=90, out=0, looseness=1.25] (232.center) to (233.center);
		\draw [style=areaGreen, in=0, out=-90, looseness=1.25] (233.center) to (234.center);
		\draw [style=areaGreen] (234.center) to (235.center);
		\draw [style=areaGreen, in=-90, out=-180, looseness=1.25] (241.center) to (236.center);
		\draw [style=areaGreen, in=-180, out=90, looseness=1.25] (236.center) to (237.center);
		\draw [style=areaGreen] (237.center) to (238.center);
		\draw [style=areaGreen, in=90, out=0, looseness=1.25] (238.center) to (239.center);
		\draw [style=areaGreen, in=0, out=-90, looseness=1.25] (239.center) to (240.center);
		\draw [style=areaGreen] (240.center) to (241.center);
		\draw [style=areaGreen] (261.center) to (262.center);
		\draw [style=areaGreen, in=-90, out=-180, looseness=1.25] (262.center) to (257.center);
		\draw [style=areaGreen, in=180, out=90, looseness=1.25] (257.center) to (258.center);
		\draw [style=areaGreen] (258.center) to (259.center);
		\draw [style=areaGreen, in=90, out=0, looseness=1.25] (259.center) to (260.center);
		\draw [style=areaGreen, in=0, out=-90, looseness=1.25] (260.center) to (261.center);
		\draw [style=intBrace] (265.center) to (264.center);
		\draw [style=intBrace] (266.center) to (267.center);
		\draw [style=intBrace] (269.center) to (270.center);
		\draw [style=intBrace] (272.center) to (273.center);
		\draw [style=intBrace] (274.center) to (275.center);
		\draw [style=intBrace] (276.center) to (277.center);
		\draw [style=intBrace] (278.center) to (279.center);
		\draw [style=intBrace] (280.center) to (281.center);
		\draw [style=intBrace] (282.center) to (283.center);
		\draw [style=intBrace] (284.center) to (285.center);
		\draw [style=intBrace] (286.center) to (287.center);
		\draw [style=intBrace] (288.center) to (289.center);
	\end{pgfonlayer}
\end{tikzpicture}
 
\end{small}
\caption{An example rank assignment satisfying the invariants : {\protect\tikzInterval} -- intervals of $\Intervals_{j}$, {\protect\tikzIntDashedBlue} -- intervals of $\Intervals_{n'} \setminus \Intervals_{j}$ (future intervals), {\protect\tikzIntThick} -- spans of $\Spanset_{j}$, {\protect\tikzIntThickDashedBlue} -- spans of $\Spanset_{n'} \setminus \Spanset_{j}$ (future spans), {\protect\tikzAreaGreen} -- blocks of partition $\PartB_j$.}
\label{fig:6}
\end{figure}
Let us assume now that the invariants are satisfied for step $j-1$ and turn $\ti(j-1)$, and that $j \geq 1$.
We first show how we charge for the costly recolorings of the $\Recolor$ algorithm in step $j$. Recall that we need to charge for $\min ( |\infix{\Intervals_j}{\interval_j}{\intervalTwo^{(R)}}|, |\infix{\Intervals_j}{\intervalTwo^{(L)}}{\interval_j}| )$. Recall also, that there  is a span $\Spanel^{(j)} \in \Spanset_{n'}$ such that $\Spanel^{(j)} \subseteq \interval_j$ and there is a passive singleton block $\Block^{j-1}_{s(j)} \in \PartB_{j-1}$ containing $\Spanel^{(j)}$. Since each passive block, by Invariant 5, is assigned more than one rank, there are two consecutive ranks $\rrank{\ti(j-1)}{s}$ and $\rrank{\ti(j-1)}{s+1}$ assigned to $\Block^{j-1}_{s(j)} \in \PartB_{j-1}$. We use one jump to merge $\rrank{\ti(j-1)}{s}$ and $\rrank{\ti(j-1)}{s+1}$. In other words, we set the $\ti(j-1)$'th jump in the \Frog game to position $s$, i.e., $j_{\ti(j-1)}\eqdef s$. We modify the rank assignment $f$ as to assign the rank $\rrank{\ti(j-1)}{s}+\rrank{\ti(j-1)}{s+1}$ to block $\Block^{j-1}_{s(j)}$. The Frogs game incurs the cost of $\cost_{\ti(j-1)} = \min\bra{\rrank{\ti(j-1)}{s-\jn+1}+\ldots+\rrank{\ti(j-1)}{s},\ \rrank{\ti(j-1)}{s+1}+\ldots+\rrank{\ti(j-1)}{s+\jn} }$. By Observation~\ref{obs:algToBlR}, Observation~\ref{obs:algToBlL}  and the invariants we get for $\gamma=\fO{k^3}$ and $\jn=(k+1)\gamma$: $$\begin{cases}|\infix{\Intervals_j}{\interval_j}{\intervalTwo^{(R)}}| \leq 1+ \sum_{i=0}^{\gamma} \size{ \intervals{j-1}{ \Block^{j-1}_{s(j)+i} } } \leq 1+ \rrank{\ti(j-1)}{s+1}+\ldots+\rrank{\ti(j-1)}{s+\jn}\\
|\infix{\Intervals_j}{\intervalTwo^{(L)}}{\interval_j}| \leq 1+ \sum_{i=0}^{\gamma} \size{\intervals{j-1}{ \Block^{j-1}_{s(j)-\gamma+i}} } \leq 1+ \rrank{\ti(j-1)}{s-\jn+1}+\ldots+\rrank{\ti(j-1)}{s}\end{cases}$$
Thus, $\min ( |\infix{\Intervals_j}{\interval_j}{\intervalTwo^{(R)}}|, |\infix{\Intervals_j}{\intervalTwo^{(L)}}{\interval_j}| ) \leq \cost_{\ti(j-1)}+1$. By Theorem~\ref{thm:main} the total cost incurred by the \Frog game is $\delta \bra{2 \jn-1}\bra{(k+1)\sizesp+2 \jn-2}\log_2\frac{(k+1)\sizesp+2 \jn -2}{2 \jn-1}$, where $\delta~=~k$, $\sizesp \in \fO{kn}$ and $\jn \in \fO{k^4}$. Thus, the total cost incurred by the Frogs game is $\fO{k^7 n \log n}$, and it is also an upperbound on the total number of recolorings applied by the \Recolor algorithm.

It remains to show how the invariants are maintained. Unfortunatelly, due to space limitations we are forced to defer the details to Appendix~\ref{app:invariants}. The main idea there is the following. Upon arrival of $\interval_j$ a set $\widehat{\Spanset}_j$ of consecutive spans that were passive in step $j-1$  becomes active in step $j$. Hence, in partition $\PartB_j$, $\widehat{\Spanset}_j$ is contained in one active block. Thus, we need to merge the ranks of the corresponding passive blocks into one rank (by adding them). Then, if the block $\Block$ of $\PartB_{j-1}$ preceeding $\widehat{\Spanset}_j$ was active, then $\Block$ and $\widehat{\Spanset}_j$ merge into one block in step $j$, and so we need to merge their ranks into one. Similar thing happens if the block of $\PartB_{j-1}$ succeding $\widehat{\Spanset}_j$ was active. Note that before the merge the ranks bound the number of intervals intersecting the blocks. It is not hard to see that after the merge the sum of the ranks bounds the number of intervals of the newly merged block. It turns out that the invariants give the room to account for the newly added interval $\interval_j$ as well.
\end{proof}
 
\section{Set of unit circular-arcs}
Now we define unit circular-arc on the circle with a fixed circumference $\circumference \in \RealNumbers$ with $\circumference > 1$.
The \emph{point on the circle with a circumference $\circumference$} is an equivalence class $\eqClass{\pointOne}$ (where $\pointOne\in\RealNumbers$) for an equivalence relation $\eqArc$ defined on the real numbers in this way that $\pointTwo \eqArc \pointThree$ iff there is $k\in\IntegerNumbers$ such that $\pointTwo-\pointThree = \circumference k$ for any $\pointTwo, \pointThree \in \RealNumbers$.
As a consequence, $\eqClass{\pointOne} = \set{\pointOne+k\circumference : k\in\IntegerNumbers}$.
The set of points on the circle with a circumference $\circumference$ we describe as $\CirclePoints$.
A \emph{circular-arc (half-open)} on the circle with a circumference $\circumference$ is
$$
\arcDef{\eqClass{\pointOne}}{\eqClass{\pointTwo}} \eqdef \set{\eqClass{\pointThree} \in \CirclePoints : \pointOne \leqslant \pointThree < \pointTwo' \ \text{for any} \ \pointTwo'\in\eqClass{\pointTwo} \ \text{such that} \ \pointOne < \pointTwo' }.
$$
An \emph{unit circular-arc (half-open)} on the circle with a circumference $\circumference$ is
$$
\arcDef{\eqClass{\pointOne}}{\eqClass{\pointOne}+1} \eqdef \arcDef{\eqClass{\pointOne}}{\eqClass{\pointOne+1}} = \set{\eqClass{\pointTwo} \in \CirclePoints : \pointOne \leqslant \pointTwo < \pointOne+1 }.
$$
A \emph{set of unit circular-arcs on the circle with a circumference $\circumference>1$} is $\Arcs \eqdef \set{\arc_1,\arc_2,\ldots,\arc_n}$, where $\arc_i\eqdef\arcDef{\alpha_i}{\alpha_i+1}$ and $\alpha_i\in \RealNumbers_{/\circumference\IntegerNumbers}$ for each $i\in\num{n}$.
A graph $\Graph$ is an \emph{unit circular-arc} iff there is a set of unit circular-arcs $\Arcs$ and bijection $\arcRepr:\VerticesOf{\Graph} \rightarrow \Arcs$ such that $\vertexOne \sim \vertexTwo$ iff $\fArcRepr{\vertexOne} \cap \fArcRepr{\vertexTwo} = \emptyset$ for any $\vertexOne,\vertexTwo \in \VerticesOf{\Graph}$.
Then we say that the mapping $\arcRepr$ is an \emph{unit circular-arc representation} of $\Graph$.
A set of unit circular-arcs $\ArcsTwo=\set{\arcTwo_1,\arcTwo_2,\ldots,\arcTwo_k}$ is an \emph{overlapping set} iff $$\bigcap\ArcsTwo \eqdef \arcTwo_1 \cap \arcTwo_2 \cap \ldots \cap \arcTwo_k \neq \emptyset.$$
If $\circumference \geqslant 2$ and $\arcTwo_i=\intDef{\circlePointTwo_i}{\circlePointTwo_i+1}$ for $i\in\num{k}$ then the intersection $\bigcap_{i=1}^{k} \intDef{\circlePointTwo_i}{\circlePointTwo_i+1} = \intDef{\circlePointThree}{\circlePointFour}$.
If circumference $\circumference < 3$ then there is a unit circular-arc graph $\Graph$ and a unit circular-arc representation $\arcRepr$ of $\Graph$ such that $\VerticesOf{\Graph}$ is a clique in $\Graph$ and $\fArcRepr{\VerticesTwo}$ is not an overlapping set.
\begin{figure}[hbpt]
\begin{center}

\begin{tikzpicture}[xscale=15, yscale=15]

\def\x{0}
\def\y{0}

\def\r{4}
\def\from{-60}
\def\to{120}

\def\rr{5}
\def\fromm{60}
\def\too{240}

\def\rrr{6}
\def\frommm{181}
\def\tooo{359}

\coordinate (c1) at ({ - \x - \r * cos(\from + 180)},{ - \y  - \r * sin(\from + 180)});
\coordinate (c2) at ({ - \x - \rr * cos(\fromm + 180)},{ - \y  - \rr * sin(\fromm + 180)});
\coordinate (c3) at ({ - \x - \rrr * cos(\frommm + 180)},{ - \y  - \rrr * sin(\frommm + 180)});

\draw[style=interval] (c1) arc (\from:\to:\r);
\draw[style=interval] (c2) arc (\fromm:\too:\rr);
\draw[style=interval] (c3) arc (\frommm:\tooo:\rrr);

\end{tikzpicture}
 
\caption{The set of unit circular-arcs which is not an overlap set but forms clique $K_3$.}
\end{center}
\end{figure}
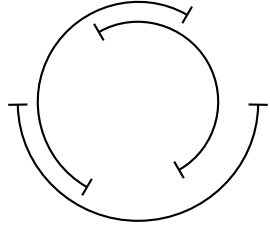
If circumference $\circumference \geqslant 3$ then for a unit circular-arc graph $\Graph$ and a unit circular-arc representation $\arcRepr$ of $\Graph$ the subset of vertices $\VerticesTwo \subseteq \VerticesOf{\Graph}$ is a clique in $\Graph$ iff $\fArcRepr{\VerticesTwo}$ is an overlapping set.
A \emph{load} of a set of unit circular-arcs $\Arcs$ in a point $\circlePointTwo \in \CirclePoints$ is the number of intervals from $\Arcs$ which contains $\circlePointTwo$ and it is denoted as $\fLoad{\Arcs}{\circlePointTwo}$, i.e., $\fLoad{\Arcs}{\circlePointTwo} \eqdef \size{\set{\arcOne\in\Arcs:\circlePointTwo\in\arcOne}}$.
A \emph{maximum load} of a set of unit circular-arcs $\Arcs$ is maximum load of $\Arcs$ in a point $\circlePointTwo \in \CirclePoints$ over all $\circlePointTwo \in \CirclePoints$ and it is denoted as $\fMaxLoad{\Arcs}$, i.e., $\fMaxLoad{\Arcs} \eqdef \max_{\circlePointTwo \in \CirclePoints} \fLoad{\Arcs}{\circlePointTwo}$.
If circumference $\circumference \geqslant 3$ then for a unit circular-arc graph $\Graph=\tuple{\Vertices,\Edges}$ and a unit circular-arc representation $\arcRepr:\Vertices\rightarrow\Arcs$ of $\Graph$, the size of the maximum overlapping set equals $\fMaxLoad{\Arcs} = \fClique{\Graph}$.
A function $\coloring:\Arcs\rightarrow\Gamma$, where $\Gamma$ is an arbitrarily set of \emph{colors} of size $\size{\Gamma}=l$, is a~\emph{(proper) \mbox{$l$-coloring}} of a set of unit circular-arcs $\Arcs$ iff for any two intervals $\arc_i,\arc_j\in\Arcs$ if $\arc_i\cap\arc_j\neq\emptyset$ we have that $\fColoring{\arc_i}\neq\fColoring{\arc_j}$.
For a unit circular-arc graph $\Graph=\tuple{\Vertices,\Edges}$ and a unit circular-arc representation $\arcRepr:\Vertices\rightarrow\Arcs$ of $\Graph$, a mapping $\Arcs\ni\arc_i \mapsto \fColoring{\arc_i}\in\Gamma$ is a proper coloring of the set of unit circular-arcs $\Arcs$ iff a mapping $\Vertices\ni\vertex_i \mapsto \fColoring{\fArcRepr{\vertex_i}}\in\Gamma$ is a proper coloring of the graph $\Graph$.
The \emph{chromatic number} of the set of unit circular-arcs $\Arcs$ is the smallest number of colors needed to proper color of $\Arcs$ and is is denoted as $\fChromatic{\Arcs}$, i.e.,
$$
\fChromatic{\Arcs}\eqdef\min\set{l\in\NaturalNumbers:\text{exists an \mbox{$l$-coloring} of} \ \Arcs}.
$$
For a unit circular-arc graph $\Graph=\tuple{\Vertices,\Edges}$ and a unit circular-arc representation $\arcRepr:\Vertices\rightarrow\Arcs$ of $\Graph$, we have that $\fChromatic{\Graph}=\fChromatic{\Arcs}$.
For each circumference $\circumference>1$ there is a set of unit circular-arcs $\Arcs$ such that $\fMaxLoad{\Arcs} < \fChromatic{\Arcs}$
\begin{figure}[hbpt]
\begin{center}

\begin{tikzpicture}[xscale=15, yscale=15]

\def\x{0}
\def\y{0}

\def\r{4}
\def\from{0}
\def\to{108}

\def\rr{5}
\def\fromm{72}
\def\too{180}

\def\rrr{4}
\def\frommm{144}
\def\tooo{252}

\def\rrrr{5}
\def\frommmm{216}
\def\toooo{324}

\def\rrrrr{6}
\def\frommmmm{288}
\def\tooooo{396}

\coordinate (c1) at ({ - \x - \r * cos(\from + 180)},{ - \y  - \r * sin(\from + 180)});
\coordinate (c2) at ({ - \x - \rr * cos(\fromm + 180)},{ - \y  - \rr * sin(\fromm + 180)});
\coordinate (c3) at ({ - \x - \rrr * cos(\frommm + 180)},{ - \y  - \rrr * sin(\frommm + 180)});
\coordinate (c4) at ({ - \x - \rrrr * cos(\frommmm + 180)},{ - \y  - \rrrr * sin(\frommmm + 180)});
\coordinate (c5) at ({ - \x - \rrrrr * cos(\frommmmm + 180)},{ - \y  - \rrrrr * sin(\frommmmm + 180)});

\draw[style=interval] (c1) arc (\from:\to:\r);
\draw[style=interval] (c2) arc (\fromm:\too:\rr);
\draw[style=interval] (c3) arc (\frommm:\tooo:\rrr);
\draw[style=interval] (c4) arc (\frommmm:\toooo:\rrrr);
\draw[style=interval] (c5) arc (\frommmmm:\tooooo:\rrrrr);

\end{tikzpicture}
 
\caption{An unit circular-arc representation of $C_5$.} 
\end{center}
\end{figure}
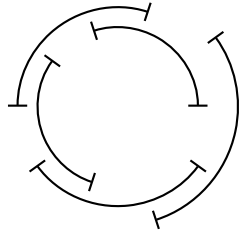
For a unit circular-arc graph $\Graph=\tuple{\Vertices,\Edges}$ and a unit circular-arc representation $\arcRepr:\Vertices\rightarrow\Arcs$ of $\Graph$, we have that
$
\fMaxLoad{\Arcs} \leqslant \fClique{\Graph} \leqslant \fChromatic{\Graph} = \fChromatic{\Arcs},
$
and if $\circumference>3$ then $\fMaxLoad{\Arcs} = \fClique{\Graph}$.
Belkale and Chandran \cite{BCh09} proved that unit circular-arc graphs satisfy Hadwiger's conjecture, i.e., if a unit circular-arc graph $\Graph$ not contains a clique $K_k$ as a minor then $\fChromatic{\Graph}\leqslant k-1$.
Tucker Theorem \cite{Tuc75} implies the following statement
\begin{theorem}[Tucker 1975]
For a set of unit circular-arcs $\Arcs$ on the circle with a circumference $\circumference>3$, there is $\fChromatic{\Arcs} \leqslant \frac{3}{2}\fMaxLoad{\Arcs}$.
\end{theorem}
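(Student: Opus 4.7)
Let $L \eqdef \fMaxLoad{\Arcs}$; by the remark preceding the theorem we also have $L = \fClique{\Arcs}$ since $\circumference > 3$. The plan is the classical seam-cutting strategy: fix a reference point $p_0 \in \CirclePoints$, split $\Arcs = \mathcal{K} \cup \mathcal{A}'$ where $\mathcal{K} \eqdef \set{K \in \Arcs : p_0 \in K}$ is the overlapping set of arcs crossing $p_0$, reduce $\mathcal{A}'$ to an interval coloring problem using $L$ colors, and then patch the seam by assigning colors to $\mathcal{K}$, paying at most $\lceil L/2 \rceil$ additional colors.

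First, I would cut the circle at $p_0$ so that $\mathcal{A}'$ becomes a multiset of unit intervals on a line segment of length $\circumference$ with clique number at most $L$. Observation~\ref{obs:gccNC} then yields a proper $L$-coloring $\coloringPrime \colon \mathcal{A}' \to \set{1,\ldots,L}$. Because arcs have unit length and $\circumference > 3$, an arc $K \in \mathcal{K}$ can intersect only those arcs of $\mathcal{A}'$ whose start coordinate lies in the \emph{left zone} $(p_0, p_0 + 1)$ or in the \emph{right zone} $(p_0 + \circumference - 2, p_0 + \circumference - 1]$. Each such zone is itself a clique in $\mathcal{A}'$ of size at most $L$ (all its arcs contain the common point $p_0+1$ or $p_0+\circumference-1$, respectively), so $\coloringPrime$ uses at most $L$ colors in either zone; let $\Gamma_\ell, \Gamma_r \subseteq \set{1,\ldots,L}$ denote these two color sets.

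Second, since $\mathcal{K}$ is a clique of size at most $L$, I would extend $\coloringPrime$ to a proper coloring $\coloring \colon \Arcs \to \set{1, \ldots, \lfloor 3L/2 \rfloor}$ by assigning each $K \in \mathcal{K}$ a distinct color while avoiding the forbidden colors from $\Gamma_\ell \cup \Gamma_r$ imposed by the particular arcs it actually intersects. The crux is to show via a Hall-type matching argument that at least $\lceil L/2 \rceil$ arcs of $\mathcal{K}$ can be legally assigned colors from $\set{1,\ldots,L}$, so that at most $\lfloor L/2 \rfloor$ arcs of $\mathcal{K}$ require the fresh colors $\set{L+1, \ldots, \lfloor 3L/2 \rfloor}$. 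The global load bound of $L$ controls the interaction between $\mathcal{K}$, $\Gamma_\ell$, and $\Gamma_r$, and gives the necessary Hall inequalities.

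The step I expect to be most delicate is this Hall / matching verification: for every $S \subseteq \mathcal{K}$, the union of the available colors from $\set{1,\ldots,L}$ over all $K \in S$ must have size at least $|S| - \lfloor L/2 \rfloor$. This is precisely where the ratio $3/2$ enters the bound and where the \emph{proper} (unit) structure of the arcs is essential; without properness one obtains only the trivial bound $2L$. The unit-length constraint aligns the two zones so that a color absent from both $\Gamma_\ell$ and $\Gamma_r$ is globally available to every arc of $\mathcal{K}$, and careful counting of such ``doubly free'' colors, combined with inclusion-exclusion on the zone clique sizes bounded by the total load $L$, should deliver the desired $\lceil L/2 \rceil$ count and complete the argument.
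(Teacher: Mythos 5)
The paper does not actually prove this statement; it imports it from Tucker's 1975 paper, so there is no internal argument to compare yours against, and your sketch has to stand on its own. It does not: the step you defer — that after an oblivious proper $L$-coloring of the cut-open instance $\mathcal{A}'$ a defect-Hall argument lets at least $\lceil L/2\rceil$ arcs of $\mathcal{K}$ reuse old colors, "because the global load bound controls the interaction" — is false. Take $\circumference=10$, $L=4$, $p_0=0$, and the eleven unit arcs $K_1=[9.2,10.2)$, $K_2=[9.4,10.4)$, $K_3=[9.6,10.6)$, $K_4=[9.8,10.8)$ (the clique $\mathcal{K}$ through $p_0$), $A_1=[0.2,1.2)$, $A_2=[0.45,1.45)$, $A_3=[0.65,1.65)$ (left zone), $C=[4,5)$ (isolated), and $B_1=[8.4,9.4)$, $B_2=[8.6,9.6)$, $B_3=[8.8,9.8)$ (right zone). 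A sweep shows the maximum load is exactly $4$, and $K_i$ meets $A_j$ iff $j<i$ while $K_i$ meets $B_j$ iff $j\geq i$. Cutting at $p_0$, both natural instantiations of your first step (first-fit, or the modulo-$k$ scheme of Observation~\ref{obs:gcc}) color $A_1,A_2,A_3$ with $1,2,3$ and $B_1,B_2,B_3$ with $1,2,3$. Then every $K_i$ sees all of $\set{1,2,3}$ among its neighbours in $\mathcal{A}'$ (e.g.\ $K_2$ meets $A_1,B_2,B_3$, colored $1,2,3$), so only color $4$ is reusable for any arc of $\mathcal{K}$: your Hall condition fails for $S=\mathcal{K}$, since $|N(S)|=1<|S|-\floor{L/2}=2$, and the patching step needs $4+3=7>\tfrac32 L=6$ colors. (The instance itself is $4$-colorable, so only the method is in danger, not the theorem.)

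What the example shows is that the colors carried by the left-zone and right-zone cliques must be coordinated with each other and with the colors assigned to $\mathcal{K}$; no bound on the load alone can rescue an obliviously chosen $c'$. That coordination — in effect, deciding in advance which $\floor{L/2}$ color classes are permitted to "wrap around" the seam, or recoloring $\mathcal{A}'$ after the fact — is the actual content of Tucker's theorem, and it is precisely the part your proposal leaves out. As written, the argument does not yield the $\tfrac32\fMaxLoad{\Arcs}$ bound.
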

Later, Valencia-Pabon \cite{Val03} strengthened the above theorem and in the discussed context, this made the following theorem.
\begin{theorem}[Valencia-Pabon 2003]
For a set of unit circular-arcs $\Arcs$ on the circle with a circumference $\circumference>4$, there is that $\fChromatic{\Arcs} \leqslant \ceil{\frac{\circumference}{\circumference-1}\fMaxLoad{\Arcs}}$.
\end{theorem}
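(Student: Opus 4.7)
Let $L \eqdef \fMaxLoad{\Arcs}$ and $k \eqdef \ceil{\circumference L/(\circumference-1)}$; the goal is to exhibit a proper $k$-coloring of $\Arcs$. Observe that $k$ satisfies $k(\circumference-1)\geqslant L\circumference$, equivalently $k-L\geqslant k/\circumference$, which quantifies the slack of extra colors available beyond the load lower bound. The plan is to reduce the circular coloring problem to an application of Lemma~\ref{lem:partColor} on a suitably unrolled copy of the circle, after a case analysis on whether the arcs cover the entire circle.

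If some point $p$ of the circle has load $0$, cutting at $p$ yields a unit interval system of clique number at most $L$, which by Claim~A (perfectness of interval graphs) admits an $L$-coloring, and $L\leqslant k$ concludes the argument. Otherwise every point of the circle is covered. I would then proceed as in the proof of Theorem~\ref{thm:main-circ-2}: fix a cut point $p_0$ and the set $W$ of $|W|\leqslant L$ arcs containing $p_0$, and unroll the arcs into a linear arrangement of unit intervals on $\RealNumbers$ that includes one translated copy of each arc of $W$, so that $W$ appears both at the left end as ``prefix arcs'' and at the right end as ``suffix arcs''. A proper $k$-coloring of $\Arcs$ is then equivalent to a proper $k$-coloring of this linear system in which $\coloringPrime$ on the prefix copies agrees under the natural translation with $\coloringDouble$ on the suffix copies.

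At this point I would invoke Lemma~\ref{lem:partColor} with these prescribed bijections $\coloringPrime$ and $\coloringDouble$. Its hypothesis requires an auxiliary independent set of $k^2-1$ unit arcs between the two endpoints of the cut whose insertion does not raise the load beyond $k$; equivalently, a total unused length at least $k^2-1$, distributed in pieces of size at least $1$. The unused length of the circle is bounded below by $\circumference - L\circumference/k = \circumference(k-L)/k$, which is exactly forced to be $\geqslant 1$ by the defining inequality of $k$. An iterative refinement of this estimate, applied across successive nested levels of the insertion-sort argument underlying Lemma~\ref{lem:partColor}, would yield the required $k^2-1$ disjoint unit slack arcs whenever $\circumference$ is sufficiently large relative to $L$.

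The main obstacle is this slack-counting step: the $\circumference(k-L)/k$ units of unused length may be fragmented, and one must show they can be re-assembled into $k^2-1$ disjoint unit intervals without violating the load bound $k$. The hypothesis $\circumference>4$ enters here to rule out configurations where the slack is too concentrated to host the full insertion-sort sequence; for residual small-circumference cases one falls back on Tucker's bound $\lfloor 3L/2 \rfloor$ to cover the corner cases.
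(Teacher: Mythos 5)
This statement is not proved in the paper at all: it is Valencia--Pabon's theorem, quoted from \cite{Val03} as background, and the paper's ``proof'' is the citation. So the only question is whether your argument is a valid self-contained proof, and it is not.

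The fatal step is the appeal to Lemma~\ref{lem:partColor}. Its hypothesis demands a set of $k^2-1$ \emph{pairwise disjoint} unit intervals that can be inserted between the prefix and suffix cliques without pushing the clique number above $k$. On a circle of circumference $\circumference$ (and hence on the unrolled line segment of length roughly $\circumference+2$) one can place at most $\lfloor\circumference\rfloor+O(1)$ pairwise disjoint unit arcs in total, regardless of how much load slack exists. With $k=\ceil{\circumference L/(\circumference-1)}$ and $\circumference$ fixed (say $\circumference=5$) while $L\to\infty$, you would need $k^2-1=\Theta(L^2)$ disjoint unit intervals in a window of length $5$, which is impossible. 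Your own slack estimate $\circumference(k-L)/k\geqslant 1$ confirms the available room is $\Theta(k)$ units at best, quadratically short of what the lemma requires; the ``iterative refinement across nested levels of the insertion sort'' does not exist and cannot manufacture the missing room. The paper itself records this limitation: its circular-arc corollary only yields $\fChromatic{\Arcs}\leqslant\fMaxLoad{\Arcs}+1$ under the hypothesis $\circumference\in\fOmega{\fMaxLoad{\Arcs}^2}$, which is far stronger than $\circumference>4$. Finally, the proposed fallback to Tucker's bound $\lfloor 3L/2\rfloor$ does not rescue the remaining cases, since for $\circumference\geqslant 5$ one has $\ceil{\circumference L/(\circumference-1)}\leqslant\ceil{5L/4}<\lfloor 3L/2\rfloor$ for all large $L$, so Tucker's theorem is strictly weaker than what you must prove. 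The actual proof of this statement is a refined analysis of Tucker's circular coloring algorithm and cannot be recovered from Lemma~\ref{lem:partColor}.
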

This, in turn, for a sufficiently large circumference gives an estimate of the chromatic number by the clique number with an accuracy of $1$.
\begin{corollary}
\label{cor:Valencia-Pabon}
For a set of unit circular-arcs $\Arcs$ on the circle with a circumference $\circumference \geqslant \fMaxLoad{\Arcs}+1$, we have that $\fChromatic{\Arcs} \leqslant \fMaxLoad{\Arcs}+1$.
\end{corollary}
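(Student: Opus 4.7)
The plan is to derive this corollary as a direct arithmetic consequence of the Valencia-Pabon theorem quoted just above. Writing $L \eqdef \fMaxLoad{\Arcs}$ for brevity, the hypothesis $\circumference \geqslant L+1$ is calibrated so that the ceiling appearing in the Valencia-Pabon bound collapses to exactly $L+1$. The whole argument is a one-line manipulation of real numbers followed by an appeal to the quoted bound.

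First I would verify the numerical inequality $\frac{\circumference}{\circumference-1}\cdot L \leqslant L+1$. The function $x \mapsto x/(x-1)$ is strictly decreasing on $(1,\infty)$ (its derivative equals $-1/(x-1)^2$), so $\circumference \geqslant L+1$ implies $\frac{\circumference}{\circumference-1} \leqslant \frac{L+1}{L}$; multiplying through by $L$ yields the desired bound. Since $L+1$ is already an integer, the ceiling preserves the inequality, giving $\ceil{\frac{\circumference}{\circumference-1}\fMaxLoad{\Arcs}} \leqslant L+1$. Combining with the Valencia-Pabon bound $\fChromatic{\Arcs} \leqslant \ceil{\frac{\circumference}{\circumference-1}\fMaxLoad{\Arcs}}$ then finishes the proof.

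The only mild obstacle is that Valencia-Pabon's theorem is stated for $\circumference > 4$, so the small regime $L \in \set{1,2,3}$ (where $\circumference \geqslant L+1$ can be as small as $2$, $3$, or $4$) technically needs a separate word. For $L=1$ the arcs are pairwise disjoint and $\fChromatic{\Arcs}=1 \leqslant L+1$ trivially; for $L \in \set{2,3}$, Tucker's theorem yields $\fChromatic{\Arcs} \leqslant \floor{3L/2} \leqslant L+1$, handling the remaining boundary cases. Hence no new combinatorial ingredient is needed beyond the theorems already cited, and the corollary follows.
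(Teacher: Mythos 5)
Your derivation is correct and is essentially the paper's (unwritten) argument: the paper states this corollary as an immediate consequence of the Valencia-Pabon bound, and your computation $\ceil{\frac{\circumference}{\circumference-1}\fMaxLoad{\Arcs}}\leqslant\fMaxLoad{\Arcs}+1$ for $\circumference\geqslant\fMaxLoad{\Arcs}+1$ is exactly the intended one-line verification. Your handling of the small-load regime is more careful than the paper itself; the only residual sliver is $\fMaxLoad{\Arcs}=2$ with $\circumference=3$ exactly, where Tucker's theorem as quoted ($\circumference>3$) does not formally apply, but this is beyond anything the paper addresses.
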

We are going further.
We show that if the circuit is even larger, the constraint is accurate as long as there is an appropriate number of places in which the load is one smaller than max-load.
The following theorem shows this, and it is a simple consequence of Lemma \ref{lem:partColor}.
\begin{theorem}
\label{thm:main-circ}
Let $\Arcs$ be a set of unit circular-arcs on the circle with a circumference $\circumference \geqslant 2$ such that $\Arcs$ can be extended with $r=\fMaxLoad{\Arcs}^2-1$ not intersecting unit circular-arcs $\arcTwo_1, \ldots,\arcTwo_r$ which does not increase a max load, i.e., $\fMaxLoad{\Arcs}=\fMaxLoad{\Arcs \cup \set{\arcTwo_1, \ldots,\arcTwo_r}}$. Then $\fChromatic{\Arcs} \leqslant \fMaxLoad{\Arcs}$.
\end{theorem}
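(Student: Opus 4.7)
The plan is to reduce Theorem~\ref{thm:main-circ} to the already-established Theorem~\ref{thm:main-circ-2} by exploiting the fact that the hypothesis forces the circumference to be so large that the max-load coincides with the clique number of the associated arc graph. Write $L \eqdef \fMaxLoad{\Arcs}$.

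First I would dispose of the degenerate case $L = 1$: when the max-load equals one, no point of the circle is covered by two arcs, so $\Arcs$ is pairwise non-intersecting and a single color suffices, giving $\fChromatic{\Arcs} \leq 1 = L$.

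For $L \geq 2$, the key observation is that the hypothesis already implies a generous circumference. Indeed, we are handed $L^2 - 1 \geq 3$ pairwise non-intersecting half-open unit arcs on a circle of circumference $\circumference$; their total length is exactly $L^2 - 1$, so $\circumference \geq L^2 - 1 \geq 3$. The paper has already noted that on a circle of circumference at least $3$ every clique in the unit circular-arc graph is an overlapping set, so $\fMaxLoad{\Arcs} = \fClique{\Arcs}$; the same equality applies to the enlarged family $\Arcs \cup \set{\arcTwo_1, \ldots, \arcTwo_r}$, since the circumference is unchanged. Consequently, the assumption that the extension does not increase the max-load translates verbatim into the statement that it does not increase the clique number. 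With $r = L^2 - 1 = \fClique{\Arcs}^2 - 1$, this is precisely the hypothesis of Theorem~\ref{thm:main-circ-2}, which then delivers $\fChromatic{\Arcs} \leq \fClique{\Arcs} = L = \fMaxLoad{\Arcs}$.

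The only step that genuinely needs checking is the chain of implications $L \geq 2 \Rightarrow \circumference \geq L^2 - 1 \geq 3 \Rightarrow \fMaxLoad{\cdot} = \fClique{\cdot}$, which lets us import the clique-version theorem; there is no deeper combinatorial obstruction here, since all the real work is already packaged inside Lemma~\ref{lem:partColor} and Theorem~\ref{thm:main-circ-2}.
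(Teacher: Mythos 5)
Your proof is correct, but it follows a genuinely different route from the paper's. The paper proves Theorem~\ref{thm:main-circ} directly, by the same unrolling construction used for Theorem~\ref{thm:main-circ-2}: cut the circle at a point of maximum load, append shifted copies of the first $L$ arcs at the end of the resulting line of unit intervals, prescribe matching colorings on the two copies, and invoke Lemma~\ref{lem:partColor}; the two proofs are essentially word-for-word identical, with $\fMaxLoad{\Arcs}$ playing the role that $\fClique{\Arcs}$ plays in Theorem~\ref{thm:main-circ-2}. You instead derive Theorem~\ref{thm:main-circ} as a corollary of Theorem~\ref{thm:main-circ-2}, observing that the hypothesis itself forces $\circumference \geqslant L^2-1 \geqslant 3$ for $L\geqslant 2$ (the $L^2-1$ pairwise disjoint half-open unit arcs occupy total length $L^2-1$), and that on a circle of circumference at least $3$ cliques coincide with overlapping sets, so max-load equals clique number for $\Arcs$ and for its extension, turning the load hypothesis verbatim into the clique hypothesis. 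This is a clean reduction that avoids duplicating the construction and makes explicit why the two formulations are interchangeable under the theorem's hypothesis. Two small points to keep in view: the degenerate case $L=1$, which you handle separately, and the boundary case $\circumference=3$ arising when $L=2$, where you rely on the paper's claim that cliques are overlapping sets already at circumference exactly $3$ (the paper states this threshold once as ``$\geqslant 3$'' and once as ``$>3$''; the ``$\geqslant 3$'' version is the correct one and is what your argument needs).
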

\begin{proof}[Idea of the proof]
Without losing generality, we can assume that the unit circular-arcs containing the point $\eqClass{0}$ form the maximum overlapping set in $\Arcs$.
It means that we can represent the set $\Arcs = \set{\arcDef{\eqClass{\point_1}}{\eqClass{\point_1+1}},\ldots,\arcDef{\eqClass{\point_n}}{\eqClass{\point_n+1}}}$ with the order at starting points:
$$
-1 < \point_1 < \point_2 < \ldots < \point_l \leqslant 0 < \point_{l+1} < \point_{l+2} < \ldots < \point_n \leqslant \circumference-1
$$
where $l\eqdef \fMaxLoad{\Arcs}$ and $n \eqdef \size{\Arcs}$.
Now we define the set of $n+l$ unit intervals 
\begin{multline*}
\Intervals\eqdef\{\intDef{\point_1}{\point_1+1}, \intDef{\point_2}{\point_2+1}, \ldots, \intDef{\point_n}{\point_n+1},
\\
\intDef{\point_{1}+\circumference}{\point_{1}+\circumference+1},\intDef{\point_{2}+\circumference}{\point_{2}+\circumference+1},\ldots\intDef{\point_{l}+\circumference}{\point_{l}+\circumference+1}\}
\end{multline*}
and coloring $\coloringPrime$ of intervals $\intDef{\point_1}{\point_1+1}, \ldots, \intDef{\point_l}{\point_l+1}$
and coloring $\coloringDouble$ of intervals $\intDef{\point_1+\circumference}{\point_1+\circumference+1}, \ldots, \intDef{\point_l+\circumference}{\point_l+\circumference+1}$ such that for each $i=1,2,\ldots,l$ there is
$
\fColoringPrime{\intDef{\point_i}{\point_i+1}}\eqdef\fColoringDouble{\intDef{\point_i+\circumference}{\point_i+\circumference+1}}\eqdef i.
$
By the assumptions of the theorem and Lemma~\ref{lem:partColor} we obtain the proper coloring $\coloring$ of the whole $\Intervals$ by colors from $\set{1,2,\ldots,l}$ which is the extension of the colorings $\coloringPrime \cup \coloringDouble$, i.e.,
$
\fColoring{\intDef{\point_i}{\point_i+1}}=\fColoring{\intDef{\point_i+\circumference}{\point_i+\circumference+1}}= i
$
for each $i=1,2,\ldots,l$.
The final coloring $\coloringArc:\Arcs\rightarrow\set{1,2,\ldots,l}$ is obtained by appropriately copying the coloring $\coloring$ in the following way
$
\fColoringArc{\arcDef{\eqClass{\point_i}}{\eqClass{\point_i+1}}}\eqdef\fColoring{\intDef{\point_i}{\point_i+1}}
$
for each $i=1,2,\ldots,n$.
\end{proof}
In order to understand where the limits of the above approach are, it is important to understand what the Theorem \ref{thm:main-circ} says in the language of Corollary \ref{cor:Valencia-Pabon}.
We write it down in the following note.
\begin{corollary}
For a set of unit circular-arcs $\Arcs$ on the circle with a circumference $\circumference \in \fOmega{\fMaxLoad{\Arcs}^2}$, we have that $\fChromatic{\Arcs} \leqslant \fMaxLoad{\Arcs}+1$.
\end{corollary}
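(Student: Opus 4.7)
My plan is to derive the corollary from Theorem \ref{thm:main-circ}. Let $L = \fMaxLoad{\Arcs}$ and suppose $\circumference \geqslant c L^2$ for an absolute constant $c$ to be chosen later. The goal is to exhibit $L^2 - 1$ pairwise disjoint unit arcs on the circle whose addition does not raise the load above $L$; Theorem \ref{thm:main-circ} will then yield $\fChromatic{\Arcs} \leqslant L$, which is in particular the asserted bound $\fChromatic{\Arcs} \leqslant L + 1$.

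First I would analyze the sparse set $R \eqdef \set{p \in \CirclePoints : \fLoad{\Arcs}{p} < L}$. Because the load function is integer-valued and piecewise constant on the circle, $R$ is a finite disjoint union of open arcs; crucially, any unit arc contained entirely inside $R$ can be added to $\Arcs$ without pushing the load above $L$. Combining the identity $\int_{\CirclePoints} \fLoad{\Arcs}{p}\, dp = \size{\Arcs}$ with the fact that $\fLoad{\Arcs}{\cdot} = L$ on the complement of $R$, one gets the lower bound $\mathrm{meas}(R) \geqslant \circumference - \size{\Arcs}/L$. A greedy packing then fits at least $\mathrm{meas}(R) - k$ pairwise disjoint unit arcs in $R$, where $k$ denotes the number of connected components of $R$; since every component of $R$ is delimited by endpoints of arcs of $\Arcs$, one also has a bound on $k$ in terms of $\size{\Arcs}$, and a straightforward arithmetic check shows that $\mathrm{meas}(R) - k \geqslant L^2 - 1$ as soon as $c$ is chosen sufficiently large and $\size{\Arcs}$ is controlled relative to $L\circumference$.

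The main obstacle is the degenerate regime in which the packing argument fails, namely when $R$ is essentially empty because the load is very close to $L$ everywhere on the circle. In the extreme case $\fLoad{\Arcs}{p} = L$ for every $p \in \CirclePoints$, the set $\Arcs$ is already $L$-colorable by a direct decomposition into $L$ layers of non-intersecting arcs, so the corollary holds trivially; the in-between regime is the technically delicate part. A clean workaround is to bypass this case altogether by appealing to Corollary \ref{cor:Valencia-Pabon}, which already gives $\fChromatic{\Arcs} \leqslant L + 1$ whenever $\circumference \geqslant L + 1$, a condition comfortably implied by $\circumference \in \fOmega{L^2}$. This is in fact the cleanest proof route, and it makes the intent of the corollary transparent: it simply records the circumference threshold at which Theorem \ref{thm:main-circ} becomes applicable and competes (only weakly) with the Valencia-Pabon bound, thereby illustrating the limits of the extension approach.
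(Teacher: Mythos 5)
Your closing one-line derivation from Corollary~\ref{cor:Valencia-Pabon} is logically valid: $\circumference\in\fOmega{\fMaxLoad{\Arcs}^2}$ (asymptotically) implies $\circumference\geq\fMaxLoad{\Arcs}+1$, so the stated bound follows. However, this is not the derivation the corollary is meant to record, and it empties the statement of its content: the surrounding text introduces the corollary precisely as ``what Theorem~\ref{thm:main-circ} says in the language of Corollary~\ref{cor:Valencia-Pabon}'', i.e.\ as the circumference threshold at which Theorem~\ref{thm:main-circ} \emph{alone} recovers the Valencia-Pabon bound -- deriving it from Valencia-Pabon makes that comparison circular. The intended route is a one-level shift that you did not find. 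Write $L=\fMaxLoad{\Arcs}$ and use $\circumference\geq(L+1)^2$ to place $(L+1)^2$ pairwise disjoint unit arcs on the circle, one of which, $A_0$, covers a point of load $L$. Then $\Arcs\cup\set{A_0}$ has maximum load exactly $L+1$; and since a pairwise disjoint family raises the load by at most one everywhere, the remaining $(L+1)^2-1$ arcs extend $\Arcs\cup\set{A_0}$ without increasing its maximum load. Theorem~\ref{thm:main-circ}, applied with max-load parameter $L+1$, gives $\fChromatic{\Arcs}\leq\fChromatic{\Arcs\cup\set{A_0}}\leq L+1$. Your first, more elaborate attempt (packing arcs into the region where the load is below $L$) aims at the stronger and generally false conclusion $\fChromatic{\Arcs}\leq L$ -- you were right to abandon it, but the correct fix is not to retreat to Valencia-Pabon; it is to let the load rise to $L+1$ and invoke the theorem one level up.
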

This ultimately suggests the following reinforcement of Theorem \ref{thm:main-circ}.
\begin{conjecture}
A set of unit circular-arcs $\Arcs$ on the circle with a circumference $\circumference \geqslant 2$ such that adding $k=\fMaxLoad{\Arcs}+1$ not intersecting unit circular-arcs $\arcTwo_1, \ldots,\arcTwo_k$ no increasing a max load of $\Arcs$, i.e., $\fMaxLoad{\Arcs}=\fMaxLoad{\Arcs \cup \set{\arcTwo_1, \ldots,\arcTwo_k}}$. Then $\fChromatic{\Arcs} \leqslant \fMaxLoad{\Arcs}$.
\end{conjecture}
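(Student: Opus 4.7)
The plan is to follow the template of Theorem~\ref{thm:main-circ} and reduce the conjecture to a much stronger color-completion lemma on the line. Concretely, I would cut the circle at a point $\eqClass{0}$ contained in a maximum overlapping set, order the arcs as $-1<\point_1<\ldots<\point_l\leqslant 0<\point_{l+1}<\ldots<\point_n\leqslant\circumference-1$ with $l=\fMaxLoad{\Arcs}$, unroll the circle into the linear multiset
\[
\Intervals=\bigl\{\intDef{\point_i}{\point_i+1}:i\in\num{n}\bigr\}\cup\bigl\{\intDef{\point_i+\circumference}{\point_i+\circumference+1}:i\in\num{l}\bigr\},
\]
and insist that the $i$th interval of the left copy and the $i$th interval of the right copy both receive color~$i$. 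The $L+1$ non-intersecting unit arcs guaranteed by hypothesis sit between the two copies of the clique, so the task becomes: extend these two prescribed identical bijective $l$-colorings to all of $\Intervals$, using only $l+1$ pairwise disjoint ``slack'' unit intervals between them, without ever exceeding clique number~$l$.

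First I would try to strengthen Lemma~\ref{lem:partColor} to work with a linear-in-$k$ number of slacks in the special case where $\coloringPrime$ and $\coloringDouble$ coincide as permutations (which is exactly the situation produced by the reduction above). The intuition is that in Lemma~\ref{lem:partColor} each slack $\intervalFive_i$ is used to perform one step of insertion sort, and the $k-1$ dummy intervals $\IntervalsFour_i$ padding each block $\Intervals_i$ only serve to align the block length with a multiple of $k$ so that \ModColComp transports the current permutation across the block. When the two boundary permutations agree, the ``insertion sort'' actually needs to perform the identity permutation overall; so the natural strategy is to choose the partition $\Intervals=\Intervals_1\cup\ldots\cup\Intervals_r$ adaptively and use each $\intervalFive_i$ as a place where \emph{the same} modulo-coloring is simply \emph{reindexed by one position}. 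Because a cyclic shift of length $k$ is generated by a single adjacent transposition, one could hope to replace the $k-1$ insertion-sort steps by a single cyclic shift together with a global re-alignment, reducing the number of slacks needed to something linear in~$k$.

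More concretely, the plan is to proceed by induction on $l$: pick a color $\gamma$ and track where $\gamma$ ``should'' land in each block, using the first available slack to skip one interval (so that \ModColComp now carries $\gamma$ one position forward). Iterating this $l$ times brings the permutation back to the prescribed boundary value at the right-hand copy of the clique. The crucial accounting is that each slack yields not just one transposition but a full cyclic step over a whole block, so the number of slacks required should scale with the \emph{number of fixed points missing} between $\coloringPrime$ and $\coloringDouble$, which here is zero up to a cyclic rotation.

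The main obstacle will be showing that one can always arrange the blocks so that every required rotation is realizable with exactly one slack interval, because the positions of the slacks are fixed by the input and the block boundaries must fall at $\linearOrder$-gaps that respect the clique condition $\fClique{\Intervals}\leqslant l$. In particular, if the instance between the two copies of the clique is densely packed, one may be forced to use \emph{several} slacks to realize a single positional adjustment of the permutation, which would break the $L+1$ bound and explain why the conjecture, although very natural, resists a direct adaptation of Lemma~\ref{lem:partColor}. I expect that resolving this tension — either by a finer analysis of how cyclic shifts compose inside the modulo coloring, or by a completely new combinatorial argument that bypasses the insertion-sort paradigm altogether — is exactly where the bulk of the work lies.
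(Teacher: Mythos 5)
This statement is stated in the paper as a \emph{conjecture}: the paper offers no proof of it, and Theorem~\ref{thm:main-circ} only establishes the weaker version with $\fMaxLoad{\Arcs}^2-1$ slack arcs. Your proposal is likewise not a proof --- it is a plan that ends by conceding that the central difficulty is unresolved --- so there is nothing to verify against; but since you present it as a proof attempt, let me point at the concrete gaps. The reduction to the line (cutting at a point of a maximum overlapping set, unrolling, and prescribing identical bijective colorings on the two copies of the clique) is exactly the reduction used in Theorem~\ref{thm:main-circ} and is fine. Everything after that is where the conjecture actually lives, and your sketch does not get there.

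Two specific problems. First, the claim that ``a cyclic shift of length $k$ is generated by a single adjacent transposition'' is false: a $k$-cycle decomposes into no fewer than $k-1$ transpositions, so you cannot collapse the $k-1$ insertion-sort steps of Lemma~\ref{lem:partColor} into one slack on these grounds. Second, and more fundamentally, you misattribute the $k^2-1$ count to the sorting alone. In Lemma~\ref{lem:partColor} only $k-1$ of the slack intervals (the $\intervalFive_i$) are used as separators enabling permutation steps; the remaining $k(k-1)$ are \emph{padding} needed to make each block's size a multiple of $k$, without which \ModColComp does not transport the boundary permutation across a block but rotates it by $s_i \bmod k$. Even when $\coloringPrime$ and $\coloringDouble$ coincide, the adversary controls the block sizes, so the accumulated rotations and the separator requirements together still consume order $k^2$ slack intervals in this framework; with only $k+1$ of them, whose positions are moreover forced by the input, the construction cannot be carried out. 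You correctly identify this tension in your last paragraph, but identifying the obstacle is not overcoming it; as written, the proposal does not prove the conjecture, and the conjecture remains open.
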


\appendix
\section{Lower bound for the \PruirFD problem}\label{app:lb}
\begin{proof}[Proof of Observation~\ref{obs:FD}]
We first construct an instance $\IntervalsFive=\Intervals \cup \IntervalsTwo \cup \IntervalsThree \cup \IntervalsFour$, where $\Intervals=\set{\intervalOne_1,\ldots,\intervalOne_n}$, $\IntervalsTwo=\set{\intervalTwo_1,\ldots,\intervalTwo_n}$, $\IntervalsThree=\set{\intervalThree_1,\ldots,\intervalThree_n}$ and $\IntervalsFour=\set{\intervalFour_1,\ldots,\intervalFour_n}$. All four sets contain pairwise disjoint intervals. Both $\IntervalsThree \cup \IntervalsFour$ and $\Intervals \cup \IntervalsTwo$ are paths when interpreted as graphs. Additionally, set $\IntervalsThree \cup \IntervalsFour$ is placed to the left of $\Intervals \cup \IntervalsTwo$. This is shown in Figure~\ref{fig:fd12}.
\begin{figure}[hbpt!]
\begin{small}

\centering

\begin{tikzpicture}[xscale={2\textwidth/34.5}, yscale=18]
	\begin{pgfonlayer}{nodelayer}
		\node [style=none] (0) at (3, 2) {};
		\node [style=none] (1) at (6.5, 2) {};
		\node [style=none] (2) at (6.75, 2) {};
		\node [style=none] (3) at (10.25, 2) {};
		\node [style=none] (4) at (10.5, 2) {};
		\node [style=none] (5) at (14, 2) {};
		\node [style=none] (6) at (5.5, 0) {};
		\node [style=none] (7) at (9, 0) {};
		\node [style=none] (8) at (9.25, 0) {};
		\node [style=none] (9) at (12.75, 0) {};
		\node [style=none] (10) at (13, 0) {};
		\node [style=none] (11) at (16.5, 0) {};
		\node [style=none] (12) at (14.25, 6) {};
		\node [style=none] (13) at (17.75, 6) {};
		\node [style=none] (14) at (16.75, 4) {};
		\node [style=none] (15) at (20.25, 4) {};
		\node [style=none] (16) at (18, 0) {};
		\node [style=none] (17) at (21.5, 0) {};
		\node [style=none] (18) at (20.5, 2) {};
		\node [style=none] (19) at (24, 2) {};
		\node [style=none] (20) at (21.75, 0) {};
		\node [style=none] (21) at (25.25, 0) {};
		\node [style=none] (22) at (25.5, 0) {};
		\node [style=none] (23) at (29, 0) {};
		\node [style=none] (24) at (24.25, 2) {};
		\node [style=none] (25) at (27.75, 2) {};
		\node [style=none] (26) at (28, 2) {};
		\node [style=none] (27) at (31.5, 2) {};
		\node [style=none] (28) at (3.75, 2.75) {$\intervalFour_3$};
		\node [style=none] (29) at (7.5, 2.75) {$\intervalFour_2$};
		\node [style=none] (30) at (11.25, 2.75) {$\intervalFour_1$};
		\node [style=none] (31) at (6.25, 0.75) {$\intervalThree_3$};
		\node [style=none] (32) at (10, 0.75) {$\intervalThree_2$};
		\node [style=none] (33) at (13.75, 0.75) {$\intervalThree_1$};
		\node [style=none] (34) at (21.25, 2.75) {$\intervalTwo_1$};
		\node [style=none] (35) at (25, 2.75) {$\intervalTwo_2$};
		\node [style=none] (36) at (28.75, 2.75) {$\intervalTwo_3$};
		\node [style=none] (37) at (18.75, 0.75) {$\intervalOne_1$};
		\node [style=none] (38) at (22.5, 0.75) {$\intervalOne_2$};
		\node [style=none] (39) at (26.25, 0.75) {$\intervalOne_3$};
		\node [style=none] (40) at (17.375, 4.75) {$\intervalPrim$};
		\node [style=none] (41) at (14.75, 6.75) {$\interval$};
		\node [style=none] (42) at (1, 1) {\begin{huge}$\cdots$\end{huge}};
		\node [style=none] (43) at (33.5, 1) {\begin{huge}$\cdots$\end{huge}};
		\node [style=none] (44) at (0, 2) {};
		\node [style=none] (45) at (0, 0) {};
		\node [style=none] (46) at (34.5, 2) {};
		\node [style=none] (47) at (34.5, 0) {};
	\end{pgfonlayer}
	\begin{pgfonlayer}{edgelayer}
		\draw [style=intGreen] (0.center) to (1.center);
		\draw [style=intGreen] (2.center) to (3.center);
		\draw [style=intGreen] (4.center) to (5.center);
		\draw [style=intRed] (6.center) to (7.center);
		\draw [style=intRed] (8.center) to (9.center);
		\draw [style=intRed] (10.center) to (11.center);
		\draw [style=interval] (12.center) to (13.center);
		\draw [style=interval] (14.center) to (15.center);
		\draw [style=intRed] (16.center) to (17.center);
		\draw [style=intGreen] (18.center) to (19.center);
		\draw [style=intRed] (20.center) to (21.center);
		\draw [style=intRed] (22.center) to (23.center);
		\draw [style=intGreen] (24.center) to (25.center);
		\draw [style=intGreen] (26.center) to (27.center);
	\end{pgfonlayer}
\end{tikzpicture}
 
\end{small}

\caption{The update when $\intervalThree_1$ has the same color as $\intervalOne_1$.}
\label{fig:fd12}
\end{figure}
Observe that since $\Intervals$ is $2$-colorable, all sets $\IntervalsThree$, $ \IntervalsFour$, $\Intervals$ and $\IntervalsTwo$ are necessarily monochromatic, moreover set $\IntervalsThree$ has different color than $\IntervalsFour$, and $\Intervals$ has different color than $\IntervalsTwo$. We construct the instance so, that $\intervalThree_1$ is the rightmost interval of $\IntervalsThree \cup \IntervalsFour$ and interval $\intervalOne_1$ is the leftmost interval of $\Intervals \cup \IntervalsTwo$. Moreover, the distance between the end coordinate of $\intervalThree_1$ and the begin coordinate of $\intervalOne_1$ is less then one (see Figure~\ref{fig:fd12}). Consider the case when $\intervalThree_1$ has the same color as $\intervalOne_1$. Then we insert to $\IntervalsFive$ two intersecting intervals $\interval$ and $\intervalPrim$, such that $\interval$ intersects $\intervalThree_1$ and $\intervalPrim$ intersects $\intervalOne_1$ (see Figure~\ref{fig:fd12}). This results in $\IntervalsFive \cup \set{ \interval,\intervalPrim}$ being a path when interpreted as a graph. The instance is still $2$-colorable, but now $\Intervals$ needs to have the same color as $\IntervalsFour$ and $\IntervalsTwo$ needs to have the same color as $\IntervalsThree$. This requires recoloring $2n$ intervals, since either $\Intervals \cup \IntervalsTwo$ or $\IntervalsThree \cup \IntervalsFour$ has to be recolored. Next, we remove $\interval$ and $\intervalPrim$. Consider now the case when $\intervalThree_1$ has a different color than $\intervalOne_1$ (see Figure~\ref{fig:fd34}).
\begin{figure}[hbpt!]
\begin{small}

\centering

\begin{tikzpicture}[xscale={2\textwidth/34.5}, yscale=18]
	\begin{pgfonlayer}{nodelayer}
		\node [style=none] (0) at (3, 2) {};
		\node [style=none] (1) at (6.5, 2) {};
		\node [style=none] (2) at (6.75, 2) {};
		\node [style=none] (3) at (10.25, 2) {};
		\node [style=none] (4) at (10.5, 2) {};
		\node [style=none] (5) at (14, 2) {};
		\node [style=none] (6) at (5.5, 0) {};
		\node [style=none] (7) at (9, 0) {};
		\node [style=none] (8) at (9.25, 0) {};
		\node [style=none] (9) at (12.75, 0) {};
		\node [style=none] (10) at (13, 0) {};
		\node [style=none] (11) at (16.5, 0) {};
		\node [style=none] (14) at (15.5, 4) {};
		\node [style=none] (15) at (19, 4) {};
		\node [style=none] (16) at (18, 0) {};
		\node [style=none] (17) at (21.5, 0) {};
		\node [style=none] (18) at (20.5, 2) {};
		\node [style=none] (19) at (24, 2) {};
		\node [style=none] (20) at (21.75, 0) {};
		\node [style=none] (21) at (25.25, 0) {};
		\node [style=none] (22) at (25.5, 0) {};
		\node [style=none] (23) at (29, 0) {};
		\node [style=none] (24) at (24.25, 2) {};
		\node [style=none] (25) at (27.75, 2) {};
		\node [style=none] (26) at (28, 2) {};
		\node [style=none] (27) at (31.5, 2) {};
		\node [style=none] (28) at (3.75, 2.75) {$\intervalFour_3$};
		\node [style=none] (29) at (7.5, 2.75) {$\intervalFour_2$};
		\node [style=none] (30) at (11.25, 2.75) {$\intervalFour_1$};
		\node [style=none] (31) at (6.25, 0.75) {$\intervalThree_3$};
		\node [style=none] (32) at (10, 0.75) {$\intervalThree_2$};
		\node [style=none] (33) at (13.75, 0.75) {$\intervalThree_1$};
		\node [style=none] (34) at (21.25, 2.75) {$\intervalTwo_1$};
		\node [style=none] (35) at (25, 2.75) {$\intervalTwo_2$};
		\node [style=none] (36) at (28.75, 2.75) {$\intervalTwo_3$};
		\node [style=none] (37) at (18.75, 0.75) {$\intervalOne_1$};
		\node [style=none] (38) at (22.5, 0.75) {$\intervalOne_2$};
		\node [style=none] (39) at (26.25, 0.75) {$\intervalOne_3$};
		\node [style=none] (41) at (16, 4.75) {$\interval$};
		\node [style=none] (42) at (1, 1) {\begin{huge}$\cdots$\end{huge}};
		\node [style=none] (43) at (33.5, 1) {\begin{huge}$\cdots$\end{huge}};
		\node [style=none] (44) at (0, 2) {};
		\node [style=none] (45) at (0, 0) {};
		\node [style=none] (46) at (34.5, 2) {};
		\node [style=none] (47) at (34.5, 0) {};
	\end{pgfonlayer}
	\begin{pgfonlayer}{edgelayer}
		\draw [style=intGreen] (0.center) to (1.center);
		\draw [style=intGreen] (2.center) to (3.center);
		\draw [style=intGreen] (4.center) to (5.center);
		\draw [style=intRed] (6.center) to (7.center);
		\draw [style=intRed] (8.center) to (9.center);
		\draw [style=intRed] (10.center) to (11.center);
		\draw [style=interval] (14.center) to (15.center);
		\draw [style=intGreen] (16.center) to (17.center);
		\draw [style=intRed] (18.center) to (19.center);
		\draw [style=intGreen] (20.center) to (21.center);
		\draw [style=intGreen] (22.center) to (23.center);
		\draw [style=intRed] (24.center) to (25.center);
		\draw [style=intRed] (26.center) to (27.center);
	\end{pgfonlayer}
\end{tikzpicture}
 
\end{small}

\caption{The update when $\intervalThree_1$ has different color than $\intervalOne_1$.}
\label{fig:fd34}
\end{figure}
In that case we insert an interval $\interval$ intersecting both $\intervalThree_1$ and $\intervalOne_1$. This causes that $\intervalThree_1$ and $\intervalOne_1$ have to now be assigned different colors, and again either $\Intervals \cup \IntervalsTwo$ or $\IntervalsThree \cup \IntervalsFour$ has to be recolored. 
\end{proof}
 \section{The Frogs game - proof}\label{app:zaba}

\begin{proof}[Proof of Theorem \ref{thm:main}.]
For the ranks $\RRanks{\ti}$ we define the potential function
$$
\Phi\bra{\RRanks{\ti}}\eqdef\sum_{j=-2k+3}^{n-\ti+1}
H\bra{\rrank{\ti}{j}+\ldots+\rrank{\ti}{j+2 \jn-2}}
$$
where $H\bra{x}\eqdef x\log_2 x$ for $x>0$ and $H\bra{0}\eqdef 0$, and all $\rrank{\ti}{i}$ which do not occurs in $\RRanks{\ti}$ we define as $\rrank{\ti}{i}\eqdef \Rini$.\footnote{The assumption that $\rrank{\ti}{i}\eqdef 0$ probably implies the bound $c\leqslant \bra{2k-1}n\log_2\frac{n}{2k-1}$ but with more complicated proof.}
Because $\RRanks{1}=\bra{\Rini,\Rini,\ldots,\Rini}$ and $\RRanks{n}=\bra{\Rini n}$ we have that $\Phi\bra{\RRanks{1}} = \bra{n+2 \jn -2} \bra{2 \jn-1} \Rini\log_2\bra{\bra{2 \jn-1} \Rini}$ and $\Phi\bra{\RRanks{n}} = \bra{2 \jn-1}\bra{n+2 \jn-2} \Rini \log_2 \bra{\bra{n+2 \jn-2} \Rini}$, respectively.
If we prove that
\begin{equation}
\label{eq:potential}
\cost_{\ti} \leqslant \Phi\bra{\RRanks{\ti+1}}-\Phi\bra{\RRanks{\ti}}
\end{equation}
for $\ti=1,\ldots,n-1$, we will finally get that the total cost
\begin{multline*}
\cost=\sum_{\ti=1}^{n-1}\cost_{\ti} \leqslant \sum_{\ti=1}^{n-1}\Bra{\Phi\bra{\RRanks{\ti+1}}-\Phi\bra{\RRanks{\ti}}}=\Phi\bra{\RRanks{n}}-\Phi\bra{\RRanks{1}} = \\
 =\Rini \bra{2 \jn -1}\bra{n+2 \jn-2}\log_2\frac{n+2 \jn-2}{2 \jn-1}
\end{multline*}
which would finish the proof.

We split the proof of Inequality (\ref{eq:potential}) into the two parts.
First, let us fix integer $\ti=1,\ldots,n-1$ and two sequences of ranks $\RanksTwo=\bra{\rankTwo{1},\ldots,\rankTwo{n-\ti+1}}$ and $\RanksThree=\bra{\rankThree{1},\ldots,\rankThree{n-\ti}}$ defined as follows.
For any $i=1,\ldots,n-\ti+1$
$$
\rankTwo{i} \eqdef
\begin{cases}
\rrank{\ti}{i} & \text{for } i=j_{\ti}-\jn+1,\ldots,j_{\ti}+\jn,\\
0 & \text{otherwise;}
\end{cases}
$$
and for any $i=1,\ldots,n-\ti$
$$
\rankThree{i} \eqdef
\begin{cases}
\rrank{\ti+1}{i} & \text{for } i=j_{\ti}-\jn+1,\ldots,j_{\ti}+\jn-1,\\
0 & \text{otherwise.}
\end{cases}
$$
The following two Claims hold.

\begin{claim}
\label{clm:first}
$
\Phi\bra{\RanksThree}-\Phi\bra{\RanksTwo} \leqslant \Phi\bra{\RRanks{\ti+1}}-\Phi\bra{\RRanks{\ti}}.
$
\end{claim}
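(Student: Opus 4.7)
The plan is to compare $\Phi(\RRanks{\ti+1}) - \Phi(\RRanks{\ti})$ with $\Phi(\RanksThree) - \Phi(\RanksTwo)$ window-by-window. A crucial preliminary observation is that $\RanksThree$ is obtained from $\RanksTwo$ by precisely the same merge at position $j_{\ti}$ that takes $\RRanks{\ti}$ to $\RRanks{\ti+1}$: indeed $\RanksTwo$ agrees with $\RRanks{\ti}$ on the active window $[j_{\ti}-\jn+1, j_{\ti}+\jn]$, which contains both merged positions, while both sequences vanish on its complement.

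Write $\sigma_j(R) := r_j + r_{j+1} + \cdots + r_{j+2\jn-2}$ for the argument of $H$ at window $j$. A direct case analysis of the shift induced by the merge gives three identities: $\sigma_j(\RRanks{\ti+1}) = \sigma_j(\RRanks{\ti})$ for windows strictly left of position $j_{\ti}$, $\sigma_j(\RRanks{\ti+1}) = \sigma_{j+1}(\RRanks{\ti})$ for windows strictly right, and $\sigma_j(\RRanks{\ti+1}) = \sigma_j(\RRanks{\ti}) + \rrank{\ti}{j+2\jn-1}$ for each of the $2\jn-1$ windows indexed $j \in [j_{\ti}-2\jn+2, j_{\ti}]$ that cover position $j_{\ti}$. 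Cancelling the terms unaffected by the merge and using the shift to reindex the right-side case (which leaves exactly one uncancelled term) yields
$$\Phi(\RRanks{\ti+1}) - \Phi(\RRanks{\ti}) = \sum_{j=j_{\ti}-2\jn+2}^{j_{\ti}} \bigl[ H(\sigma_j(\RRanks{\ti}) + \rrank{\ti}{j+2\jn-1}) - H(\sigma_j(\RRanks{\ti})) \bigr] - H(\sigma_{j_{\ti}+1}(\RRanks{\ti})),$$
and the same formula with $\RanksTwo, \rankTwo{\cdot}$ in place of $\RRanks{\ti}, \rrank{\ti}{\cdot}$ holds for $\Phi(\RanksThree) - \Phi(\RanksTwo)$. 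Subtracting reduces the claim to
$$\sum_{j=j_{\ti}-2\jn+2}^{j_{\ti}} \Delta_j \geq H(\sigma_{j_{\ti}+1}(\RRanks{\ti})) - H(\sigma_{j_{\ti}+1}(\RanksTwo)),$$
where $\Delta_j$ denotes the difference of the two $H$-increments at window $j$.

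The convexity of $H$ makes $x \mapsto H(x+e) - H(x)$ non-decreasing in $x \geq 0$ for each $e \geq 0$; together with $\sigma_j(\RRanks{\ti}) \geq \sigma_j(\RanksTwo)$ and $\rrank{\ti}{j+2\jn-1} \geq \rankTwo{j+2\jn-1}$ this yields $\Delta_j \geq 0$ for every $j$. The decisive step is a telescoping bound on the $\jn-1$ right-overlap windows $j = j_{\ti}-\jn+1+m$ for $m = 1, \ldots, \jn-1$: for these, the added rank is the exterior rank $e_m := \rrank{\ti}{j_{\ti}+\jn+m}$ (so $\rankTwo{j+2\jn-1} = 0$), and since the window $[j, j+2\jn-2]$ contains the interval $[j_{\ti}+1, j_{\ti}+\jn+m-1]$ one obtains $\sigma_j(\RRanks{\ti}) \geq \sigma_{j_{\ti}+1}(\RanksTwo) + e_1 + \cdots + e_{m-1}$. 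Convexity then bounds $\Delta_j$ below by $H(x + e_1 + \cdots + e_m) - H(x + e_1 + \cdots + e_{m-1})$ with $x := \sigma_{j_{\ti}+1}(\RanksTwo)$, and summing over $m$ telescopes exactly to $H(x+e_1+\cdots+e_{\jn-1}) - H(x) = H(\sigma_{j_{\ti}+1}(\RRanks{\ti})) - H(\sigma_{j_{\ti}+1}(\RanksTwo))$; the remaining $\Delta_j$'s contribute non-negatively.

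The main technical obstacle is the combinatorial bookkeeping in the window-by-window decomposition --- tracking how each window sum transforms under the merge, isolating the unique lost term $-H(\sigma_{j_{\ti}+1}(\RRanks{\ti}))$ arising from the reindexing of the right-side case, and verifying that the telescoping matches the $\jn-1$ right-overlap windows to the expansion of the boundary penalty. Once this structure is in place, the convexity of $H$ closes the argument.
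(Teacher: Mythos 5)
Your proof is correct, but it reaches Claim \ref{clm:first} by a genuinely different route than the paper. The paper argues by monotone deformation: it introduces intermediate sequences $\RRanksTwo{l},\RRanksThree{l}$ in which all positions $<l$ are zeroed out, proves via Subclaims \ref{clm:increasing}--\ref{clm:ind-step} (the non-decreasing-increments property of $H$) that zeroing one more prefix position can only decrease the difference $\Phi\bra{\RRanksThree{l}}-\Phi\bra{\RRanksTwo{l}}$, and then declares the suffix-zeroing ``fully symmetrical'' and omits it. You instead compare the two differences directly, window by window. Your three identities for how each window sum transforms under the merge are exactly right (they are the same bookkeeping that underlies the paper's Equations (\ref{eq:Pl})--(\ref{eq:Ql+1}) and (\ref{eq:P})--(\ref{eq:QH})), the single uncancelled term $-H(\sigma_{j_{\ti}+1}(\cdot))$ coming from the reindexing of the right-hand windows is correctly isolated, and the decisive telescoping works: for the right-overlap window indexed by $m$ the window $[j,j+2\jn-2]$ indeed contains $[j_{\ti}+1,\,j_{\ti}+\jn+m-1]$, so $\sigma_j\bra{\RRanks{\ti}}\geq \sigma_{j_{\ti}+1}\bra{\RanksTwo}+e_1+\cdots+e_{m-1}$, and since $\sigma_{j_{\ti}+1}\bra{\RanksTwo}+e_1+\cdots+e_{\jn-1}=\sigma_{j_{\ti}+1}\bra{\RRanks{\ti}}$ the telescoped sum absorbs the boundary penalty exactly, while the remaining $\jn$ windows (where the added rank is the same for both sequences) contribute nonnegatively by convexity alone. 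Both arguments ultimately rest on the same fact that $x\mapsto H(x+c)-H(x)$ is non-decreasing. What your version buys is completeness and transparency: there is no induction, no omitted ``symmetric'' half, and the cancellation pattern is made explicit. What it costs is heavier index bookkeeping in one place; the paper's local one-position-at-a-time step is simpler per step but leaves more to the reader. One shared (and harmless, since all ranks here are non-negative integers) boundary point in both proofs is the behavior of $H$ near $0$, where monotonicity of $H$ itself fails but superadditivity still gives what is needed.
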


\begin{proof}[Proof of Claim \ref{clm:first}]
For $l\leqslant j_{\ti}-\jn+1$ let us define two sequences of ranks $\RRanksTwo{l}=\bra{\rrankTwo{l}{1},\ldots,\rrankTwo{l}{n-\ti+1}}$ and $\RRanksThree{l}=\bra{\rrankThree{l}{1},\ldots,\rrankThree{l}{n-\ti}}$ as follows.
For each $i=1,\ldots,n-\ti+1$
$$
\rrankTwo{l}{i} \eqdef
\begin{cases}
\rrank{\ti}{i} & \text{for } i\geqslant l,\\
0 & \text{otherwise;}
\end{cases}
$$
and for each $i=1,\ldots,n-\ti$
$$
\rrankThree{l}{i} \eqdef
\begin{cases}
\rrank{\ti+1}{i} & \text{for } i\geqslant l,\\
0 & \text{otherwise.}
\end{cases}
$$
We will prove in an inductive way that for $l\leqslant j_{\ti}-\jn+1$ we have that
\begin{equation}
\label{eq:ineq-ind}
\Phi\bra{\RRanksThree{l}}-\Phi\bra{\RRanksTwo{l}} \leqslant \Phi\bra{\RRanks{\ti+1}}-\Phi\bra{\RRanks{\ti}}.
\end{equation}
First, we note that for $j\leqslant j_{\ti}-2 \jn+1$ we have that
$$
H\bra{\rrank{\ti}{j}+\ldots+\rrank{\ti}{j+2 \jn-2}} = H\bra{\rrank{\ti+1}{j}+\ldots+\rrank{\ti+1}{j+2 \jn-2}}
$$
and
$$
H\bra{\rrankTwo{l}{j}+\ldots+\rrankTwo{l}{j+2 \jn-2}} = H\bra{\rrankThree{l}{j}+\ldots+\rrankThree{l}{j+2 \jn-2}}
$$
and as a consequence for $l\leqslant j_{\ti}-2 \jn+1$ we have that $\RRanksTwo{l}=\RRanksThree{l}$ and $\RRanks{\ti}=\RRanks{\ti+1}$ and so Inequality~(\ref{eq:ineq-ind}) trivially holds.
Let us assume that Inequality~(\ref{eq:ineq-ind}) holds for $j_{\ti}-2 \jn+1<l<j_{\ti}-\jn+1$ and we will prove it for $l+1$.
We divide the proof into the following three statements.

\begin{claim}
\label{clm:increasing}
For any real constant $c\geqslant 0$ the function $f\bra{x}=\bra{x+c}\log_2\bra{x+c}-x\log_2 x$ is non-decreasing.
\end{claim}

\begin{proof}[Proof of Claim \ref{clm:increasing}]
The derivative of the function $f(x)$ equals $f'\bra{x}=\log_2\bra{x+c}-\log_2 x$.
Because $c\geqslant0$ and the binary logarithm is the increasing function we have that $\log_2\bra{x+c}\geqslant\log_2 x$ for each $x>0$ and in consequence $f'\bra{x}=\log_2\bra{x+c}-\log_2 x\geqslant 0$ for any $x>0$.
This implies a desired thesis.
\end{proof}

\begin{claim}
\label{clm:subadd-H}
For the arbitrary chosen non-negative real numbers $a,b,c\geqslant 0$ there is $$H\bra{b+c}-H\bra{b}\leqslant H\bra{a+b+c}-H\bra{a+b}.$$
\end{claim}

\begin{proof}[Proof of Subclaim \ref{clm:subadd-H}]
From the fact that for any real constant $c\geqslant 0$ the function $f\bra{x}=\bra{x+c}\log_2\bra{x+c}-x\log_2 x$ is non-decreasing (see Subclaim \ref{clm:increasing}) and $b\leqslant a + b$ we obtain that
\begin{multline*}
H\bra{b+c}-H\bra{b} = \bra{b+c}\log_2\bra{b+c} - b \log_2 b = f\bra{b}\leqslant f\bra{a+b} =\\
= \bra{a+b+c}\log_2\bra{a+b+c} - \bra{a+b} \log_2 \bra{a+b} = H\bra{a+b+c}-H\bra{a+b}.
\end{multline*}
\end{proof}

\begin{claim}
\label{clm:ind-step}
$\Phi\bra{\RRanksThree{l+1}}-\Phi\bra{\RRanksTwo{l+1}} \leqslant\Phi\bra{\RRanksThree{l}}-\Phi\bra{\RRanksTwo{l}}$.
\end{claim}

\begin{proof}[Proof of Subclaim \ref{clm:ind-step}]
Because $H\bra{0}=0$ and $j_{\ti}-2 \jn+1<l<j_{\ti}-\jn+1$ we have that
\begin{multline}
\label{eq:Pl}
\Phi\bra{\RRanksTwo{l}}
=\sum_{j=-2 \jn+3}^{n-\ti}H\bra{\rrankTwo{l}{j}+\ldots+\rrankTwo{l}{j+2 \jn-2}}=\\
=\sum_{j=j_{\ti}-2 \jn+2}^{j_{\ti}}H\bra{\rrank{\ti}{l}+\ldots+\rrank{\ti}{j+2 \jn-2}}
+\sum_{j=j_{\ti}+1}^{n-\ti+1}H\bra{\rrank{\ti}{j}+\ldots+\rrank{\ti}{j+2 \jn-2}}.
\end{multline}
Because
\begin{multline*}
\Phi\bra{\RRanks{\ti+1}}=\sum_{j=-2 \jn+3}^{n-\ti}H\bra{\rrank{\ti+1}{j},\ldots,\rrank{\ti+1}{j+2 \jn-2}}
=\sum_{j=-2 \jn+3}^{j_{\ti}-2 \jn+1}H\bra{\rrank{\ti}{j},\ldots,\rrank{\ti}{j+2 \jn-2}}+\\
+\sum_{j=j_{\ti}-2 \jn+2}^{j_{\ti}}H\bra{\rrank{\ti}{j},\ldots,\rrank{\ti}{j+2 \jn-1}}
+\sum_{j=j_{\ti}+2}^{n-\ti+1}H\bra{\rrank{\ti}{j},\ldots,\rrank{\ti}{j+2 \jn-2}}
\end{multline*}
we have
\begin{multline}
\label{eq:Ql}
\Phi\bra{\RRanksThree{l}}
=\sum_{j=-2 \jn+3}^{n-\ti}H\bra{\rrankThree{l}{j}+\ldots+\rrankThree{l}{j+2 \jn-2}}=\\
=\sum_{j=j_{\ti}-2 \jn+2}^{j_{\ti}}H\bra{\rrank{\ti}{l}+\ldots+\rrank{\ti}{j+2 \jn-1}}
+\sum_{j=j_{\ti}+1}^{n-\ti+1}H\bra{\rrank{\ti}{j}+\ldots+\rrank{\ti}{j+2 \jn-2}}.
\end{multline}
Similarly
\begin{equation}
\label{eq:Pl+1}
\Phi\bra{\RRanksTwo{l+1}}=\sum_{j=j_{\ti}-2 \jn+2}^{j_{\ti}}H\bra{\rrank{\ti}{l+1}+\ldots+\rrank{\ti}{j+2 \jn-2}}
+\sum_{j=j_{\ti}+1}^{n-\ti+1}H\bra{\rrank{\ti}{j}+\ldots+\rrank{\ti}{j+2 \jn-2}}
\end{equation}
and
\begin{equation}
\label{eq:Ql+1}
\Phi\bra{\RRanksThree{l+1}}
=\sum_{j=j_{\ti}-2 \jn+2}^{j_{\ti}}H\bra{\rrank{\ti}{l+1}+\ldots+\rrank{\ti}{j+2 \jn-1}}
+\sum_{j=j_{\ti}+1}^{n-\ti+1}H\bra{\rrank{\ti}{j}+\ldots+\rrank{\ti}{j+2 \jn-2}}.
\end{equation}
Taking together Equations (\ref{eq:Pl})-(\ref{eq:Ql+1}) and Subclaim \ref{clm:subadd-H} (where $a=\rrank{\ti}{l}$, $b=\rrank{\ti}{l+1}+\ldots+\rrank{\ti}{j+2 \jn-2}$, and $c=\rrank{\ti}{j+2 \jn-1}$) we obtain that
\begin{multline*}
\Phi\bra{\RRanksThree{l+1}}-\Phi\bra{\RRanksTwo{l+1}} = \sum_{j=j_{\ti}-2 \jn+2}^{j_{\ti}}\Bra{H\bra{\rrank{\ti}{l+1}+\ldots+\rrank{\ti}{j+2 \jn-1}} - H\bra{\rrank{\ti}{l+1}+\ldots+\rrank{\ti}{j+2 \jn-2}}} \leqslant\\
\leqslant \sum_{j=j_{\ti}-2 \jn+2}^{j_{\ti}}\Bra{H\bra{\rrank{\ti}{l}+\ldots+\rrank{\ti}{j+2 \jn-1}} - H\bra{\rrank{\ti}{l}+\ldots+\rrank{\ti}{j+2 \jn-2}}} = \Phi\bra{\RRanksThree{l}}-\Phi\bra{\RRanksTwo{l}},
\end{multline*}
which finishes the proof of Subclaim \ref{clm:ind-step}.
\end{proof}
By Subclaim \ref{clm:ind-step} we obtain that $\Phi\bra{\RRanksThree{l+1}}-\Phi\bra{\RRanksTwo{l+1}} \leqslant\Phi\bra{\RRanksThree{l}}-\Phi\bra{\RRanksTwo{l}}\leqslant \Phi\bra{\RRanks{\ti+1}}-\Phi\bra{\RRanks{\ti}}$ which finishes the inductive step in the proof of Inequality (\ref{eq:ineq-ind}).
In particular, we have that $\Phi\bra{\RRanksThree{j_{\ti}-\jn+1}}-\Phi\bra{\RRanksTwo{j_{\ti}-\jn+1}}\leqslant \Phi\bra{\RRanks{\ti+1}}-\Phi\bra{\RRanks{\ti}}$.
Colloquially speaking, we have shown that you can replace a certain prefix with zeros.
The second part of the proof of Claim \ref{clm:first} would be to show that you can additionally reset the suffix.
We will omit this part of the proof, however, because it is fully symmetrical with the first part.
\end{proof}
\begin{claim}
\label{clm:c-alpha-P-Q}
$\cost_{\ti} \leqslant \Phi\bra{\RanksThree}-\Phi\bra{\RanksTwo}$.
\end{claim}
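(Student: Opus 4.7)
The plan is to compute the difference $\Phi(\RanksThree) - \Phi(\RanksTwo)$ explicitly and then conclude via a short elementary inequality on $H(x) = x \log_2 x$. The key structural observation is that $\RanksThree$ is obtained from $\RanksTwo$ by merging the two adjacent entries $\rrank{\ti}{j_\ti}$ and $\rrank{\ti}{j_\ti + 1}$ into a single entry $\rrank{\ti}{j_\ti} + \rrank{\ti}{j_\ti + 1}$ at position $j_\ti$ (with all subsequent entries shifted one slot to the left), while both sequences vanish outside a block of $2\jn$ (resp.\ $2\jn - 1$) consecutive nonzero positions centered around $j_\ti$.

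Since $\Phi$ is a sum of $H$-values of window sums of length $2\jn - 1$, I will first argue that the windows lying entirely at positions $\leqslant j_\ti - 1$ contribute identically to $\Phi(\RanksTwo)$ and $\Phi(\RanksThree)$, and that the windows lying entirely at positions $\geqslant j_\ti + 2$ in $\RanksTwo$ match (after the index shift by one) the windows at positions $\geqslant j_\ti + 1$ in $\RanksThree$; both families therefore cancel in the difference. What remains are: in $\RanksThree$, the $2\jn - 1$ ``straddling'' windows covering the merged position $j_\ti$ (each such window-sum equals the sum of a length-$2\jn$ window in $\RanksTwo$), and in $\RanksTwo$, the $2\jn$ windows that overlap the pair $\{j_\ti, j_\ti + 1\}$. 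Setting $a_i \eqdef \rrank{\ti}{j_\ti - \jn + i}$, $L \eqdef a_1 + \ldots + a_\jn$ and $R \eqdef a_{\jn + 1} + \ldots + a_{2\jn}$, a direct enumeration shows that each of these surviving window sums is either a left-cumulative sum $L_k \eqdef a_1 + \ldots + a_k$ or a right-cumulative sum $R_k \eqdef a_k + \ldots + a_{2\jn}$, and after pairwise cancellation of the $2\jn - 2$ shared cumulative sums exactly three terms survive, giving
\[
\Phi(\RanksThree) - \Phi(\RanksTwo) = H(L + R) - H(L) - H(R).
\]

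To finish, I will prove the elementary inequality $H(a + b) - H(a) - H(b) \geqslant \min(a, b)$ for $a, b > 0$. Assuming WLOG $a \leqslant b$, the bound $a + b \geqslant 2a$ gives $a \log_2(a + b) \geqslant a + a \log_2 a$, and $a + b \geqslant b$ gives $b \log_2(a + b) \geqslant b \log_2 b$. Summing these yields $(a + b) \log_2(a + b) \geqslant H(a) + H(b) + a$, which is exactly the required inequality. Applying this with $a = L$ and $b = R$ gives $\cost_{\ti} = \min(L, R) \leqslant \Phi(\RanksThree) - \Phi(\RanksTwo)$, completing the proof.

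The main obstacle will be the bookkeeping of the window cancellation: one must list the $2\jn$ relevant window sums of $\RanksTwo$ as $L_\jn, L_{\jn + 1}, \ldots, L_{2\jn - 1}, R_2, R_3, \ldots, R_{\jn + 1}$ and the $2\jn - 1$ straddling sums of $\RanksThree$ as $L_{\jn + 1}, \ldots, L_{2\jn - 1}, L + R, R_2, \ldots, R_\jn$, and then verify that the $2\jn - 2$ shared cumulative sums cancel so that the surviving difference is precisely $H(L + R) - H(L_\jn) - H(R_{\jn + 1}) = H(L + R) - H(L) - H(R)$. A minor additional point is that boundary windows partly supported on ``virtual'' positions outside the actual sequence (where the paper's convention assigns the value $\Rini$) either cancel outright or contribute nonnegatively to the difference, which follows from the monotonicity of $H$ and the fact that $\RanksTwo$ and $\RanksThree$ share the same virtual-value convention outside the merge region.
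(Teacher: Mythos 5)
Your proposal is correct and follows essentially the same route as the paper: both reduce the difference $\Phi(\RanksThree)-\Phi(\RanksTwo)$ by cancelling matching window sums down to $H(R_L+R_R)-H(R_L)-H(R_R)$ with $R_L,R_R$ the left and right length-$\jn$ block sums, and then finish with an elementary estimate on $H$. The only cosmetic difference is in the last step, where the paper derives the (slightly stronger) bound $2\min(a,b)\leqslant H(a+b)-H(a)-H(b)$ from monotonicity of $x\mapsto(x+c)\log_2(x+c)-x\log_2 x$, whereas you prove $\min(a,b)\leqslant H(a+b)-H(a)-H(b)$ directly; either suffices.
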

\begin{proof}[Proof of Claim \ref{clm:c-alpha-P-Q}]
To simplify the notation, let us define $\HR{t}{u}\eqdef H\bra{\rrank{\ti}{t}+\ldots+\rrank{\ti}{u}}$ and $\HRP{t}{u}\eqdef H\bra{\rrank{\ti+1}{t}+\ldots+\rrank{\ti+1}{u}}$.
Then
\begin{multline}
\label{eq:P}
\Phi\bra{\RanksTwo}
=\sum_{j=-2 \jn+3}^{n-\ti+1}H\bra{\rankTwo{j}+\ldots+\rankTwo{j+2 \jn-2}}=\\
=\HR{j_{\ti}-\jn+1}{j_{\ti}-\jn+1}+\HR{j_{\ti}-\jn+1}{j_{\ti}-\jn+2}+\ldots+\HR{j_{\ti}-\jn+1}{j_{\ti}+\jn-1}
+\HR{j_{\ti}-\jn+2}{j_{\ti}+\jn}+\HR{j_{\ti}-\jn+3}{j_{\ti}+\jn}+\ldots+\HR{j_{\ti}+\jn}{j_{\ti}+\jn}
\end{multline}
We will divide the sum $\Phi\bra{\RanksTwo}$ into those elements that do not contain $\rrank{\ti}{j_0}+\rrank{\ti}{j_0+1}$ and are earlier than (colored red), those that contain $\rrank{\ti}{j_0}+\rrank{\ti}{j_0+1}$ (colored blue), and those that do not contain $\rrank{\ti}{j_0}+\rrank{\ti}{j_0+1}$ and lie later than (colored blue).
\begin{multline}
\label{eq:PH}
\Phi\bra{\RanksTwo}=\HRred{j_{\ti}-\jn+1}{j_{\ti}-\jn+1}+\HRred{j_{\ti}-\jn+1}{j_{\ti}-\jn+2}+\ldots+\HRred{j_{\ti}-\jn+1}{j_{\ti}-1}+\HR{j_{\ti}-\jn+1}{j_{\ti}}+\\
+\HRblue{j_{\ti}-\jn+1}{j_{\ti}+1}+\HRblue{j_{\ti}-\jn+1}{j_{\ti}+2}+\ldots+\HRblue{j_{\ti}-\jn+1}{j_{\ti}+\jn-1}+\HRblue{j_{\ti}-\jn+1}{j_{\ti}+\jn}
+\HRblue{j_{\ti}-\jn+2}{j_{\ti}+\jn}+\ldots+\HRblue{j_{\ti}}{j_{\ti}+\jn}+\\
+\HR{j_{\ti}+1}{j_{\ti}+\jn}+\HRgreen{j_{\ti}+2}{j_{\ti}+\jn}+\HRgreen{j_{\ti}+3}{j_{\ti}+\jn}+\ldots+\HRgreen{j_{\ti}+\jn}{j_{\ti}+\jn}.
\end{multline}
We will write analogously $\Phi\bra{\RanksThree}$.
\begin{multline}
\label{eq:Q}
\Phi\bra{\RanksThree}
= \sum_{j=-2 \jn+3}^{n-\ti}H\bra{\rankThree{j}+\ldots+\rankThree{j+2 \jn-2}}=\\
=\HRP{j_{\ti}-\jn+1}{j_{\ti}-\jn+1}+\HRP{j_{\ti}-\jn+1}{j_{\ti}-\jn+2}+\ldots+\HRP{j_{\ti}-\jn+1}{j_{\ti}+\jn-1}
+\HRP{j_{\ti}-\jn+2}{j_{\ti}+\jn-1}+\ldots+\HRP{j_{\ti}+\jn-1}{j_{\ti}+\jn-1}
\end{multline}
As previously, we will divide the sum $\Phi\bra{\RanksThree}$ into those elements that do not contain $\rrank{\ti}{j_0}+\rrank{\ti}{j_0+1}$ and are earlier than (colored red), those that contain $\rrank{\ti}{j_0}+\rrank{\ti}{j_0+1}$ (colored blue), and those that do not contain $\rrank{\ti}{j_0}+\rrank{\ti}{j_0+1}$ and lie later than (colored blue).
\begin{multline}
\label{eq:QH}
\Phi\bra{\RanksThree}=\HRred{j_{\ti}-\jn+1}{j_{\ti}-\jn+1}+\HRred{j_{\ti}-\jn+1}{j_{\ti}-\jn+2}+\ldots+\HRred{j_{\ti}-\jn+1}{j_{\ti}-1}+\\
+\HRblue{j_{\ti}-\jn+1}{j_{\ti}+1}+\HRblue{j_{\ti}-\jn+1}{j_{\ti}+2}+\ldots+\HRblue{j_{\ti}-\jn+1}{j_{\ti}+\jn-1}+\HRblue{j_{\ti}-\jn+1}{j_{\ti}+\jn}
+\HRblue{j_{\ti}-\jn+2}{j_{\ti}+\jn}+\ldots+\HRblue{j_{\ti}}{j_{\ti}+\jn}+\\
+\HRgreen{j_{\ti}+2}{j_{\ti}+\jn}+\HRgreen{j_{\ti}+3}{j_{\ti}+\jn}+\ldots+\HRgreen{j_{\ti}+\jn}{j_{\ti}+\jn}.
\end{multline}
If we compare formulas (\ref{eq:PH}) and (\ref{eq:QH}), it can be seen that the difference equals
\begin{multline}
\label{eq:QH-PH}
\Phi\bra{\RanksThree}-\Phi\bra{\RanksTwo}=\HR{j_{\ti}-\jn+1}{j_{\ti}+\jn}-\HR{j_{\ti}-\jn+1}{j_{\ti}}-\HR{j_{\ti}+1}{j_{\ti}+\jn}=\\
= \bra{R_L + R_R} \log_2 \bra{R_L + R_R} - R_L \log_2 R_L - R_R \log_2 R_R,
\end{multline}
where $R_L\eqdef\rrank{\ti}{j_{\ti}-\jn+1}+\ldots+\rrank{\ti}{j_{\ti}}$ and $R_R\eqdef\rrank{\ti}{j_{\ti}+1}+\ldots+\rrank{\ti}{j_{\ti}+\jn}$.
The rest of the proof comes down to the following statement.
\begin{claim}
\label{clm:logab}
For the arbitrary chosen non-negative real numbers $a\leqslant b$ there is
$$
2a \leqslant \bra{a+b} \log_2\bra{a+b} - a \log_2 a - b \log_2 b.
$$
\end{claim}
\begin{proof}[Proof of Subclaim \ref{clm:logab}]
From the fact that for any real constant $a \geqslant 0$ the function $f\bra{x} = \bra{x + a} \log_2 \bra{x + a} - x \log_2 x$ is non-decreasing (see Subclaim \ref{clm:increasing}) and $a \leqslant b$ we obtain that
$$
\bra{a+a}\log_2 \bra{a+a} - a \log_2 a \leqslant \bra{a+b} \log_2 \bra{a+b} - b \log_2 b
$$
and in consequence
$$
2a = 2a \log_2 2a - a \log_2 a - a \log_2 a \leqslant \bra{a+b} \log_2 \bra{a+b} - a \log_2 a - b \log_2 b.
$$
\end{proof}
Without loss of generality, we can assume that $R_L\leqslant R_R$.
Then, taking together definition of~$\cost_{\ti}$, Subclaim \ref{clm:logab} and Equation (\ref{eq:QH-PH}) we obtain that
\begin{multline*}
\cost_{\ti} = \min\set{R_L,R_R}=R_L\leqslant \\
\leqslant \bra{R_L + R_R} \log_2 \bra{R_L + R_R} - R_L \log_2 R_L - R_R \log_2 R_R = \Phi\bra{\RanksThree}-\Phi\bra{\RanksTwo}.
\end{multline*}
\end{proof}

Taking together Claims \ref{clm:first} and \ref{clm:c-alpha-P-Q} we obtain that
$$
\cost_{\ti} \leqslant \Phi\bra{\RanksThree}-\Phi\bra{\RanksTwo} \leqslant \Phi\bra{\RRanks{\ti+1}}-\Phi\bra{\RRanks{\ti}}
$$
which ends the proof of Inequality (\ref{eq:ineq-ind}) and finally Theorem \ref{thm:main}.
\end{proof}

 \section{The connection between \Recolor algorithm and the blocks of $\PartB_j$}\label{app:connection}
\begin{proof}[Proof of Observation~\ref{obs:algToBlR}]
 First observe that $\set{\interval_j} \cup \intervals{j-1}{\Block^{j-1}_{s(j)} \cup \Block^{j-1}_{s(j)+1},\ldots \Block^{j-1}_{s(j)+\gamma}}=\intervals{j}{\Block^{j-1}_{s(j)} \cup \Block^{j-1}_{s(j)+1},\ldots \Block^{j-1}_{s(j)+\gamma}}.$
Recall that $\intervals{j}{\Block^{j-1}_{s(j)}  \cup \ldots \cup \Block^{j-1}_{s(j)+\gamma}}$ is a set of consecutive intervals in $\Intervals_j$ according to $\linearOrder$ order. If $s(j)+\gamma > b_j$, then $\intervals{j}{\Block^{j-1}_{s(j)}  \cup \ldots \cup \Block^{j-1}_{s(j)+\gamma}}$ is a suffix of $\Intervals_j$ containing $\interval_j$. Thus, $\infix{\Intervals_j}{\interval_j}{\intervalTwo^{(R)}} \subseteq \suffix{\Intervals_j}{\interval_j} \subseteq \intervals{j}{\Block^{j-1}_{s(j)}  \cup \ldots \cup \Block^{j-1}_{s(j)+\gamma}}$, what completes the proof in this case. 

So let us assume $s(j)+\gamma \leq b_j$. Let  $\interval^{(last)}$ be the last interval of $\intervals{j}{\Block^{j-1}_{s(j)}  \cup \ldots \cup \Block^{j-1}_{s(j)+\gamma}}$
in $\linearOrder$ order.
  Recall that $\infix{\Intervals_j}{\interval_j}{\intervalTwo^{(R)}}$ is the shortest infix of $\Intervals_j$ starting in $\interval_j$, such that one of the two following conditions hold:
\begin{itemize}

 \item $\infix{\Intervals_j}{\interval_j}{\intervalTwo^{(R)}}$ is disjoint from the successor of $\intervalTwo^{(R)}$
 \item for intervals $\interval,\intervalTwo$
 such that $\infix{\Intervals_j}{\interval_j}{\interval}=k+1$ and $\infix{\Intervals_j}{\intervalTwo}{\intervalTwo^{(R)}}=k+1$ there is a set $\IntervalsFour$ of $k^2-1$ disjoint unit intervals such that $\set{\interval} < \IntervalsFour < \set{ \intervalTwo }$ and $\fClique{\Intervals_j \cup \IntervalsFour} \leq k$. 

\end{itemize}
Thus, to prove the observation, it is sufficient to show the following claim.
\begin{claim}\label{clm:inBetween}
There is a set $\IntervalsFour'$ of $k^2-1$ disjoint unit intervals such that
\begin{itemize}
\item $\overline{\interval} < \IntervalsFour' < \overline{\intervalTwo}$ for $\overline{\interval},\overline{\intervalTwo} \in \Intervals_j$ such that $\infix{\Intervals_j}{\interval_j}{\overline{\interval}}=k+1$ and $\infix{\Intervals_j}{\overline{\intervalTwo}}{\interval^{(last)}}=k+1$ 
\item $\fClique{\Intervals_j \cup \IntervalsFour'} \leq k$.
\end{itemize}
\end{claim}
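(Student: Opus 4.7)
The plan is to build $\IntervalsFour'$ by exploiting the passive blocks of $\PartB_{j-1}$ lying in the window $\Block^{j-1}_{s(j)}, \ldots, \Block^{j-1}_{s(j)+\gamma}$. Recall that a passive block is a singleton $\{P_i\}$ with $P_i \in \Spanset_{n'}\setminus\Spanset_{j-1}$, i.e.,\ a span where the load of $\Intervals_{j-1}$ is strictly below $k$. Hence for every passive span $P_i$ that is \emph{not} covered by $\interval_j$, inserting a unit interval whose span equals (a neighborhood of) $P_i$ preserves $\fClique{\Intervals_j} \leq k$. By Observation~\ref{obs:unitCli} at most $k$ of the considered passive spans lie inside $\interval_j$, so only a bounded number of them are disallowed. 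The remaining task is to pick $k^2-1$ such candidate intervals that are pairwise disjoint.

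First I would discard from the window the portions used up by stripping $k+1$ intervals from each end to obtain $\overline{\interval}$ and $\overline{\intervalTwo}$. Since by Observation~\ref{obs:unitCli} any unit stretch of the real line meets at most $k$ spans of $\Spanset_{n'}$, the stripped parts are contained in at most $k+2$ blocks at each end, so all but $2(k+2)$ of the original $\gamma+1$ blocks of the window are entirely contained in $\infix{\Intervals_j}{\overline{\interval}}{\overline{\intervalTwo}}$; for $\gamma = 4k^3+2k(k+3)-1$ this leaves at least $4k^3$ fully interior blocks.

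The main combinatorial step is to show that this interior range contains enough \emph{passive} blocks. Since passive and active blocks alternate (passive blocks are singletons sitting between maximal runs of active spans), and since an active block containing $t$ active spans covers a real-line stretch of length at least $(t-1)/k$ by Observation~\ref{obs:unitCli}, while each active span is the span of a $k$-clique, the total number of intervals intersecting an active block is $\fO{kt}$. A global count against the total length of the window (which is bounded since each unit interval is charged to at most $k$ spans) then forces the number of active spans in the interior range to be $\fO{k^2}$, so at least $\fOmega{k^3}$ of the $4k^3$ interior blocks are passive. The residual constants are where the exponent $3$ of $k$ in $\gamma$ comes from.

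Finally, to realize the candidate unit intervals as a \emph{pairwise disjoint} family, I would group passive spans in the interior range into maximal runs of consecutive spans of $\Spanset_{n'}$; each such run has diameter less than $1$ (spans of $\Spanset_{n'}$ are within distance $<1$ of their neighbors by the preprocessing before Observation~\ref{obs:unitCli}), so picking one unit interval per group yields a disjoint family. Since each group has size at most $k$ (again by Observation~\ref{obs:unitCli}) and the total number of passive singletons is $\fOmega{k^3}$, the number of groups is $\fOmega{k^2}$, and selecting $k^2-1$ of them produces the required $\IntervalsFour'$. The main obstacle in this plan is the bookkeeping in the third paragraph: cleanly bounding the number of active spans in the interior range so that the constants line up with the stated value of $\gamma$.
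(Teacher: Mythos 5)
There is a genuine gap, and it is the central mechanism of your construction. You propose to realize $\IntervalsFour'$ by inserting fresh unit intervals ``at'' passive spans, justified by the observation that the load of $\Intervals_{j-1}$ at a passive span is below $k$. That justification is insufficient: an inserted interval has length $1$, and by the preprocessing the spans of $\Spanset_{n'}$ are packed at pairwise distance less than $1$, so any unit interval you place will typically also intersect several \emph{other} spans, some of which may be active. Intersecting the span of an existing $k$-clique immediately creates a $(k{+}1)$-clique, so the local condition ``load $<k$ at $P_i$'' does not imply $\fClique{\Intervals_j\cup\IntervalsFour'}\leq k$. The paper's proof sidesteps this entirely by taking $\IntervalsFour'$ to consist only of \emph{future} intervals, i.e.\ elements of $\Intervals_{n'}\setminus\Intervals_j$: such intervals are automatically safe because $\Intervals_j\cup\IntervalsFour'\subseteq\Intervals_{n'}$ and $\fClique{\Intervals_{n'}}\leq k$, and they are guaranteed to exist near every passive span because a passive span is by definition the span of a $k$-clique of $\Intervals_{n'}$ that is not yet complete in $\Intervals_j$. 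This use of future intervals is the missing idea; without it (or an explicit argument that your inserted intervals avoid every active span and every near-$k$-loaded point along their entire length) the second bullet of the claim does not follow.

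Two of your quantitative steps are also off. First, Observation~\ref{obs:unitCli} bounds the number of spans met by a \emph{unit} stretch of the line; it does not bound the size or diameter of a maximal run of consecutive passive spans, so ``each group has size at most $k$'' and ``each run has diameter less than $1$'' are unjustified (a run of passive singletons can be long and geometrically wide). Second, the proposed global count forcing only $\fO{k^2}$ active spans in the interior window is neither needed nor true in general. The paper's counting is much simpler: active blocks are maximal runs of active spans, so no two consecutive blocks are both active, hence at least half of the $\mu=\gamma+1-2k(k+1)$ interior blocks are passive; then from each maximal group of $d_i$ consecutive passive spans one greedily extracts $\lceil d_i/(2k)\rceil$ pairwise disjoint future intervals, and summing gives $\lfloor\mu/2\rfloor/(2k)-2 = k^2-1$ after discarding the two extremal intervals. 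Disjointness across groups also needs the future-interval property (a future interval cannot cross the active span separating two groups), which your plan cannot invoke.
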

\begin{proof}
By Observation~\ref{obs:unitCli} it holds that $\infix{\Intervals_j}{\overline{\intervalTwo}}{\interval^{(last)}}$ intersects a suffix of $\Block^{j-1}_{s(j)} \cup \Block^{j-1}_{s(j)+1} \cup \ldots \cup \Block^{j-1}_{s(j)+\gamma}$ containing at most $(k+1)k$ spans (this is beacause each of $(k+1)$ intervals of $\infix{\Intervals_j}{\overline{\intervalTwo}}{\interval^{(last)}}$ intersects at most $k$ spans). Similarily, $\infix{\Intervals_j}{\interval_j}{\overline{\interval}}$ intersects a prefix of $\Block^{j-1}_{s(j)} \cup \Block^{j-1}_{s(j)+1} \cup \ldots \cup \Block^{j-1}_{s(j)+\gamma}$ consisting of at most $(k+1)k$ spans. Let $\lambda=s(j)+(k+1)k$ and $\mu=\gamma+1-2k(k+1)=4k^3+4k$. Let $\Spanset=\Block^{j-1}_{\lambda} \cup \Block^{j-1}_{\lambda+1} \cup\ldots \cup \Block^{j-1}_{\lambda+\mu-1}$ be the union of $\mu$ consecutive blocks starting in $\Block^{j-1}_{\lambda}$. Then both $\infix{\Intervals_j}{\overline{\intervalTwo}}{\interval^{(last)}}$ and $\infix{\Intervals_j}{\interval_j}{\overline{\interval}}$ do not intersect any span in $\Spanset$. Let $\Spanel^{(first)}$ and $\Spanel^{(last)}$ be the first and the last span in $\Spanset$ according to $<$ order. It is sufficient to show that there is a set $\IntervalsFour'$ of $k^2-1$ disjoint unit intervals such that $\set{\Spanel^{(first)}} \linearOrder \IntervalsFour' < \set{\Spanel^{(last)}}$ and $\fClique{\Intervals_j \cup \IntervalsFour'} \leq k$.  Since there are no two consecutive active blocks among $\Block^{j-1}_{\lambda}, \Block^{j-1}_{\lambda+1}, \ldots \Block^{j-1}_{\lambda+\mu-1}$, the total number of passive blocks in $\Spanset$ is at least $\floor{\mu/2}$. We now group together consecutive passive blocks. In other words on all passive blocks we define the equivalence relation of being consecutive, and the groups are the equivalence classes. Let us call these groups $G_1, \ldots G_l$, where $l$ is the number of groups. Let also $d_i=|G_i|$ for $i \in \num{l}$ be the number of passive blocks (elements) in the $i$'th group. 

First observe that for each group $G_i$ there exists a set $\IntervalsFour'_i$, such that
\begin{itemize}
 \item $\IntervalsFour'_i \subseteq \Intervals_{n'} \setminus \Intervals_j$ (intuitively, $\IntervalsFour'_i$ consists only of intervals presented in the future)
  \item every interval of $\IntervalsFour'_i$ intersects a span in $G_i$
 \item intervals of $\IntervalsFour'_i$ are pairwise disjoint
 \item $|\IntervalsFour'| \geq \ceil{d_i/(2k)}$
\end{itemize}
Such set $\IntervalsFour_i'$ can be taken greedily. Since $G_i$ consist only of passive spans, there must be a future interval $\interval_{j'}$, $j' > j$, intersecting the first span. We take $\interval_{j'}$ to $\IntervalsFour'$. The future intervals intersecting $\interval_{j'}$, based on Observarion~\ref{obs:unitCli}, intersect at most $2k$ first spans of $G_i$ in $<$ order. We remove the first $2k$ spans from $G_i$. We continue the above procedure until there are spans left in $G_i$. As a result, we have at least $\ceil{d_i/(2k)}$ intervals in $\IntervalsFour'_i$.   

Observe that sets $\IntervalsFour'_i$ assigned to different groups do not intersect, as there is an active block between each two groups. So the union $\bigcup_{i=1}^{l} \IntervalsFour'_i$ is an independent set. The first and the last of interval of $\bigcup_{i=1}^{l} \IntervalsFour'$ (in $<$ order) might, however, not fit between $\Spanel^{(first)}$ and $\Spanel^{(last)}$ as claimed. Thus we declare the set $\IntervalsFour'$ to be $\bigcup_{i=1}^{l} \IntervalsFour'$ without the first and last interval. Since all intervals of $\IntervalsFour'$ are future intervals, we get that $\fClique{\Intervals_j \cup \IntervalsFour'} \leq k$. 
It remains to bound the size of $\IntervalsFour'$ from below: $|\IntervalsFour'| \geq \sum_{i=1}^{l}\ceil{d_i/(2k)}-2  \geq \sum_{i=1}^{l} d_i/(2k) -2 \geq \floor{\mu/2}/(2k)-2=k^2-1$ disjoint unit intervals, and hence the claim holds.  
\end{proof}\end{proof}
 \section{Maintaining the invariants}
\label{app:invariants}
In this section we describe in detail how Invariants 1-5 from Section~\ref{sec:mainres} are maintained for dynamically changing partition $\PartB_j$.  
We assume that the invariants are satisfied in step $j-1$ and time $\ti(j-1)$. We have already defined (in Section~\ref{sec:mainres}) the first jump corresponding to step $j$. To account for insertion of $\interval_j$ we merged two consecutive ranks assigned to $\Block^{j-1}_{s(j)}$. We first show that the invariants hold after this merge. It is clear that the blocks are still assigned consecutive ranks, so Invariants 1-3 hold. Observe that the only block whose number of assigned ranks changed is $\Block^{j-1}_{s(j)}$. The number of assigned ranks decreased for $\Block^{j-1}_{(s(j))}$, but $\intervals{j}{\Block^{j-1}_{s(j)}}=\intervals{j-1}{\Block^{j-1}_{s(j)}}+1$, hence Invariant 5 remains valid (for now we temporarily assume that all passive blocks remain passive; if they turn active in turn $j$ we activate them with further jumps later on).  
As for the Invariant~4, it clearly remains valid for the blocks $\Block^{j-1}_i \in \PartB_{j-1}$ such that $\interval_j$ is disjoint with $\Block^{j-1}_i$. Let us consider a block $\Block^{j-1}_i \in \PartB_{j-1}$ intersecting $\interval_j$.
Observe, that $\interval_j$ only intersects the spans of $\Spanset_{n'}$ that are passive in step $j-1$. Thus, the block $\Block^{j-1}_i=\set{\Spanel}$ is a passive singleton block.
But then $\size{ \intervals{j}{\Block^{j-1}_{i}} }=\size{ \intervals{j}{\set{\Spanel}} } \leq k$, which is the lower bound for all ranks in $\RRanks{\ti(j-1)+1}$, so Invariant 4 holds for $\Block^{j-1}_i$. 

We move on to describing the process of merging the appriopriate blocks of $\PartB_{j-1}$ that become active in step $j$. We simulate this process in the \Frog game as well.
Let $\widehat{\Spanset}_j=\Spanset_j \setminus \Spanset_{j-1}$ be the set of spans of the newly added $k$-cliques. In other words, $\widehat{\Spanset}_j$ is the set of spans of those $k$-cliques in $\Intervals_j$ that contain the new interval $\interval_j$. 
The spans of $\widehat{\Spanset}_j$ correspond to passive singleton blocks in $\PartB_{j-1}$, which become active in step $j$, and thus have to be merged according to the definition of $\PartB_j$.
 Let $\Block^{j-1}_{pre}$ be the predecessor of $\widehat{\Spanset}_j$ in $\PartB_{j-1}$ in $<$ order and $\Block^{j-1}_{succ}$ be the successor of $\widehat{\Spanset}_j$.
The partition $\PartB_j$ is then defined by one of the following cases:
 \setcounter{equation}{0}
 \begin{subequations}
\begin{align}[left ={\PartB_j = \empheqlbrace}] 
\{ 
 \Block^{j-1}_1, \ldots, \Block^{j-1}_{pre-1},\Block^{j-1}_{pre},& \widehat{\Spanset}_j, \Block^{j-1}_{succ},\Block^{j-1}_{succ+1}, \ldots, \Block^{j-1}_{b_{j-1}}
  \} \text{ or } \label{case1} \\
 \{ \Block^{j-1}_1, \ldots, \Block^{j-1}_{pre-1},\Block^{j-1}_{pre} \cup &\widehat{\Spanset}_j, \Block^{j-1}_{succ},\Block^{j-1}_{succ+1}, \ldots, \Block^{j-1}_{b_{j-1}} \} \text{ or } \label{case2}\\
 \{ \Block^{j-1}_1, \ldots, \Block^{j-1}_{pre-1},\Block^{j-1}_{pre}, &\widehat{\Spanset}_j \cup \Block^{j-1}_{succ},\Block^{j-1}_{succ+1}, \ldots, \Block^{j-1}_{b_{j-1}} \} \text{ or  } \label{case3}\\
 \{ \Block^{j-1}_1, \ldots, \Block^{j-1}_{pre-1},\Block^{j-1}_{pre} \cup &\widehat{\Spanset}_j \cup \Block^{j-1}_{succ},\Block^{j-1}_{succ+1}, \ldots, \Block^{j-1}_{b_{j-1}} \}, \label{case4}
\end{align}
 \end{subequations}
 where the above cases are determined by the following conditions:
 \begin{itemize}
  \item Case~\ref{case1} holds iff both blocks $\Block^{j-1}_{pre}$ and $\Block^{j-1}_{succ}$ are passive
  \item Case~\ref{case2} holds iff block $\Block'^{j-1}_{pre}$ is active and $\Block'^{j-1}_{succ}$ is passive
  \item Case~\ref{case3} holds iff block $\Block'^{j-1}_{pre}$ is passive and $\Block'^{j-1}_{succ}$ is active
  \item Case~\ref{case4} holds iff both blocks $\Block'^{j-1}_{pre}$ and $\Block'^{j-1}_{succ}$ are active
 \end{itemize}

Our goal now is to define $\ti(j)$ and $\RRanks{\ti(j)}$ as to reflect the change in the block partition $\PartB_j$, depending on the case that occurs (Cases~(\ref{case1})-(\ref{case4})). First of all, all passive blocks in $\PartB_{j-1}$ containing spans of $\widehat{\Spanset}_j$ become active, and thus they are merged into one single block. Hence, we need to merge all ranks assigned to these blocks. In the jumping sequence for \Frog game we add at most $k\size{\widehat{\Spanset}_j}+\size{\widehat{\Spanset}_j} -1$ jumps to do so. We then use some jumps to merge with the rank for $\Block^{j-1}_{pre}$ and the rank for $\Block^{j-1}_{succ}$ if necessary. Let us describe this more formally. Let $\rrank{\ti(j-1)}{q}$ be the last rank assigned to $\Block^{j-1}_{pre}$ and $\rrank{\ti(j-1)}{q'}$ be the first rank assigned to $\Block^{j-1}_{succ}$. Let us consider the rank sequence in time $\ti(j-1)+1$, that is $$\RRanks{\ti(j-1)+1}=\set{ 
 \rrank{\ti(j-1)}{1},
 \ldots,\rrank{\ti(j-1)}{q},
 \rrank{\ti(j-1)}{q+1},
\ldots
 \rrank{\ti(j-1)}{q'-1},
 \rrank{\ti(j-1)}{q'},
 \ldots,
 \rrank{\ti(j-1)}{b_{j-1}} 
 },$$ where ranks $\rrank{\ti(j-1)}{q+1},
 \rrank{\ti(j-1)}{q+2},
 \ldots
 \rrank{\ti(j-1)}{q'-1}$ are assigned to passive singleton blocks corresponding to spans of $\widehat{\Spanset}_j$.
 We now merge the ranks assigned to blocks corresponding to $\widehat{\Spanset}_j$, so that we get one rank assigned to the temporary active block $\widehat{\Spanset}_j$. We modify the $f$ function accordingly.
 We add $q'-q-2$ jumps to the jumping sequence, each jump equal to position $q+1$, thus obtaining 
$$\RRanks{\ti(j-1)+q'-q-1}=\set{ 
 \rrank{\ti(j-1)}{1},
 \ldots,\rrank{\ti(j-1)}{q},
 \sum_{i=q+1}^{q'-1} \rrank{\ti(j-1)}{i},
 \rrank{\ti(j-1)}{q'},
 \ldots,
 \rrank{\ti(j-1)}{b'_{j-1}-q+1} 
 }.$$\\
 We observe the following.
\begin{equation}\label{eq:invSize}
 \size{\intervals{j}{\widehat{\Spanset}_j}}=1+\size{\intervals{j-1}{\widehat{\Spanset}_j}} \leq 1+\sum_{\Spanel \in \widehat{\Spanset}_j} \size{\intervals{j-1}{\Spanel}} \leq \sum_{i=q+1}^{q'-1} \rrank{\ti(j-1)}{i}.
\end{equation} The last inequality holds because each block corresponding to $\Spanel \in \widehat{\Spanset}_j$ is assigned at least two ranks, both bounding $\size{\intervals{j-1}{\Spanel}}$ from above, and both greater or equal to $k$.

Consired Case~(\ref{case1}). In this case we set $\ti(j)=\ti(j-1)+q'-q-1$ and we are done. The invariants hold because of the following. 
We modified the $f$ function by merging ranks $\rrank{\ti(j-1)}{i}$ for $i \in [ q+1,\ldots q'-1 ]$ and assigning the sum to $pre+1$, which is the index of the new active block $\widehat{\Spanset}_j$. Thus, the new active block is assigned precisely one rank and by Equation~\ref{eq:invSize} invariant 4 is satisfied for $\widehat{\Spanset}_j$.

Consider Case~(\ref{case2}). In this case the active block $\Block^{j-1}_{pre}$ is assigned precisely one rank $\rrank{\ti(j-1)}{q}$, i.e., $f(q)=pre$. Also the new active block $\widehat{\Spanset}_j$ is assigned precisely one rank, i.e., $\sum_{i=q+1}^{q'-1} \rrank{\ti(j-1)}{i}$.  We apply one more jump to merge ranks $\rrank{\ti(j-1)}{q}$ and $\sum_{i=q+1}^{q'-1} \rrank{\ti(j-1)}{i}$ in $\RRanks{\ti(j-1)+q'-q-1}$, thus obtaining 
$$\RRanks{\ti(j-1)+q'-q}=\set{ 
 \rrank{\ti(j-1)}{1},
 \ldots,
 \rrank{\ti(j-1)}{q-1},
 \sum_{i=q}^{q'-1} \rrank{\ti(j-1)}{i},
 \rrank{\ti(j-1)}{q'},
 \ldots,
 \rrank{\ti(j-1)}{b_{j-1}} 
 }.$$
We modify the assignment function so that the (precisely one) rank $\sum_{i=q}^{q'-1} \rrank{\ti(j-1)}{i}$ is assigned to $\Block^{j-1}_{pre} \cup \widehat{\Spanset}_j$. 
We set $\ti(j)=\ti(j-1)+q'-q$ and we are done. Since $\intervals{j-1}{\Block^{j-1}_{pre}}=\intervals{j}{\Block^{j-1}_{pre}}$, we get: $$\intervals{j}{\Block^{j-1}_{pre} \cup \widehat{\Spanset}_j} \leq \intervals{j}{\Block^{j-1}_{pre}} + \intervals{j}{\widehat{\Spanset}_j} \leq \sum_{i=q}^{q'-1} \rrank{\ti(j-1)}{i}.$$ Thus the invariants hold for the newly merged block $\Block^{j-1}_{pre} \cup \widehat{\Spanset}_j$ as well. Case~(\ref{case3}) is analogous to Case~(\ref{case2}). 
We are left with Case~(\ref{case4}). We then declare $\ti(j)=\ti(j-1)+q'-q+1$ and we execute one more jump to obtain the rank sequence $$\RRanks{\ti(j-1)+q'-q+1}=\set{ 
 \rrank{\ti(j-1)}{1},
 \ldots,\rrank{\ti(j-1)}{q-1},
 \sum_{i=q}^{q'} \rrank{\ti(j-1)}{i},
 \rrank{\ti(j-1)}{q'+1},
 \ldots,
 \rrank{\ti(j-1)}{b_{j-1}} 
 }.$$
The rank $\sum_{i=q}^{q'} \rrank{\ti(j-1)}{i}$ is assigned to the new active block $\Block^{j-1}_{pre} \cup \widehat{\Spanset}_j \cup \Block^{j-1}_{succ}$. 
The argument that the invariants hold is analogous to the previous case.
 
\bibliographystyle{plainurl}

\begin{thebibliography}{10}

\bibitem{BarbaCKLRRV17}
Luis Barba, Jean Cardinal, Matias Korman, Stefan Langerman, Andr{\'{e}} van
  Renssen, Marcel Roeloffzen, and Sander Verdonschot.
\newblock Dynamic graph coloring.
\newblock In {\em Algorithms and data structures}, volume 10389 of {\em Lecture
  Notes in Comput. Sci.}, pages 97--108. Springer, Cham, 2017.
\newblock \href {https://doi.org/10.1007/978-3-319-62127-2_9}
  {\path{doi:10.1007/978-3-319-62127-2_9}}.

\bibitem{BCh09}
Naveen Belkale and L.~Sunil Chandran.
\newblock Hadwiger's conjecture for proper circular arc graphs.
\newblock {\em European J. Combin.}, 30(4):946--956, 2009.
\newblock \url{https://doi.org/10.1016/j.ejc.2008.07.024}.
\newblock \href {https://doi.org/10.1016/j.ejc.2008.07.024}
  {\path{doi:10.1016/j.ejc.2008.07.024}}.

\bibitem{10.1145/100216.100268}
S.~Ben-David, A.~Borodin, R.~Karp, G.~Tardos, and A.~Wigderson.
\newblock On the power of randomization in online algorithms.
\newblock In {\em Proceedings of the Twenty-Second Annual ACM Symposium on
  Theory of Computing}, STOC '90, page 379–386, New York, NY, USA, 1990.
  Association for Computing Machinery.
\newblock \href {https://doi.org/10.1145/100216.100268}
  {\path{doi:10.1145/100216.100268}}.

\bibitem{BernsteinHR18}
Aaron Bernstein, Jacob Holm, and Eva Rotenberg.
\newblock Online bipartite matching with amortized {$O(\log^2 n)$}
  replacements.
\newblock In {\em Proceedings of the {T}wenty-{N}inth {A}nnual {ACM}-{SIAM}
  {S}ymposium on {D}iscrete {A}lgorithms}, pages 947--959. SIAM, Philadelphia,
  PA, 2018.
\newblock \href {https://doi.org/10.1137/1.9781611975031.61}
  {\path{doi:10.1137/1.9781611975031.61}}.

\bibitem{BosekFKKMM12}
Bart{\l}omiej Bosek, Stefan Felsner, Kamil Kloch, Tomasz Krawczyk, Grzegorz
  Matecki, and Piotr Micek.
\newblock On-line chain partitions of orders: a survey.
\newblock {\em Order}, 29(1):49--73, 2012.
\newblock \href {https://doi.org/10.1007/s11083-011-9197-1}
  {\path{doi:10.1007/s11083-011-9197-1}}.

\bibitem{BosekLSZ14}
Bart{\l}omiej Bosek, Dariusz Leniowski, Piotr Sankowski, and Anna Zych.
\newblock Online bipartite matching in offline time.
\newblock In {\em 55th {A}nnual {IEEE} {S}ymposium on {F}oundations of
  {C}omputer {S}cience---{FOCS} 2014}, pages 384--393. IEEE Computer Soc., Los
  Alamitos, CA, 2014.
\newblock \href {https://doi.org/10.1109/FOCS.2014.48}
  {\path{doi:10.1109/FOCS.2014.48}}.

\bibitem{BosekLSZ18a}
Bart{\l}omiej Bosek, Dariusz Leniowski, Piotr Sankowski, and Anna
  Zych-Pawlewicz.
\newblock Shortest augmenting paths for online matchings on trees.
\newblock {\em Theory Comput. Syst.}, 62(2):337--348, 2018.
\newblock \href {https://doi.org/10.1007/s00224-017-9838-x}
  {\path{doi:10.1007/s00224-017-9838-x}}.

\bibitem{BosekLSZ18b}
Bart{\l}omiej Bosek, Dariusz Leniowski, Piotr Sankowski, and Anna
  Zych-Pawlewicz.
\newblock A tight bound for shortest augmenting paths on trees.
\newblock In {\em L{ATIN} 2018: {T}heoretical informatics}, volume 10807 of
  {\em Lecture Notes in Comput. Sci.}, pages 201--216. Springer, Cham, 2018.
\newblock \href {https://doi.org/10.1007/978-3-319-77404-6_1}
  {\path{doi:10.1007/978-3-319-77404-6_1}}.

\bibitem{bosek_et_al:SWAT}
Bartłomiej Bosek, Yann Disser, Andreas~Emil Feldmann, Jakub Pawlewicz, and
  Anna Zych-Pawlewicz.
\newblock {Recoloring Interval Graphs with Limited Recourse Budget}.
\newblock In Susanne Albers, editor, {\em 17th Scandinavian Symposium and
  Workshops on Algorithm Theory (SWAT 2020)}, volume 162 of {\em Leibniz
  International Proceedings in Informatics (LIPIcs)}, pages 17:1--17:23,
  Dagstuhl, Germany, 2020. Schloss Dagstuhl--Leibniz-Zentrum f{\"u}r
  Informatik.
\newblock URL: \url{https://drops.dagstuhl.de/opus/volltexte/2020/12264}, \href
  {https://doi.org/10.4230/LIPIcs.SWAT.2020.17}
  {\path{doi:10.4230/LIPIcs.SWAT.2020.17}}.

\bibitem{DBLP:conf/approx/Chybowska-Sokol20}
Joanna Chybowska{-}Sok{\'{o}}l, Grzegorz Gutowski, Konstanty
  Junosza{-}Szaniawski, Patryk Mikos, and Adam Polak.
\newblock Online coloring of short intervals.
\newblock In Jaroslaw Byrka and Raghu Meka, editors, {\em Approximation,
  Randomization, and Combinatorial Optimization. Algorithms and Techniques,
  {APPROX/RANDOM} 2020, August 17-19, 2020, Virtual Conference}, volume 176 of
  {\em LIPIcs}, pages 52:1--52:18. Schloss Dagstuhl - Leibniz-Zentrum f{\"{u}}r
  Informatik, 2020.
\newblock \href {https://doi.org/10.4230/LIPIcs.APPROX/RANDOM.2020.52}
  {\path{doi:10.4230/LIPIcs.APPROX/RANDOM.2020.52}}.

\bibitem{Epstein}
Leah Epstein and Meital Levy.
\newblock Online interval coloring and variants.
\newblock In {\em Proceedings of the 32nd International Conference on Automata,
  Languages and Programming}, ICALP'05, page 602–613, Berlin, Heidelberg,
  2005. Springer-Verlag.
\newblock \href {https://doi.org/10.1007/11523468_49}
  {\path{doi:10.1007/11523468_49}}.

\bibitem{GOLUMBIC2004171}
Martin~Charles Golumbic.
\newblock Chapter 8 - interval graphs.
\newblock In Martin~Charles Golumbic, editor, {\em Algorithmic Graph Theory and
  Perfect Graphs}, volume~57 of {\em Annals of Discrete Mathematics}, pages
  171--202. Elsevier, 2004.
\newblock URL:
  \url{https://www.sciencedirect.com/science/article/pii/S0167506004800566},
  \href {https://doi.org/https://doi.org/10.1016/S0167-5060(04)80056-6}
  {\path{doi:https://doi.org/10.1016/S0167-5060(04)80056-6}}.

\bibitem{10.1007/978-3-319-13075-0_40}
Grzegorz Gutowski, Jakub Kozik, Piotr Micek, and Xuding Zhu.
\newblock Lower bounds for on-line graph colorings.
\newblock In Hee-Kap Ahn and Chan-Su Shin, editors, {\em Algorithms and
  Computation}, pages 507--515, Cham, 2014. Springer International Publishing.

\bibitem{Halldorsson94lowerbounds}
Magnús~M. Halldórsson and Mario Szegedy.
\newblock Lower bounds for on-line graph coloring, 1994.

\bibitem{ImaseW91}
Makoto Imase and Bernard~M. Waxman.
\newblock Dynamic {S}teiner tree problem.
\newblock {\em SIAM J. Discrete Math.}, 4(3):369--384, 1991.
\newblock \href {https://doi.org/10.1137/0404033} {\path{doi:10.1137/0404033}}.

\bibitem{DBLP:conf/icalp/JiangP020}
Zhihao Jiang, Debmalya Panigrahi, and Kevin Sun.
\newblock Online algorithms for weighted paging with predictions.
\newblock In Artur Czumaj, Anuj Dawar, and Emanuela Merelli, editors, {\em 47th
  International Colloquium on Automata, Languages, and Programming, {ICALP}
  2020, July 8-11, 2020, Saarbr{\"{u}}cken, Germany (Virtual Conference)},
  volume 168 of {\em LIPIcs}, pages 69:1--69:18. Schloss Dagstuhl -
  Leibniz-Zentrum f{\"{u}}r Informatik, 2020.
\newblock \href {https://doi.org/10.4230/LIPIcs.ICALP.2020.69}
  {\path{doi:10.4230/LIPIcs.ICALP.2020.69}}.

\bibitem{KVV}
R.~M. Karp, U.~V. Vazirani, and V.~V. Vazirani.
\newblock An optimal algorithm for on-line bipartite matching.
\newblock In {\em Proceedings of the Twenty-Second Annual ACM Symposium on
  Theory of Computing}, STOC '90, page 352–358, New York, NY, USA, 1990.
  Association for Computing Machinery.
\newblock \href {https://doi.org/10.1145/100216.100262}
  {\path{doi:10.1145/100216.100262}}.

\bibitem{10.5555/645569.659725}
Richard~M. Karp.
\newblock On-line algorithms versus off-line algorithms: How much is it worth
  to know the future?
\newblock In {\em Proceedings of the IFIP 12th World Computer Congress on
  Algorithms, Software, Architecture - Information Processing '92, Volume 1 -
  Volume I}, page 416–429, NLD, 1992. North-Holland Publishing Co.

\bibitem{KhotP06}
Subhash Khot and Ashok~Kumar Ponnuswami.
\newblock Better inapproximability results for {M}ax{C}lique, chromatic number
  and {M}in-3{L}in-{D}eletion.
\newblock In {\em Automata, languages and programming. {P}art {I}}, volume 4051
  of {\em Lecture Notes in Comput. Sci.}, pages 226--237. Springer, Berlin,
  2006.
\newblock \href {https://doi.org/10.1007/11786986_21}
  {\path{doi:10.1007/11786986_21}}.

\bibitem{KiersteadT81}
Henry~A. Kierstead and William~T. Trotter, Jr.
\newblock An extremal problem in recursive combinatorics.
\newblock {\em Congr. Numer.}, 33:143--153, 1981.

\bibitem{LackiOPSZ15}
Jakub {\L}{\k{a}}cki, Jakub O{\'{c}}wieja, Marcin Pilipczuk, Piotr Sankowski,
  and Anna Zych.
\newblock The power of dynamic distance oracles: efficient dynamic algorithms
  for the {S}teiner tree.
\newblock In {\em S{TOC}'15---{P}roceedings of the 2015 {ACM} {S}ymposium on
  {T}heory of {C}omputing}, pages 11--20. ACM, New York, 2015.

\bibitem{LOVASZ1989319}
László Lovász, Michael Saks, and W.T. Trotter.
\newblock An on-line graph coloring algorithm with sublinear performance ratio.
\newblock {\em Discrete Mathematics}, 75(1):319--325, 1989.
\newblock URL:
  \url{https://www.sciencedirect.com/science/article/pii/0012365X89900964},
  \href {https://doi.org/https://doi.org/10.1016/0012-365X(89)90096-4}
  {\path{doi:https://doi.org/10.1016/0012-365X(89)90096-4}}.

\bibitem{MegowSVW12}
Nicole Megow, Martin Skutella, Jos{\'{e}} Verschae, and Andreas Wiese.
\newblock The power of recourse for online {MST} and {TSP}.
\newblock In {\em Automata, languages, and programming. {P}art {I}}, volume
  7391 of {\em Lecture Notes in Comput. Sci.}, pages 689--700. Springer,
  Heidelberg, 2012.
\newblock \href {https://doi.org/10.1007/978-3-642-31594-7_58}
  {\path{doi:10.1007/978-3-642-31594-7_58}}.

\bibitem{Olariu91}
Stephan Olariu.
\newblock An optimal greedy heuristic to color interval graphs.
\newblock {\em Inform. Process. Lett.}, 37(1):21--25, 1991.
\newblock \href {https://doi.org/10.1016/0020-0190(91)90245-D}
  {\path{doi:10.1016/0020-0190(91)90245-D}}.

\bibitem{Rodgers}
Hartley Rodgers.
\newblock {\em Theory of Recursive Functions and Effective Computability}.
\newblock McGraw Hill, 1967.

\bibitem{DBLP:conf/soda/Rohatgi20}
Dhruv Rohatgi.
\newblock Near-optimal bounds for online caching with machine learned advice.
\newblock In Shuchi Chawla, editor, {\em Proceedings of the 2020 {ACM-SIAM}
  Symposium on Discrete Algorithms, {SODA} 2020, Salt Lake City, UT, USA,
  January 5-8, 2020}, pages 1834--1845. {SIAM}, 2020.
\newblock \href {https://doi.org/10.1137/1.9781611975994.112}
  {\path{doi:10.1137/1.9781611975994.112}}.

\bibitem{SolomonW18}
Shay Solomon and Nicole Wein.
\newblock Improved dynamic graph coloring.
\newblock In {\em 26th {E}uropean {S}ymposium on {A}lgorithms}, volume 112 of
  {\em LIPIcs. Leibniz Int. Proc. Inform.}, pages Art. No. 72, 16. Schloss
  Dagstuhl. Leibniz-Zent. Inform., Wadern, 2018.

\bibitem{Tuc75}
Alan Tucker.
\newblock Coloring a family of circular arcs.
\newblock {\em SIAM Journal on Applied Mathematics}, 29(3):493--502, 1975.
\newblock \url{https://doi.org/10.1137/0129040}.
\newblock \href {https://doi.org/10.1137/0129040} {\path{doi:10.1137/0129040}}.

\bibitem{Val03}
Mario Valencia-Pabon.
\newblock Revisiting {T}ucker's algorithm to color circular arc graphs.
\newblock {\em SIAM Journal on Computing}, 32(4):1067--1072, 2003.
\newblock \url{https://doi.org/10.1137/S0097539700382157}.
\newblock \href {https://doi.org/10.1137/S0097539700382157}
  {\path{doi:10.1137/S0097539700382157}}.

\bibitem{Zuckerman06}
David Zuckerman.
\newblock Linear degree extractors and the inapproximability of max clique and
  chromatic number.
\newblock In {\em S{TOC}'06: {P}roceedings of the 38th {A}nnual {ACM}
  {S}ymposium on {T}heory of {C}omputing}, pages 681--690. ACM, New York, 2006.
\newblock \href {https://doi.org/10.1145/1132516.1132612}
  {\path{doi:10.1145/1132516.1132612}}.

\end{thebibliography}

\end{document}